\documentclass{article}

\usepackage{hyperref}
\usepackage{amsmath}
\usepackage{amsthm}
\usepackage{amssymb}
\usepackage{xcolor}
\usepackage{booktabs}

\usepackage[utf8]{inputenc}

\usepackage{modello} 
\usepackage{tikz}
\usetikzlibrary{calc,
                positioning}

\begin{document}
\title{Constructions of Linear Codes from Vectorial Plateaued Functions  and Their Subfield Codes with Applications to Quantum CSS Codes}

\author{Virginio Fratianni and Sihem Mesnager}
\date{Université Paris 8, Laboratoire de Géométrie, Analyse et Applications, LAGA, Université Sorbonne Paris Nord, CNRS, UMR 7539, France.
\\ \emph{Email addresses}: virginio.fratianni@etud.univ-paris8.fr, smesnager@univ-paris8.fr}

\maketitle            



\section*{Abstract}
Linear codes over finite fields parameterized by functions have proven to be a powerful tool in coding theory, yielding optimal and few-weight codes with significant applications in secret sharing, authentication codes, and association schemes. In 2023, Xu et al. introduced a construction framework for 3-dimensional linear codes parameterized by two functions, which has demonstrated considerable success in generating infinite families of optimal linear codes. Motivated by this approach, we propose a construction that extends the framework to three functions, thereby enhancing the flexibility of the parameters. Additionally, we introduce a vectorial setting by allowing vector-valued functions, expanding the construction space and the set of achievable structural properties.

We analyze both scalar and vectorial frameworks, employing Bent and $s$-Plateaued functions, including Almost Bent, to define the code generators. By exploiting the properties of the Walsh transform, we determine the explicit parameters and weight distributions of these codes and their punctured versions. A key result of this study is that the constructed codes have few weights, and their duals are distance and dimensionally optimal with respect to both the Sphere Packing and Griesmer bounds.

Furthermore, we establish a theoretical connection between our vectorial approach and the classical first generic construction of linear codes, providing sufficient conditions for the resulting codes to be minimal and self-orthogonal. Finally, we investigate applications to quantum coding theory within the Calderbank–Shor–Steane framework.
\\

\textbf{Keywords.} 
Vectorial functions, plateaued functions, (almost) bent functions, subfield codes, secret sharing scheme.\\

\textbf{Mathematics Subject Classification.} 11T71, 14G50, 94A60 

\section{Introduction}
Coding theory examines the properties of codes and their suitability for specific applications, including data compression for cloud storage, wireless data transmission, and cryptography. It involves various scientific disciplines, such as mathematics, information theory, computer science, artificial intelligence, and linguistics; however, it is grounded in mathematical methods, particularly algebraic methods in the linear code case. For instance, we have many connections with number theory in the context of Gauss sums, cyclotomic field theory, Galois fields and tools from the theory of Fourier transform and the related exponential sums.
\\\\
Today, coding theory is of high importance because many infrastructures, such as wireless communications and cloud storage, rely on codes. To a great extent, the research on coding theory is based on mathematics, since nearly all the theoretical bases, constructions, and analyses of codes are based on mathematical methods.
\\\\
Subfield codes were initially introduced in \cite{ref6, ref7} as a method to derive optimal linear codes over 
\( \mathbb{F}_q \) from linear codes defined over \( \mathbb{F}_{q^m} \). Given any linear code \( C \) over 
\( \mathbb{F}_{q^m} \), its corresponding subfield code \( C^{(q)} \) can exhibit either favorable or unfavorable parameters. 
To ensure good subfield codes, researchers have carefully selected suitable codes over 
\( \mathbb{F}_{q^m} \) and developed well-structured codes over \( \mathbb{F}_q \).

Ding and Heng focused on elliptic quadric codes and Tits ovoid codes, analyzing the parameters and weight distributions 
of their subfield codes \cite{ref14}. Later, their research expanded to include the 
subfield codes of hyperoval codes, conic codes, maximal arcs, and certain MDS codes with dimension 2 
\cite{ref17, ref18, ref19, ref20}. These subfield codes often exhibit optimal or nearly optimal characteristics and are 
notable for their limited number of distinct weights.

Further investigations into subfield codes with advantageous parameters and few weights have been conducted in 
\cite{DingZhu, Liu, Qiao, ref31, ref33, ref34, ref35, Xie} and the references therein, contributing to a broader understanding of their properties 
and applications.
\\\\
In 2023, Xu et al. \cite{xu} presented a construction of $3$-dimensional linear codes involving two functions $f$ and $g$ and determined the parameters of the code, the punctured ones and the subfield ones. They also analyzed this family of codes for specific functions, such as the trace, the norm or bent ones.\\
Motivated by their work, we propose a construction method of $4$-dimensional linear codes involving $3$ functions.\\
Let $f$, $g$ and $h$ be three functions from $\mathbb{F}_{q^m}$ to $\mathbb{F}_q$, where $q$ is a power of a prime $p$. We define the subset $\mathcal{D}$ of $\mathbb{F}_{q^m}^3$ as
\begin{equation}\label{defining set}
    \mathcal{D}=\left\{(x, y,z) \in \mathbb{F}_{q^m}^3: f(x)+g(y)+h(z)=0\right\}
\end{equation}
and a $4 \times(\# \mathcal{D})$ matrix $G_{f, g,h}^*$ over $\mathbb{F}_{q^m}$ as 
\begin{equation}\label{punctured matrix}
    G_{f, g,h}^*=\left(\begin{array}{l} 1 \\ x \\ y \\ z \end{array}\right)_{(x, y,z) \in \mathcal{D}}.
\end{equation}
Let $\mathcal{C}_{f,g,h}$ be the $[\# \mathcal{D}+1,4]$ linear code over $\mathbb{F}_{q^m}$ generated by the following matrix:
\begin{equation}\label{matrix}
    G_{f, g,h}=\left(\begin{array}{cc} 0 & \\ 1 & G_{f, g,h}^* \\ 0 \\ 0& \end{array}\right).
\end{equation}
Let $\mathcal{C}_{f,g,h}^*$ be the punctured code obtained from $\mathcal{C}_{f,g,h}$ by puncturing on the first coordinate, hence $\mathcal{C}_{f,g,h}^*$ is the $[\# \mathcal{D},4]$ linear code over $\mathbb{F}_{q^m}$ with generator matrix $G_{f, g,h}^*$.\\\\
Our article focuses on deepening the relationship between coding theory and symmetric cryptography, in the context of codes built using cryptographic functions. This framework constitutes a very active research area nowadays, since the resulting codes often have good parameters and few weights. We refer the reader, for instance, to \cite{Fratianni, ChapterMesnager, MS2, MS1, Sinak} and the literature cited therein.
\\\\
The main contributions of this work can be summarized as follows:

\begin{itemize}
    \item \textbf{Code construction from $p$-ary $s$-plateaued functions:} 
    We analyze a specific instance of the proposed code construction involving the relative trace, the trace of the square, and a generic function $h$. By specializing $h$ to be a $s$-plateaued $p$-ary function, we explicitly determine the length, dimension, and weight distribution of the resulting subfield codes and their punctured versions. We prove that these are four-weight and five-weight linear codes.

    \item \textbf{Vectorial framework and $s$-plateaued functions:} 
    We extend the construction to the vectorial setting, considering codes generated by matrices defined via two functions $f$ and $g$. We derive a general formula for the weight distribution linking the code parameters to the Walsh transform of the composition $h = f^{-1} \circ g$. We specifically analyze the case where $h$ is an $s$-plateaued function (including the important class of Almost Bent functions where $s=1$), obtaining a family of few-weight codes.

    \item \textbf{Optimality of codes:} 
    A significant result of this study is that the constructed codes have few weights. In addition, the duals of the codes constructed from $s$-plateaued $p$-ary functions, with odd $s$, are distance and dimensionally optimal with respect to the Sphere Packing bound. In the vectorial framework, Almost Bent functions induce codes that are almost dimensionally optimal with respect to the Sphere Packing bound. In the remaining cases, the dual codes are almost dimensionally or distance optimal with respect to both the Sphere Packing and Griesmer bounds.
    
    \item \textbf{Link to generic constructions and minimality:} 
    We establish a theoretical connection between our vectorial construction and the classical first generic construction of linear codes. We show that our codes can be viewed as augmented versions of codes derived from the inverse of plateaued functions. Furthermore, we provide sufficient conditions for these codes to be minimal (satisfying the Ashikhmin-Barg bound) and self-orthogonal.

    \item \textbf{Applications to quantum codes:}
    We apply the proposed linear codes to fault-tolerant quantum computation, showing that codes derived from vectorial plateaued functions naturally fit into the Calderbank–Shor–Steane framework and support transversal phase gates. In the binary case, appropriate $s$-plateaued functions produce doubly-even codes, recovering CSS$_T$ codes with transversal $T$ gates.
\end{itemize}

The rest of the paper is organized as follows.
In Section~2, we recall preliminaries on linear codes, subfield codes, functions over finite fields, and character theory.
Section~3 is devoted to the determination of the parameters of the codes $\mathcal{C}_{f,g,h}$,
their punctured versions, and their subfield codes for generic functions
$f,g,h$ from $\mathbb{F}_{q^m}$ to $\mathbb{F}_q$.
In Section~4, we apply these results to specific families of functions, including bent ones.
Section~5 extends the construction to the vectorial setting, considering functions from
$\mathbb{F}_{q^m}$ to itself, with particular emphasis on the binary case. In Section~6, we investigate applications to quantum coding theory.
We show that linear codes arising from vectorial plateaued functions fit naturally into the
Calderbank--Shor--Steane framework and yield quantum codes admitting transversal phase gates.
In particular, in the binary case, suitably chosen $s$-plateaued functions give rise to
CSS$_T$ codes with transversal $T$-gates, recovering known constructions from a unified
Walsh spectral perspective.
Beyond characteristic two, and in particular in the ternary setting, the same approach
leads to CSS phase codes admitting transversal quadratic phase gates, producing new families
of quantum codes not derived from classical Reed--Muller constructions. Finally, Section~7 concludes the paper and outlines perspectives for future research.

\section{Preliminaries}

Throughout this paper, unless otherwise stated, codes constructed from functions have length a multiple of $N=q^m$, where $q$ is a power of a prime $p$, corresponding to evaluations
over the entire field $\mathbb{F}_{q^m}$.

In this section, we recall some general results on linear codes, subfield codes, character theory, and Boolean and vectorial functions that will be needed throughout the paper.

\subsection{Coding theory}
An $[n, k, d]$ linear code $\mathcal{C}$ over $\mathbb{F}_{q}$ is a linear subspace of $\mathbb{F}_{q}^n$ with dimension $k$ and minimum Hamming distance $d$.

The Hamming distance between two vectors $x = (x_0, \dots, x_{n-1})$ and $y = (y_0, \dots, y_{n-1})$ in $\mathbb{F}_{q}^n$, denoted by $d_H(x, y)$, is defined as the number of coordinate positions in which they differ. Formally:
$$
d_H(x, y) = \left| \{ i \in \{0, \dots, n-1\} : x_i \neq y_i \} \right|
$$
Consequently, the minimum distance $d$ of the linear code $\mathcal{C}$ is defined as:
$$
d = \min \{ d_H(c, c') : c, c' \in \mathcal{C}, c \neq c' \}
$$
Since $\mathcal{C}$ is linear, the minimum distance is equal to the minimum Hamming weight of the non-zero codewords:
$$
d = \min \{ w_H(c) : c \in \mathcal{C}, c \neq 0 \}
$$
where $w_H(c)$ is the number of non-zero entries of $c$.

The value $n-k$ is called the codimension of $\mathcal{C}$. Given a linear code $\mathcal{C}$ of length $n$ over $\mathbb{F}_{q}$, its Euclidean dual code is denoted by $\mathcal{C}^{\perp}$. It is defined by
$$
\mathcal{C}^{\perp}=\left\{\left(b_{0}, b_{1}, \ldots, b_{n-1}\right) \in \mathbb{F}_{q}^{n}\,:\, \sum_{i=0}^{n-1} b_{i} c_{i}=0 \,\,\text{for every}\,\left(c_{0}, c_{1}, \ldots, c_{n-1}\right) \in \mathcal{C}\right\}.
$$

The minimum distance of an $[n, k, d]$ linear code is limited by the Singleton bound
$$
d \leq n-k+1 .
$$
A code meeting the above bound is called Maximum Distance Separable (MDS). We will also need the following two classical bounds from coding theory \cite{Huffman}, with respect to which we will study the optimality of our codes.
\begin{lemma}[Sphere Packing / Hamming Bound]
Let $\mathcal{C}$ be an $[n, k, d]$ code over $\mathbb{F}_q$. Then
$$
q^k \sum_{i=0}^{\lfloor \frac{d-1}{2} \rfloor} \binom{n}{i} (q-1)^i \leq q^n.
$$
\end{lemma}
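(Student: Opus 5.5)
The plan is the classical ball-counting argument in the ambient space $\mathbb{F}_q^n$. Set $t=\lfloor \frac{d-1}{2}\rfloor$. For each codeword $c\in\mathcal{C}$, consider the Hamming ball
$$
B(c,t)=\{x\in\mathbb{F}_q^n : d_H(x,c)\le t\}.
$$
The proof has three steps: show these balls are pairwise disjoint, count the points of a single ball, and compare the total with $q^n$.

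\textbf{Disjointness.} Suppose $x\in B(c,t)\cap B(c',t)$ with $c\neq c'$. Since $d_H$ is a metric (the triangle inequality follows coordinatewise: if $c_i\neq c'_i$ then $x_i\neq c_i$ or $x_i\neq c'_i$), we get
$$
d_H(c,c')\le d_H(c,x)+d_H(x,c')\le 2t\le d-1.
$$
This contradicts the fact, recalled above, that distinct codewords are at distance at least $d$.

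\textbf{Ball size.} The cardinality of $B(c,t)$ does not depend on $c$, because translation $x\mapsto x-c$ is a bijection $B(c,t)\to B(0,t)$ that preserves $d_H$. The vectors at distance exactly $i$ from $0$ are obtained by choosing a support of size $i$ in $\binom{n}{i}$ ways and then a nonzero entry in each support position in $(q-1)^i$ ways. Hence
$$
\#B(c,t)=\sum_{i=0}^{t}\binom{n}{i}(q-1)^i .
$$

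\textbf{Conclusion.} Since $\mathcal{C}$ has dimension $k$, it contains $q^k$ codewords. The balls around them are pairwise disjoint subsets of $\mathbb{F}_q^n$, so summing their sizes gives
$$
q^k\sum_{i=0}^{t}\binom{n}{i}(q-1)^i\le q^n,
$$
which is the claimed bound. Linearity is used only to obtain $\#\mathcal{C}=q^k$; the same argument works for any code of that size with minimum distance $d$.

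No step here is a genuine obstacle. The one point needing care is that the disjointness argument requires exactly $2t<d$, which is precisely why the radius is taken to be $\lfloor\frac{d-1}{2}\rfloor$.
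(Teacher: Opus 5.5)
Your proof is correct: it is the classical argument showing that radius-$\lfloor\frac{d-1}{2}\rfloor$ Hamming balls around the $q^k$ codewords are pairwise disjoint, then counting their points. The paper gives no proof of its own; it quotes the bound from the Huffman--Pless textbook, where it is proved by this same ball-counting argument.
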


\begin{lemma}[Griesmer Bound]
Let $\mathcal{C}$ be an $[n, k, d]$ linear code over $\mathbb{F}_q$ with $k\ge 1$. Then
$$
n \geq \sum_{i=0}^{k-1} \left\lceil \frac{d}{q^i} \right\rceil.
$$
\end{lemma}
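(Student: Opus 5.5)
The statement to prove is the Griesmer bound: for an $[n,k,d]$ linear code $\mathcal{C}$ over $\mathbb{F}_q$ with $k\ge 1$, we have $n \ge \sum_{i=0}^{k-1}\lceil d/q^i\rceil$. I would argue by induction on $k$, using the residual code construction. For $k=1$ the bound reads $n\ge d$, which is immediate because every nonzero codeword has weight at most $n$.

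For the inductive step with $k\ge 2$, fix a codeword $c\in\mathcal{C}$ of weight exactly $d$. After permuting coordinates, assume $c$ is supported on the first $d$ positions. Define the residual code $\mathrm{Res}(\mathcal{C},c)$ as the restriction (puncturing) of $\mathcal{C}$ to the remaining $n-d$ coordinates.

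First I will check that this residual code has dimension $k-1$.
\begin{itemize}
\item The restriction map $\mathcal{C}\to\mathbb{F}_q^{n-d}$ is linear, and its kernel consists of codewords supported inside $\mathrm{supp}(c)$.
\item Suppose $c'$ is such a codeword and is not a multiple of $c$. Pick a position $j$ in the support, and choose $\lambda$ so that $c'-\lambda c$ vanishes at $j$.
\item Then $c'-\lambda c$ is a nonzero codeword of weight less than $d$, a contradiction.
\item Hence the kernel is $\langle c\rangle$, and the image has dimension $k-1$. In particular $n-d\ge 1$, since $k-1\ge 1$.
\end{itemize}

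The key step, which I expect to be the main obstacle, is showing that the residual code has minimum distance $d'\ge\lceil d/q\rceil$. Let $c'$ be a codeword whose restriction is nonzero, with restricted weight $w$. For each $\lambda\in\mathbb{F}_q$, count the positions $j\le d$ where $c'_j=\lambda c_j$.

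As $\lambda$ ranges over $\mathbb{F}_q$, these $q$ sets partition the $d$ support positions, because $c_j\neq 0$ there. So some $\lambda$ captures at least $\lceil d/q\rceil$ of them. For that $\lambda$, the codeword $c'-\lambda c$ is nonzero, since its restriction is nonzero. Its weight satisfies
$$
d\le \mathrm{wt}(c'-\lambda c)\le \bigl(d-\lceil d/q\rceil\bigr)+w,
$$
which gives $w\ge\lceil d/q\rceil$.

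Now apply the induction hypothesis to the $[n-d,k-1,d']$ residual code:
$$
n-d\ge\sum_{i=0}^{k-2}\lceil d'/q^i\rceil\ge\sum_{i=0}^{k-2}\bigl\lceil \lceil d/q\rceil/q^i\bigr\rceil=\sum_{i=0}^{k-2}\lceil d/q^{i+1}\rceil.
$$
The last equality uses the standard identity $\lceil\lceil a/b\rceil/c\rceil=\lceil a/(bc)\rceil$ for positive integers. Adding $d=\lceil d/q^0\rceil$ to both sides yields the claimed bound.
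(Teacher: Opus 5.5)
Your proof is correct: the computation showing the kernel of the restriction map is $\langle c\rangle$ (so the residual code has dimension $k-1$), the pigeonhole bound $d'\ge\lceil d/q\rceil$, and the identity $\lceil\lceil d/q\rceil/q^i\rceil=\lceil d/q^{i+1}\rceil$ all check out. The paper does not prove this lemma; it only recalls it as a classical result from Huffman--Pless, where it is established by this same residual-code induction, so your route is the standard one behind the citation.
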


In our computations, we will also need the \emph{Pless Power Moments} (see \cite{Huffman}), which provide a set of identities relating the moments of the weight distribution of a linear code to those of its dual. Let $(A_0, A_1, \dots, A_n)$ be the weight distribution of the code $\mathcal{C}$, where $A_i$ is the number of codewords of weight $i$. Similarly, let $(B_0, B_1, \dots, B_n)$ be the weight distribution of the dual code $\mathcal{C}^{\perp}$. The first four identities are explicitly given by:

$$ \sum_{i=0}^{n} A_i = q^k, $$

$$ \sum_{i=0}^{n} i A_i = q^{k-1} [ n(q-1) - B_1 ], $$

$$ \sum_{i=0}^{n} i^2 A_i = q^{k-2} \left[ n(q-1)(n(q-1)+1) - \big(2(n-1)(q-1) + q\big) B_1 + 2 B_2 \right], $$

\begin{align*}
\sum_{i=0}^{n} i^3 A_i = q^{k-3} \Big[ & n(q-1) \big( n^2(q-1)^2 + 3n(q-1) + 1 \big) \\
&- \Big( 3(n-1)(q-1)\big(n(q-1)+1\big) + q(3n(q-1)+1) \Big) B_1 \\
&+ \big( 6(n-2)(q-1) + 6q \big) B_2 - 6 B_3 \Big].
\end{align*}
Given a linear code \( \mathcal{C} \) of length \( n \) and dimension \( k \) over \( \mathbb{F}_{q^m} \) with a generator matrix \( G \), we can construct the subfield code \( \mathcal{C}^{(q)} \) of length \( n \) and dimension \( k' \) over \( \mathbb{F}_q \).

To do this, first take a basis for \( \mathbb{F}_{q^m} \) over \( \mathbb{F}_q \). Then, replace each entry of the generator matrix \( G \) with the corresponding \( m \times 1 \) column vector from \( \mathbb{F}_{q^m} \) relative to this basis. As a result, \( G \) transforms into a \( km \times n \) matrix over \( \mathbb{F}_q \), which generates the new subfield code \( \mathcal{C}^{(q)} \) over \( \mathbb{F}_q \).

It is evident that the dimension \( k' \) of \( \mathcal{C}^{(q)} \) satisfies \( k' \leq mk \). Ding et al. \cite{ref14} demonstrated that this subfield code is independent of both the choice of \( G \) and the basis for \( \mathbb{F}_{q^m} \) over \( \mathbb{F}_q \). Furthermore, they showed that the subfield code is equivalent to the trace code, leading to the conclusion that \( \mathcal{C}^{(q)} = \operatorname{Tr}_{q^m/q}(\mathcal{C}) \).

\begin{lemma} [\cite{ref14}, Theorem 2.5]
Let \( \mathcal{C} \) be a linear code of length \( n \) over \( \mathbb{F}_{q^m} \) with a generator matrix \( G = \left[g_{ij}\right]_{1 \leq i \leq k, 1 \leq j \leq n} \). The trace representation of the subfield code \( \mathcal{C}^{(q)} \) is given by 
\begin{equation*}
\mathcal{C}^{(q)} = \bigg\{ \Big(
 \operatorname{Tr}_{q^m / q}\Big(\sum_{i=1}^k a_i g_{i1}\Big), \ldots, \operatorname{Tr}_{q^m / q}\Big(\sum_{i=1}^k a_i g_{in}\Big) \Big) : 
 a_1, \ldots, a_k \in \mathbb{F}_{q^m} 
\bigg\},
\end{equation*}
where \( \operatorname{Tr}_{q^m / q}(x) = \sum_{i=0}^{m-1} x^{q^i} \) denotes the trace function from \( \mathbb{F}_{q^m} \) to \( \mathbb{F}_q \).
\end{lemma}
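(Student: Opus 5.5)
The plan is to show two inclusions between $\mathcal{C}^{(q)}$, defined via a fixed basis expansion of $G$, and the trace code on the right-hand side. Fix a basis $\{\alpha_1,\dots,\alpha_m\}$ of $\mathbb{F}_{q^m}$ over $\mathbb{F}_q$. Let $\{\beta_1,\dots,\beta_m\}$ be its trace-dual basis, so that $\operatorname{Tr}_{q^m/q}(\alpha_i\beta_j)=\delta_{ij}$. Its existence follows from the nondegeneracy of the trace bilinear form $(x,y)\mapsto \operatorname{Tr}_{q^m/q}(xy)$, which holds because $\operatorname{Tr}_{q^m/q}$ is a nonzero polynomial map of degree $q^{m-1}<q^m$ and hence is surjective onto $\mathbb{F}_q$. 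The key identity is that the coordinate of any $x\in\mathbb{F}_{q^m}$ with respect to $\alpha_\ell$ is $\operatorname{Tr}_{q^m/q}(\beta_\ell x)$, since $x=\sum_\ell \operatorname{Tr}_{q^m/q}(\beta_\ell x)\alpha_\ell$.

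First step: describe the rows of the expanded $km\times n$ matrix. The row indexed by $(i,\ell)$ has $j$-th entry equal to the $\alpha_\ell$-coordinate of $g_{ij}$, which is $\operatorname{Tr}_{q^m/q}(\beta_\ell g_{ij})$. So each row is a trace codeword, with $a_i=\beta_\ell$ and all other $a$'s equal to $0$.

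Second step: take an $\mathbb{F}_q$-linear combination with coefficients $c_{i\ell}\in\mathbb{F}_q$. By $\mathbb{F}_q$-linearity of the trace, the $j$-th entry of the combination is $\operatorname{Tr}_{q^m/q}\big(\sum_i a_i g_{ij}\big)$ with $a_i=\sum_\ell c_{i\ell}\beta_\ell$. This shows $\mathcal{C}^{(q)}$ is contained in the trace set. Conversely, since $\{\beta_\ell\}$ is a basis, every tuple $(a_1,\dots,a_k)\in\mathbb{F}_{q^m}^k$ can be written in this form, which gives the reverse inclusion.

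Main obstacle: the only genuinely nontrivial ingredient is the existence of the dual basis, equivalently the nondegeneracy of the trace form, which I would handle via surjectivity of the trace as above. Independence from the choice of basis and of $G$ also follows. The right-hand side depends only on $\mathcal{C}$, because $\sum_i a_i g_{i\cdot}$ ranges over exactly the codewords of $\mathcal{C}$, so the trace code is $\operatorname{Tr}_{q^m/q}(\mathcal{C})$.
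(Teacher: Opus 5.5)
The paper gives no proof of this lemma and simply quotes it from Ding and Heng \cite{ref14}. Your argument is the standard trace-dual-basis proof behind that result: the $\alpha_\ell$-coordinate of $g_{ij}$ is $\operatorname{Tr}_{q^m/q}(\beta_\ell g_{ij})$, so an $\mathbb{F}_q$-combination of the expanded rows with coefficients $c_{i\ell}$ is exactly the trace word with $a_i=\sum_\ell c_{i\ell}\beta_\ell$, and these $a_i$ range over all of $\mathbb{F}_{q^m}$. It is correct and complete, including your observation that the right-hand side equals $\operatorname{Tr}_{q^m/q}(\mathcal{C})$, which gives independence from both $G$ and the chosen basis.
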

\begin{lemma}[\cite{ref14}]
    The minimal distance $d^{\perp}$ of $\mathcal{C}^{\perp}$ and the minimal distance $d^{(q)\perp}$ of $\mathcal{C}^{(q)\perp}$ satisfy
$$
d^{(q)\perp} \geq d^{\perp}.
$$
\end{lemma}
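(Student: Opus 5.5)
Since $\mathcal{C}^{(q)}$ and $\mathcal{C}$ have the same length $n$, the plan is to show that every nonzero codeword $b$ of $\mathcal{C}^{(q)\perp}$ is also a codeword of $\mathcal{C}^{\perp}$ once we view $\mathbb{F}_q^n \subseteq \mathbb{F}_{q^m}^n$. In other words, I will establish the inclusion $\mathcal{C}^{(q)\perp} \subseteq \mathcal{C}^{\perp}\cap \mathbb{F}_q^n$. The Hamming weight of a vector does not change when it is regarded over the larger field. So every nonzero $b \in \mathcal{C}^{(q)\perp}$ is a nonzero codeword of $\mathcal{C}^{\perp}$, hence $w_H(b)\ge d^{\perp}$, and taking the minimum over $b$ gives $d^{(q)\perp}\ge d^{\perp}$.

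For the inclusion, I will use the trace representation of the preceding lemma, $\mathcal{C}^{(q)}=\operatorname{Tr}_{q^m/q}(\mathcal{C})$. Fix $b=(b_1,\dots,b_n)\in\mathcal{C}^{(q)\perp}$, so each $b_j\in\mathbb{F}_q$, and fix any $c=(c_1,\dots,c_n)\in\mathcal{C}$. Because $\mathcal{C}$ is $\mathbb{F}_{q^m}$-linear, $\lambda c\in\mathcal{C}$ for every $\lambda\in\mathbb{F}_{q^m}$. Hence $\operatorname{Tr}_{q^m/q}(\lambda c)\in\mathcal{C}^{(q)}$, where the trace is applied coordinatewise. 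Orthogonality then gives
$$0=\sum_{j=1}^n b_j\operatorname{Tr}_{q^m/q}(\lambda c_j)=\operatorname{Tr}_{q^m/q}\Big(\lambda\sum_{j=1}^n b_jc_j\Big),$$
where the second equality holds because the trace is $\mathbb{F}_q$-linear and each $b_j\in\mathbb{F}_q$.

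Set $s=\sum_j b_jc_j\in\mathbb{F}_{q^m}$. The identity above says $\operatorname{Tr}_{q^m/q}(\lambda s)=0$ for all $\lambda\in\mathbb{F}_{q^m}$. The trace form $(\lambda,s)\mapsto \operatorname{Tr}_{q^m/q}(\lambda s)$ is nondegenerate, since the trace is a nonzero linear functional on a separable extension. If $s\neq 0$, then $\lambda\mapsto\lambda s$ would be surjective and the trace would vanish identically, which is a contradiction. Therefore $s=0$, i.e. $b\perp c$ over $\mathbb{F}_{q^m}$. As $c\in\mathcal{C}$ was arbitrary, $b\in\mathcal{C}^{\perp}$.

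The only step with real content is the scaling by $\lambda$ combined with the nondegeneracy of the trace; everything else is bookkeeping. If $\mathcal{C}^{(q)\perp}=\{0\}$, the statement is vacuous under the usual convention that the minimum distance of the zero code is $+\infty$, so no separate argument is needed.
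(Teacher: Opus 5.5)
Your proof is correct. The paper gives no argument of its own (it imports the lemma from \cite{ref14}), and your route is the standard proof: you derive the inclusion $\mathcal{C}^{(q)\perp}\subseteq\mathcal{C}^{\perp}\cap\mathbb{F}_q^n$ from $\mathcal{C}^{(q)}=\operatorname{Tr}_{q^m/q}(\mathcal{C})$ by scaling with $\lambda\in\mathbb{F}_{q^m}$ and using nondegeneracy of the trace form (one inclusion of Delsarte's theorem), and then use that Hamming weight is unchanged over the larger field. A more direct variant avoids the trace representation: expanding $G$ in an $\mathbb{F}_q$-basis of $\mathbb{F}_{q^m}$ shows that, for $b\in\mathbb{F}_q^n$, the conditions $G^{(q)}b^{T}=0$ and $Gb^{T}=0$ are equivalent, which in fact gives $\mathcal{C}^{(q)\perp}=\mathcal{C}^{\perp}\cap\mathbb{F}_q^n$.
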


\subsection{Vectorial and scalar functions over finite fields}

     Throughout this paper, we regard the finite field $\mathbb{F}_{p^n}$ as an $n$-dimensional vector space over its prime subfield $\mathbb{F}_p$. We shall use the notation $\mathbb{F}_{p}^n$ to emphasize the vector space structure, reserving $\mathbb{F}_{p^n}$ for contexts where the multiplicative field properties are essential.

Accordingly, the inner product is defined depending on the representation used:
\begin{itemize}
    \item In the vector space $\mathbb{F}_p^n$, we denote by ``$\cdot$'' the standard Euclidean scalar product. For any two vectors $x = (x_1, \dots, x_n)$ and $y = (y_1, \dots, y_n)$ in $\mathbb{F}_p^n$, this is given by:
    $$
    x \cdot y = \sum_{i=1}^n x_i y_i.
    $$
    \item Under the field identification $\mathbb{F}_{p^n} \cong \mathbb{F}_p^n$, the role of the inner product is assumed by the trace form. For any pair of elements $\alpha, \beta \in \mathbb{F}_{p^n}$, the duality is induced by the absolute trace function $\text{Tr}_{p^n/p}: \mathbb{F}_{p^n} \to \mathbb{F}_p$, defined as:
    $$
    \langle \alpha, \beta \rangle = \text{Tr}_{p^n/p}(\alpha \beta).
    $$
\end{itemize}
Unless otherwise stated, we implicitly identify the standard dot product with the trace form when moving between these representations.

\begin{defn}
Let $q=p^r$ where $p$ is a prime. A vectorial function
 $\mathbb{F}_q^n\rightarrow \mathbb{F}_q^m$ (or $\mathbb{F}_{q^n}\rightarrow \mathbb{F}_{q^m}$) is called an \emph{$(n,m)$-q-ary function}. When $q=2$, an $(n,m)$-2-ary function will be simply denoted an \emph{$(n,m)$-function}. A \emph{Boolean function} is an $(n,1)$-function, that is, a function $\mathbb{F}_2^n\rightarrow \mathbb{F}_2$ (or $\mathbb{F}_{2^n}\rightarrow \mathbb{F}_2$).
\end{defn}

Consider a scalar function $f: \mathbb{F}_{q^n} \to \mathbb{F}_q$. The \emph{Walsh transform} of $f$ at a point $\lambda \in \mathbb{F}_{q^n}$ is defined as:
$$
\mathcal{W}_f(\lambda) = \sum_{x \in \mathbb{F}_{q^n}} \zeta_p^{\text{Tr}_{q/p}(f(x)) - \text{Tr}_{q^n/p}(\lambda x)},
$$
where $\zeta_p = e^{2\pi i / p}$ is a primitive $p$-th root of unity.

The \emph{Walsh spectrum} is the set of values $\{ \mathcal{W}_f(\lambda) : \lambda \in \mathbb{F}_{q^n} \}$. Based on this, $f$ is classified as follows:

\begin{defn}[Bent, Plateaued and Almost Bent Scalar Functions]
Let $f: \mathbb{F}_{q^n} \to \mathbb{F}_q$.
\begin{enumerate}
    \item \textbf{Bent:} The function $f$ is called \textit{bent} if $|\mathcal{W}_f(\lambda)|^2 = q^n$ for all $\lambda \in \mathbb{F}_{q^n}$. This is possible only if $n$ is even (or $n$ odd and $q$ a square).
    
    \item \textbf{$s$-Plateaued:} The function $f$ is called \textit{$s$-plateaued} (for an integer $0 \leq s \leq n$) if the squared magnitude of its Walsh transform takes only two values:
    $$
    |\mathcal{W}_f(\lambda)|^2 \in \{0, q^{n+s}\} \quad \text{for all } \lambda \in \mathbb{F}_{q^n}.
    $$
    
    \item \textbf{Almost Bent:} While the term is standard for vectorial functions, in the scalar case, a function is often informally referred to as \textit{almost bent} (or semi-bent/near-bent) if it is $s$-plateaued with the minimal possible $s > 0$. For $n$ odd, this corresponds to $s=1$.
\end{enumerate}
\end{defn}

Let $F: \mathbb{F}_{q^n} \to \mathbb{F}_{q^m}$ be a vectorial function. Its cryptographic properties are analyzed through its \emph{component functions}. For any non-zero element $v \in \mathbb{F}_{q^m}^*$, the component function $f_v: \mathbb{F}_{q^n} \to \mathbb{F}_q$ is defined as:
$$
f_v(x) = \text{Tr}_{q^m/q}(v F(x)).
$$
The \emph{Walsh transform} of $F$ at $(v, \lambda) \in \mathbb{F}_{q^m}^* \times \mathbb{F}_{q^n}$ is defined as the Walsh transform of the scalar function $f_v$ evaluated in $\lambda$:
$$
\mathcal{W}_F(v,\lambda) = \sum_{x \in \mathbb{F}_{q^n}} \zeta_p^{\text{Tr}_{q^m/p}(vF(x)) - \text{Tr}_{q^n/p}(\lambda x)}.
$$

\begin{defn}[Vectorial Plateaued and Almost Bent Functions]
Let $F: \mathbb{F}_{q^n} \to \mathbb{F}_{q^m}$.
\begin{enumerate}
    \item \textbf{$s$-Plateaued:} The function $F$ is called \textit{vectorial $s$-plateaued} if for every non-zero $v \in \mathbb{F}_{q^m}^*$, the component function $x \mapsto \text{Tr}_{q^m/q}(vF(x))$ is an $s$-plateaued scalar function. Explicitly:
    $$
    |\mathcal{W}_F(\lambda, v)|^2 \in \{0, q^{n+s}\} \quad \text{for all}\,\, \lambda \in \mathbb{F}_{q^n}, v \in \mathbb{F}_{q^m}^*.
    $$
    
    \item \textbf{Almost Bent:} This definition applies specifically when $n=m$ and $n$ is odd. The function $F$ is called \textit{Almost Bent} if it achieves the minimum possible non-linearity for a non-bent function. This occurs when the Walsh spectrum is as flat as possible, specifically:
    $$
    \mathcal{W}_F(\lambda, v) \in \{0, \pm q^{\frac{n+1}{2}}\} \quad \text{for all}\,\, \lambda \in \mathbb{F}_{q^n}, v \in \mathbb{F}_{q^n}^*.
    $$
\end{enumerate}
\end{defn}
 \begin{remark*}
        In terms of the plateaued classification, an Almost Bent function is exactly a vectorial \emph{$1$-plateaued} function ($s=1$).
    \end{remark*}
Scalar and vectorial functions over finite fields are fundamental mathematical objects in modern cryptography, coding theory, and combinatorics. For a general treatment of the subject, we refer the reader to \cite{CarletPlateaued, Carlet-book, Mesnager-book}.

\subsection{Character theory over finite fields}

Let $\mathbb{F}_{q}$ be a finite field with $q = p^r$ elements, where $p$ is a prime number and $r$ is a positive integer. We denote by $\mathbb{F}_{q}^* = \mathbb{F}_{q} \setminus \{0\}$ the multiplicative group of the field.

Let $\text{Tr}: \mathbb{F}_{q} \to \mathbb{F}_{p}$ denote the absolute trace function defined by $\text{Tr}(x) = x + x^p + \dots + x^{p^{r-1}}$.
An \textit{additive character} of $\mathbb{F}_{q}$ is a homomorphism $\chi$ from the additive group $(\mathbb{F}_{q}, +)$ to the multiplicative group of complex numbers of modulus 1. For any $a \in \mathbb{F}_{q}$, the function $\chi_a: \mathbb{F}_{q} \to \mathbb{C}^*$ defined by
$$
\chi_a(x) = \zeta_p^{\text{Tr}(ax)},
$$
is an additive character, where $\zeta_p = e^{2\pi i / p}$ is a primitive $p$-th root of unity.
The character $\chi_0$ (where $a=0$) is called the \textit{trivial additive character}, satisfying $\chi_0(x) = 1$ for all $x \in \mathbb{F}_{q}$. The character $\chi_1$ (where $a=1$) is referred to as the \textit{canonical additive character}.
The orthogonality relation for additive characters is given by:
$$
\sum_{x \in \mathbb{F}_{q}} \chi_a(x) = \begin{cases} q & \text{if } a = 0, \\ 0 & \text{if } a \neq 0. \end{cases}
$$

A \textit{multiplicative character} of $\mathbb{F}_{q}$ is a homomorphism $\psi$ from the multiplicative group $\mathbb{F}_{q}^*$ to the complex unit group. Let $g$ be a fixed primitive element of $\mathbb{F}_{q}$. Any multiplicative character $\psi_j$ (for $0 \leq j \leq q-2$) can be defined by
$$
\psi_j(g^k) = \zeta_{q-1}^{jk} \quad \text{for } k = 0, 1, \dots, q-2,
$$
where $\zeta_{q-1} = e^{2\pi i / (q-1)}$. We extend the domain of $\psi$ to all of $\mathbb{F}_{q}$ by setting $\psi(0) = 0$ (except for the trivial character $\psi_0$, where $\psi_0(0)=1$ is sometimes defined, though usually $\psi_0(x)=1$ for $x \neq 0$).
For odd $q$, the unique multiplicative character of order 2 is denoted by $\eta$ and is called the \textit{quadratic character} of $\mathbb{F}_{q}$. It is defined as $\eta(x) = 1$ if $x$ is a square in $\mathbb{F}_{q}^*$, and $\eta(x) = -1$ if $x$ is a non-square.

The orthogonality relation for multiplicative characters is given by:
$$
\sum_{x \in \mathbb{F}_{q}^*} \psi_j(x) = \begin{cases} q-1 & \text{if } j = 0, \\ 0 & \text{if } j \neq 0. \end{cases}
$$

Let $\psi$ be a multiplicative character and $\chi$ be an additive character of $\mathbb{F}_{q}$. The \textit{Gauss sum} $G(\psi, \chi)$ is defined by
$$
G(\psi, \chi) = \sum_{c \in \mathbb{F}_{q}^*} \psi(c) \chi(c).
$$
For the specific case of the quadratic character $\eta$ and the canonical additive character $\chi_1$, the values of the quadratic Gauss sums are well-known.

\begin{lemma}[\cite{Lidl}]
Let $q = p^r$ where $p$ is an odd prime. The quadratic Gauss sum $G(\eta, \chi_1)$ is explicitly given by:
$$
G(\eta, \chi_1) = \begin{cases}
(-1)^{r-1} \sqrt{q} & \text{if } p \equiv 1 \pmod 4, \\
(-1)^{r-1} i^r \sqrt{q} & \text{if } p \equiv 3 \pmod 4.
\end{cases}
$$
\end{lemma}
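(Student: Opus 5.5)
The plan is to reduce the statement to the prime-field case and then evaluate it there. The prime-field case is the classical theorem of Gauss on $\sum_{x\in\mathbb{F}_p}\zeta_p^{x^2}$. The reduction to it is the Davenport--Hasse lifting theorem applied to the extension $\mathbb{F}_q/\mathbb{F}_p$.

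\textbf{Step 1: prime-field Gauss sum.} We have $G_1:=G(\eta_p,\chi_1)=\sum_{c\in\mathbb{F}_p^*}\eta_p(c)\zeta_p^{c}$. Since $1+\eta_p(c)$ counts the square roots of $c$, this equals $\sum_{x\in\mathbb{F}_p}\zeta_p^{x^2}$. The modulus is routine: expanding $|G_1|^2$ by the orthogonality relations gives $|G_1|^2=p$. Similarly $G_1^2=\eta_p(-1)p$, so $G_1=\pm\sqrt p$ when $p\equiv1\pmod 4$ and $G_1=\pm i\sqrt p$ when $p\equiv 3\pmod 4$. The real content is the sign: $G_1=\sqrt p$, respectively $i\sqrt p$. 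For this I would not reprove Gauss; I would cite \cite{Lidl}, Theorem 5.15 (Gauss's determination of the sign). If a self-contained argument is wanted, the standard route computes the product $\prod_{j=1}^{(p-1)/2}(\zeta_p^{2j-1}-\zeta_p^{-(2j-1)})$ in two ways.

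\textbf{Step 2: lifting to $\mathbb{F}_q$.} Let $q=p^r$. The quadratic character of $\mathbb{F}_q$ is $\eta=\eta_p\circ \mathrm{N}_{q/p}$, because the norm maps $\mathbb{F}_q^*$ onto $\mathbb{F}_p^*$ and sends squares exactly to squares. The canonical character of $\mathbb{F}_q$ is $\chi_1=\chi_1^{(p)}\circ\mathrm{Tr}_{q/p}$. The Davenport--Hasse theorem then gives
$$
-G(\eta,\chi_1)=\bigl(-G_1\bigr)^r ,\qquad\text{i.e.}\qquad G(\eta,\chi_1)=(-1)^{r-1}G_1^{\,r}.
$$
The formula follows by substituting Step 1. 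For $p\equiv1\pmod4$ one gets $(-1)^{r-1}\sqrt p^{\,r}=(-1)^{r-1}\sqrt q$. For $p\equiv3\pmod4$ one gets $(-1)^{r-1}i^r\sqrt q$.

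\textbf{Main obstacle.} The only nontrivial ingredients are Gauss's sign determination and the Davenport--Hasse relation. Both are standard and are cited directly from \cite{Lidl} (Theorems 5.15 and 5.14). The remaining work is bookkeeping of signs and of the factor $i^r$. If one wished to avoid Davenport--Hasse, the alternative is to write $\sum_{x\in\mathbb{F}_q}\zeta_p^{\mathrm{Tr}(x^2)}$ as a quadratic form over $\mathbb{F}_p$. One then diagonalizes it as $\sum_j a_j y_j^2$ and uses that its discriminant is the discriminant of the trace form. The sum factors as $\prod_j \eta_p(a_j)G_1$. Computing $\eta_p$ of that discriminant, which equals $(-1)^{r-1}$ up to the appropriate power of $\eta_p(-1)$, is the delicate step on that route.
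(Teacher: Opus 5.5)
Your proposal is correct, and it is the standard proof behind the result the paper cites. The paper gives no argument of its own; Lidl--Niederreiter prove Theorem 5.15 exactly as you do, by settling the prime case through Gauss's sign determination and then lifting with the Davenport--Hasse theorem (their Theorem 5.14), which yields $G(\eta,\chi_1)=(-1)^{r-1}G_1^{\,r}$. One small point: Theorem 5.15 already states the full result for every $q=p^r$, so in Step 1 you should cite Gauss's prime-field evaluation directly, or carry out the product argument you sketch, rather than cite the very theorem you are proving.
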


Character sums involving polynomials are often referred to as \emph{Weil sums}. The evaluation of such sums for quadratic polynomials is essential for our results.

\begin{lemma}[\cite{Lidl}]\label{Character and quadratic polynomial}
Let $\chi_b$ be a nontrivial additive character of $\mathbb{F}_{q}$ (with $b \in \mathbb{F}_{q}^*$) and let $f(x) = a_2 x^2 + a_1 x + a_0 \in \mathbb{F}_{q}[x]$ with $a_2 \neq 0$.
The Weil sum $\sum_{c \in \mathbb{F}_{q}} \chi_b(f(c))$ satisfies the following:
\begin{enumerate}
    \item If $q$ is odd, then:
    $$
    \sum_{c \in \mathbb{F}_{q}} \chi_b(f(c)) = \chi_b\left(a_0 - a_1^2 (4a_2)^{-1}\right) \eta(a_2) G(\eta, \chi_b).
    $$
    \item If $q$ is even, then:
    $$
    \sum_{c \in \mathbb{F}_{q}} \chi_b(f(c)) = \begin{cases}
    \chi_b(a_0) q & \text{if } a_2 = b a_1^2, \\
    0 & \text{otherwise}.
    \end{cases}
    $$
\end{enumerate}
\end{lemma}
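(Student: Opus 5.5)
The statement is the classical Weil sum evaluation for a quadratic polynomial $f(x)=a_2x^2+a_1x+a_0$ with $a_2\neq 0$. The plan treats odd and even characteristic separately, and in both cases the first move is to reduce to a sum over a simpler polynomial.

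\emph{Odd $q$.} Since $2$ and $a_2$ are invertible, complete the square:
$$f(c)=a_2\bigl(c+a_1(2a_2)^{-1}\bigr)^2 + a_0 - a_1^2(4a_2)^{-1}.$$
The substitution $c\mapsto c-a_1(2a_2)^{-1}$ is a bijection of $\mathbb{F}_q$, and $\chi_b$ is additive. Hence
$$\sum_{c\in\mathbb{F}_q}\chi_b(f(c))=\chi_b\bigl(a_0-a_1^2(4a_2)^{-1}\bigr)\sum_{c}\chi_b(a_2c^2).$$
It remains to evaluate the pure quadratic sum. For $u\neq 0$, the number of $c$ with $c^2=u$ equals $1+\eta(u)$, and $c=0$ contributes $1$. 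Therefore
$$\sum_{c}\chi_b(a_2c^2)=\sum_{u\in\mathbb{F}_q}\chi_b(a_2u)+\sum_{u\in\mathbb{F}_q^*}\eta(u)\chi_b(a_2u).$$
The first sum vanishes by orthogonality, because $a_2b\neq 0$. In the second, substitute $v=a_2u$ and use $\eta(a_2^{-1})=\eta(a_2)$ to obtain $\eta(a_2)G(\eta,\chi_b)$. This step is routine.

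\emph{Even $q$.} Now $x\mapsto x^2$ is the Frobenius automorphism, which is a bijection, and $\mathrm{Tr}(y^2)=\mathrm{Tr}(y)$ for every $y$. Factor out $\chi_b(a_0)$. Write $a_2=d^2$, where $d$ is the unique square root, and note that $\mathrm{Tr}(ba_2c^2)=\mathrm{Tr}(\sqrt{ba_2}\,c)$. The exponent $\mathrm{Tr}(b a_2 c^2+b a_1 c)$ therefore becomes
$$\mathrm{Tr}\bigl((\sqrt{ba_2}+ba_1)c\bigr),$$
which is linear in $c$. By orthogonality the sum equals $q$ if $\sqrt{ba_2}=ba_1$, that is, if $ba_2=b^2a_1^2$, which is equivalent to $a_2=ba_1^2$; otherwise it equals $0$. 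This reproduces the stated case distinction.

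The main obstacle is the bookkeeping in the even case. The stated condition $a_2=ba_1^2$ must be matched exactly, so care is needed that the square root is applied to $ba_2$ and not to $a_2$ alone. I would verify this by squaring both sides of $\sqrt{ba_2}=ba_1$ and then dividing by $b\neq 0$.
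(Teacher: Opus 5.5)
Your proof is correct. The paper gives no proof of its own and only cites Lidl--Niederreiter, whose argument matches yours: for odd $q$, complete the square and use $\sum_{c}\chi_b(a_2c^2)=\eta(a_2)G(\eta,\chi_b)$; for even $q$, linearize via $\mathrm{Tr}(ba_2c^2)=\mathrm{Tr}\bigl((ba_2)^{q/2}c\bigr)$ and apply orthogonality. (The auxiliary $d$ with $d^2=a_2$ in your even case is never used and can be dropped.)
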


\section{Subfield codes from generic functions}
In this section, we present the parameters of the codes $\mathcal{C}_{f,g,h}$ in construction \eqref{matrix} for generic functions $f,g,h$ from $\mathbb{F}_{q^m}$ to $\mathbb{F}_q$, their punctured codes $\mathcal{C}_{f,g,h}^*$ as in \eqref{punctured matrix} and their subfield codes $\mathcal{C}_{f,g,h}^{(q)}$.\\\\
In this context, the trace representation of the subfield code $\mathcal{C}_{f, g,h}^{(q)}$ is given by $$ \mathcal{C}_{f, g,h}^{(q)}=\left\{\mathbf{c}_{a, b, c,d}: a \in \mathbb{F}_q, b, c, d \in \mathbb{F}_{q^m}\right\}, $$ where $$ \mathbf{c}_{a, b, c,d}=\left(\operatorname{Tr}_{q^m / q}(b),\left(a+\operatorname{Tr}_{q^m / q}(b x+c y+dz)\right)_{(x, y,z) \in \mathcal{D}}\right). $$
As in standard notation, let $\eta$ and $\eta '$ be the quadratic
characters of $\mathbb{F}_q$ and $\mathbb{F}_{q^m}$, respectively. Also, let $\chi$ and $\chi '$ be the canonical additive characters of $\mathbb{F}_q$ and $\mathbb{F}_{q^m}$, respectively. Hence, $\chi'=\chi\circ\text{Tr}_{q^m/q}$. Consider three functions $f,g$ and $h$ from $\mathbb{F}_{q^m}$ to $\mathbb{F}_q$.\\
We will need the following two terms, depending on the functions.
\begin{equation}\label{Phi}
\Phi_{f,g,h}=\sum_{s\in\mathbb{F}_q^*}\sum_{x\in\mathbb{F}_{q^m}}\chi(sf(x))\sum_{y\in\mathbb{F}_{q^m}}\chi(sg(y))\sum_{z\in\mathbb{F}_{q^m}}\chi(sh(z)).
\end{equation}
Then, for $(a,b,c,d)\in\mathbb{F}_q\times\mathbb{F}_{q^m}^3$:
\begin{equation}\label{lambda}
    \Lambda_{a,b,c,d}^{f,g,h}=\sum_{s\in\mathbb{F}_q^*}\chi(sa)\sum_{\omega\in\mathbb{F}_q^*}\sum_{(x,y,z)\in\mathbb{F}_{q^m}^3}\chi(\omega f(x)+\omega g(y)+\omega h(z))\chi'(sbx+scy+sdz).
\end{equation}

In the following lemmas, we will compute the length and weights of the codes with respect to the latter terms.

\begin{lemma}\label{lemma: length}
    The length of the codes $\mathcal{C}_{f,g,h}$ and $\mathcal{C}_{f,g,h}^{(q)}$ is $n=1+q^{3m-1}+\frac{1}{q}\Phi_{f,g,h}$.
\end{lemma}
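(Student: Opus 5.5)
The length of $\mathcal{C}_{f,g,h}$ is $1+\#\mathcal{D}$, since the generator matrix \eqref{matrix} has one extra column in front of $G^*_{f,g,h}$. The subfield code $\mathcal{C}^{(q)}_{f,g,h}$ has the same length, because passing to a subfield code only expands each entry into an $m\times 1$ column and never changes the number of columns. So the whole statement reduces to showing
\[
\#\mathcal{D}=q^{3m-1}+\tfrac{1}{q}\Phi_{f,g,h}.
\]

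To count $\mathcal{D}$, I would detect the equation $f(x)+g(y)+h(z)=0$ in $\mathbb{F}_q$ with the orthogonality relation for the additive characters of $\mathbb{F}_q$. For $u\in\mathbb{F}_q$, the sum $\frac1q\sum_{s\in\mathbb{F}_q}\chi(su)$ equals $1$ if $u=0$ and $0$ otherwise. Summing this over all $(x,y,z)\in\mathbb{F}_{q^m}^3$ gives
\[
\#\mathcal{D}=\frac1q\sum_{s\in\mathbb{F}_q}\ \sum_{(x,y,z)\in\mathbb{F}_{q^m}^3}\chi\big(s(f(x)+g(y)+h(z))\big).
\]

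Next I would split off the term $s=0$. It contributes $\frac1q\,q^{3m}=q^{3m-1}$. For each $s\in\mathbb{F}_q^*$, I would use $\chi(a+b+c)=\chi(a)\chi(b)\chi(c)$, since $\chi$ is a homomorphism. This makes the triple sum factor as
\[
\sum_{x}\chi(sf(x))\,\sum_y\chi(sg(y))\,\sum_z\chi(sh(z)).
\]
Summing these products over $s\in\mathbb{F}_q^*$ gives exactly $\Phi_{f,g,h}$ as defined in \eqref{Phi}, which yields the claimed identity.

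There is no real obstacle here. The only point needing care is that the length of the subfield code equals that of the original code. This follows directly from the construction of $\mathcal{C}^{(q)}$, or equivalently from the trace representation in Lemma [\cite{ref14}, Theorem 2.5], where the codewords $\mathbf{c}_{a,b,c,d}$ visibly have $1+\#\mathcal{D}$ coordinates.
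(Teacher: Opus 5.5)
Your proposal is correct and follows essentially the same argument as the paper: write the length as $1+\#\mathcal{D}$, count $\mathcal{D}$ via orthogonality of the additive characters of $\mathbb{F}_q$, isolate the $s=0$ term to obtain $q^{3m-1}$, and factor the remaining terms into $\Phi_{f,g,h}$. Your explicit remark that the subfield code has the same length is a small clarification that the paper leaves implicit.
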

\begin{proof}
    Let $n$ be the length of the codes. From construction \eqref{matrix} it is clear that $n=\# \mathcal{D}+1$, due to the first column of the generator matrix. Hence, we compute the cardinality of the defining set $\mathcal{D}$ using the well-known results of orthogonality in character theory.
    \begin{align}
        \# \mathcal{D}&=\frac{1}{q}\sum_{(x,y,z)\in\mathbb{F}_{q^m}^3}\sum_{s\in\mathbb{F}_q}\chi(s(f(x)+g(y)+h(z)))\notag\\
        &=q^{3m-1}+\frac{1}{q}\sum_{s\in\mathbb{F}_q^*}\sum_{(x,y,z)\in\mathbb{F}_{q^m}^3}\chi(s(f(x)+g(y)+h(z)))\notag\\
        &=q^{3m-1}+\frac{1}{q}\sum_{s\in\mathbb{F}_q^*}\sum_{x\in\mathbb{F}_{q^m}}\chi(sf(x))\sum_{y\in\mathbb{F}_{q^m}}\chi(sg(y))\sum_{z\in\mathbb{F}_{q^m}}\chi(sh(z))\notag\\
        &=q^{3m-1}+\frac{1}{q}\Phi_{f,g,h}.\notag
    \end{align}
\end{proof}
Now we consider the weights $\text{wt}(\textbf{c}_{a,b,c,d})$, that is, the weights of the codewords in the subfield code $\mathcal{C}_{f,g,h}^{(q)}$. Let $\delta:\mathbb{F}_{q^m}\rightarrow \{0,1\}$ be a functions such that 
\begin{equation}\label{delta}
    \delta(x)=\begin{cases}
        0 \,\,\,\,\,\text{if} \,\,\text{Tr}_{q^m/q}(x)=0\\
        1 \,\,\,\,\,\text{otherwise}.
    \end{cases}\notag
\end{equation}

\begin{lemma}\label{lemma: weights}
    For any $(a,b,c,d)\in\mathbb{F}_q\times\mathbb{F}_{q^m}^3$, the weight of the codeword $\textbf{c}_{a,b,c,d}$ in $\mathcal{C}_{f,g,h}^{(q)}$ is given by:
    $$\text{wt}(\textbf{c}_{a,b,c,d})=\begin{cases}
        0\,\,\,\,\,\,\,\,\,\,\,\,\,\,\,\,\,\,\,\,\,\,\,\,\,\,\,\,\,\,\,\,\,\,\,\,\,\,\,\,\,\,\,\,\,\,\,\,\,\,\,\,\,\,\,\,\,\,\,\,\,\,\,\,\,\,\,\,\,\,\,\,\,\,\,\,\,\,\,\,\,\,\,\,\,\,\,\,\,\,\,\,\,\,\,\,\,\,\,\,\,\,\,\,\,\,\,\,\,\,\,\,\,\,\,\,\,\,\,\,\,\,\,\,\,\,\,\,\,\,\,\text{if} \,\,a=b=c=d=0,\\
        q^{3m-1}+\frac{1}{q}\Phi_{f,g,h} \,\,\,\,\,\,\,\,\,\,\,\,\,\,\,\,\,\,\,\,\,\,\,\,\,\,\,\,\,\,\,\,\,\,\,\,\,\,\,\,\,\,\,\,\,\,\,\,\,\,\,\,\,\,\,\,\,\,\,\,\,\,\,\,\,\,\,\,\,\,\,\,\,\,\,\,\,\,\,\,\,\,\,\,\,\,\,\,\,\,\,\, \text{if}\,\, a\not =0\,\,\text{and}\,\,b=c=d=0\\
        \delta(b)+q^{3m-2}(q-1)+\frac{1}{q^2}\big[ (q-1)\Phi_{f,g,h}-\Lambda_{a,b,c,d}^{f,g,h} \big]\,\,\,\,\,\,\text{otherwise}.  
    \end{cases}$$
\end{lemma}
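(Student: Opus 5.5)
The plan is to write the weight as the length minus the number of zero coordinates and to count the zeros with additive characters. Fix $(a,b,c,d)\ne 0$. The first coordinate $\operatorname{Tr}_{q^m/q}(b)$ contributes $\delta(b)$. The remaining coordinates are indexed by $\mathcal{D}$, so
$$\text{wt}(\textbf{c}_{a,b,c,d})=\delta(b)+\#\mathcal{D}-N_0,$$
where $N_0$ is the number of triples $(x,y,z)\in\mathcal{D}$ with $a+\operatorname{Tr}_{q^m/q}(bx+cy+dz)=0$. By Lemma~\ref{lemma: length}, $\#\mathcal{D}=q^{3m-1}+\frac1q\Phi_{f,g,h}$.

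The two special cases are immediate. For $a=b=c=d=0$ the codeword is zero. For $a\ne0$ and $b=c=d=0$, every coordinate indexed by $\mathcal{D}$ equals $a\neq 0$ and the first coordinate is $0$, so the weight is $\#\mathcal{D}$.

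For the general case, I would detect both conditions with orthogonality of $\chi$, using the variable $\omega$ for the defining equation of $\mathcal{D}$ and $s$ for the codeword coordinate:
$$N_0=\frac{1}{q^2}\sum_{(x,y,z)\in\mathbb{F}_{q^m}^3}\sum_{\omega\in\mathbb{F}_q}\chi\big(\omega(f(x)+g(y)+h(z))\big)\sum_{s\in\mathbb{F}_q}\chi\big(sa+s\operatorname{Tr}_{q^m/q}(bx+cy+dz)\big).$$
Using $\chi(s\operatorname{Tr}_{q^m/q}(u))=\chi'(su)$ for $s\in\mathbb{F}_q$, I then split the double sum over $(\omega,s)$ into four blocks.
\begin{itemize}
\item $\omega=s=0$ gives $q^{3m}$.
\item $\omega\ne0$, $s=0$ gives $\Phi_{f,g,h}$, since the sum factors exactly as in the proof of Lemma~\ref{lemma: length}.
\item $\omega=0$, $s\ne0$ gives $\sum_{s\in\mathbb{F}_q^*}\chi(sa)\sum_{x,y,z}\chi'(s(bx+cy+dz))$. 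Since $(b,c,d)\ne0$, the inner sum vanishes by orthogonality in $\mathbb{F}_{q^m}$.
\item $\omega\ne0$, $s\ne0$ is exactly $\Lambda^{f,g,h}_{a,b,c,d}$ as defined in \eqref{lambda}.
\end{itemize}
This yields $N_0=q^{3m-2}+\frac1{q^2}\big(\Phi_{f,g,h}+\Lambda^{f,g,h}_{a,b,c,d}\big)$. Subtracting from $\#\mathcal{D}$ gives $q^{3m-2}(q-1)+\frac1{q^2}\big[(q-1)\Phi_{f,g,h}-\Lambda^{f,g,h}_{a,b,c,d}\big]$, and adding $\delta(b)$ gives the stated formula.

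The one point needing care is the ``otherwise'' case $a=0$, $(b,c,d)\neq 0$, where the third block still vanishes, and the case $a\neq 0$, $(b,c,d)\neq0$, which is covered identically. So the only real requirement is $(b,c,d)\ne0$, which holds throughout the third case.
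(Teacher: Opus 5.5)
Your proposal is correct and follows essentially the same route as the paper. You write the weight as $\delta(b)+\#\mathcal{D}-N$, detect both the defining equation of $\mathcal{D}$ and the vanishing coordinate with additive characters, kill the $\omega=0$, $s\neq0$ term by orthogonality since $(b,c,d)\neq0$, and identify the $\omega,s\neq0$ term with $\Lambda^{f,g,h}_{a,b,c,d}$. The only cosmetic difference is that you expand the full $(\omega,s)$ double sum into four blocks at once, whereas the paper first sums over $s$ and reuses $\#\mathcal{D}$ from Lemma~\ref{lemma: length} for the $s=0$ part.
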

\begin{proof}
    For $a\in\mathbb{F}_q^*$ and $b,c,d\in\mathbb{F}_{q^m}$ we define the term $N_{a,b,c,d}$ as the cardinality of the following set.
    $$N_{a,b,c,d}=\#\{(x,y,z)\in\mathcal{D}\,:\,a+\text{Tr}_{q^m/q}(bx+cy+dz)=0\}.$$
    Clearly $N_{0,0,0,0}=\#\mathcal{D}$ and $N_{a,0,0,0}=0$ for $a\in\mathbb{F}_q^*$. For $b,c,d$ not all zero we have:
    \begin{align}
        qN_{a,b,c,d}&=\sum_{(x,y,z)\in\mathcal{D}}\sum_{s\in\mathbb{F}_q}\chi(s(a+\text{Tr}_{q^m/q}(bx+cy+dz)))\notag\\
    &=\#\mathcal{D}+\sum_{(x,y,z)\in\mathcal{D}}\sum_{s\in\mathbb{F}_q^*}\chi(s(a+\text{Tr}_{q^m/q}(bx+cy+dz)))\notag\\
&=\#\mathcal{D}+\sum_{(x,y,z)\in\mathcal{D}}\sum_{s\in\mathbb{F}_q^*}\chi(sa)\chi'(sbx+scy+sdz)\notag\\
    &=\#\mathcal{D}+\sum_{s\in\mathbb{F}_q^*}\chi(sa)\sum_{(x,y,z)\in\mathbb{F}_{q^m}^3}\big(\frac{1}{q}\sum_{\omega\in\mathbb{F}_q}\chi(\omega(f(x)+g(y)+h(z))\big )\chi'(sbx+scy+sdz)\notag\\
&=\#\mathcal{D}+\frac{1}{q}\sum_{s\in\mathbb{F}_q^*}\chi(sa)\sum_{(x,y,z)\in\mathbb{F}_{q^m}^3}\chi'(sbx+scy+sdz)+\frac{1}{q}\Lambda_{a,b,c,d}^{f,g,h}\notag\\
&=\#\mathcal{D}+\frac{1}{q}\Lambda_{a,b,c,d}^{f,g,h}.\notag
    \end{align}
    The last equality is due to the fact that, for $s\in\mathbb{F}_q^*$ and $b,c,d$ not all zero:
    $$\sum_{(x,y,z)\in\mathbb{F}_{q^m}^3}\chi'(sbx+scy+sdz)=$$
    $$=\big(\sum_{x\in\mathbb{F}_{q^m}}\chi'(sbx)\big)\big(\sum_{y\in\mathbb{F}_{q^m}}\chi'(scy)\big)\big(\sum_{z\in\mathbb{F}_{q^m}}\chi'(sdz)\big)=0.$$
    Finally, from the equality 
    \begin{equation}\label{weight formula}
        \text{wt}(\textbf{c}_{a,b,c,d})=\delta(b)+\#\mathcal{D}-N_{a,b,c,d}
    \end{equation}
    we derive our statement.
\end{proof}

Since $\mathcal{C}^*_{f,g,h}$ is the punctured code of $\mathcal{C}_{f,g,h}$ at the first coordinate, from the two last lemmas we can deduce similar results for the punctured codes.

\begin{corollary}\label{corollary: parameters of the punctured codes}
    The length of the codes $\mathcal{C}_{f,g,h}^*$ and $\mathcal{C}_{f,g,h}^{*(q)}$ is $n=q^{3m-1}+\frac{1}{q}\Phi_{f,g,h}$. In addition,
    $$\text{wt}(\textbf{c}^*_{a,b,c,d})=\text{wt}(\textbf{c}_{a,b,c,d})-\delta(b),$$
    where the codeword $\textbf{c}^*_{a,b,c,d}$ is obtained from $\textbf{c}^*_{a,b,c,d}$ by deleting the first entry. 
\end{corollary}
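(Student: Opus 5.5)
The corollary follows directly from Lemma~\ref{lemma: length} and Lemma~\ref{lemma: weights} by comparing the two generator matrices. By construction \eqref{matrix}, $G_{f,g,h}$ is obtained from $G^*_{f,g,h}$ by adjoining one extra column, namely $(0,1,0,0)^T$. Hence $\mathcal{C}^*_{f,g,h}$ is exactly $\mathcal{C}_{f,g,h}$ with the first coordinate deleted, and its length is $\#\mathcal{D}$. Lemma~\ref{lemma: length} gives $\#\mathcal{D}+1=1+q^{3m-1}+\frac{1}{q}\Phi_{f,g,h}$, so $\#\mathcal{D}=q^{3m-1}+\frac{1}{q}\Phi_{f,g,h}$. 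The subfield code of a code has the same length as the code itself: each column of $G^*_{f,g,h}$ is simply expanded over a basis of $\mathbb{F}_{q^m}/\mathbb{F}_q$. This settles the length of $\mathcal{C}^{*(q)}_{f,g,h}$ as well.

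For the weights, I would use the trace representation of subfield codes (\cite{ref14}, Theorem 2.5) applied to $G^*_{f,g,h}$. It gives
\[
\mathbf{c}^*_{a,b,c,d}=\left(a+\operatorname{Tr}_{q^m/q}(bx+cy+dz)\right)_{(x,y,z)\in\mathcal{D}},
\]
where $a$ ranges over $\mathbb{F}_q$ because the first row of $G^*_{f,g,h}$ is the all-one vector. This is precisely $\mathbf{c}_{a,b,c,d}$ with its first entry $\operatorname{Tr}_{q^m/q}(b)$ removed. Deleting one coordinate lowers the Hamming weight by $1$ exactly when that entry is nonzero, and by $0$ otherwise. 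By definition this amount is $\delta(b)$, so $\text{wt}(\mathbf{c}^*_{a,b,c,d})=\text{wt}(\mathbf{c}_{a,b,c,d})-\delta(b)$.

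Equivalently, one can read this off the proof of Lemma~\ref{lemma: weights}. There, the weight formula \eqref{weight formula} splits as $\delta(b)$ plus $\#\mathcal{D}-N_{a,b,c,d}$, and the second term counts exactly the nonzero entries indexed by $\mathcal{D}$. Combined with the explicit values from Lemma~\ref{lemma: weights}, this yields the weights of the punctured code.

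There is no real obstacle in this argument. The only point needing care is that the trace representation of the subfield code commutes with puncturing: $\operatorname{Tr}_{q^m/q}$ acts coordinatewise, so the subfield code of the punctured code equals the punctured subfield code, and the codeword parametrization is unchanged.
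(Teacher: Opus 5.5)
Your proposal is correct and follows essentially the same reasoning as the paper. The paper simply notes that $\mathcal{C}^*_{f,g,h}$ is $\mathcal{C}_{f,g,h}$ punctured at the first coordinate, whose entry in $\mathbf{c}_{a,b,c,d}$ is $\operatorname{Tr}_{q^m/q}(b)$, and reads the length and weights off Lemma~\ref{lemma: length} and Lemma~\ref{lemma: weights}; you spell out details the paper leaves implicit, such as that taking subfield codes commutes with puncturing, and you correctly read the typo in the statement as deleting the first entry of $\mathbf{c}_{a,b,c,d}$.
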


\section{Subfield codes from specific functions}
After having presented general results for the codes and subcodes involved in our framework, in this section we will specify some functions $f,g,h:\mathbb{F}_{q^m}\rightarrow\mathbb{F}_q$ in order to determine the values of the terms $\Phi_{f,g,h}$ and $\Lambda_{a,b,c,d}^{f,g,h}$.

\subsection{$f(x)=\text{Tr}_{q^m/q}(x)$, $g(y)=\text{Tr}_{q^m/q}(y^2)$, $h(z)=\text{Norm}_{q^m/q}(z)$}

Firstly, we consider the following three functions in order to build our codes $\mathcal{C}_{f,g,h}$, $\mathcal{C}_{f,g,h}^{(q)}$ and their punctured versions: $f(x)=\text{Tr}_{q^m/q}(x)$, $g(y)=\text{Tr}_{q^m/q}(y^2)$, $h(z)=\text{Norm}_{q^m/q}(z)$. We recall that the \emph{norm} of an element $z \in \mathbb{F}_{q^m}$ with respect to $\mathbb{F}_q$ is the element of $\mathbb{F}_q$ defined by
\[
\mathrm{Norm}_{{q^m}/q}(z) = z \cdot z^q \cdot z^{q^2} \cdots v^{z^{m-1}} = \prod_{i=0}^{m-1} z^{q^i}.
\]
Using the general results of the previous section, we determine the parameters of the codes.
\begin{lemma}\label{lemma: length A}
    For the latter choice of functions, the lengths of the codes $\mathcal{C}_{f,g,h}$ and $\mathcal{C}_{f,g,h}^{(q)}$ is $n=1+q^{3m-1}$. In addition, the dimension of the code $\mathcal{C}_{f,g,h}$ is equal to $4$. 
\end{lemma}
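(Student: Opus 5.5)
By Lemma~\ref{lemma: length}, the length is $n=1+q^{3m-1}+\frac{1}{q}\Phi_{f,g,h}$. So the length claim reduces to showing $\Phi_{f,g,h}=0$ for $f=\mathrm{Tr}_{q^m/q}$, $g(y)=\mathrm{Tr}_{q^m/q}(y^2)$, $h=\mathrm{Norm}_{q^m/q}$. Since $\Phi_{f,g,h}$ is a sum over $s\in\mathbb{F}_q^*$ of products of three factors, it suffices to show that one factor vanishes identically. The natural candidate is the factor attached to $f$:
\[
\sum_{x\in\mathbb{F}_{q^m}}\chi(s\,\mathrm{Tr}_{q^m/q}(x))=\sum_{x\in\mathbb{F}_{q^m}}\chi'(sx)=0 \qquad (s\neq 0),
\]
which follows from $\chi'=\chi\circ\mathrm{Tr}_{q^m/q}$ and the orthogonality of additive characters, because $\chi'_s$ is nontrivial. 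Hence every summand of $\Phi_{f,g,h}$ vanishes and $n=1+q^{3m-1}$. There is no need to evaluate the quadratic Weil sum or the norm sum.

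For the dimension, the code $\mathcal{C}_{f,g,h}$ is generated by the $4$ rows of $G_{f,g,h}$, so $\dim\le 4$. It remains to show that the rows are linearly independent over $\mathbb{F}_{q^m}$. Suppose $(a,b,c,d)\in\mathbb{F}_{q^m}^4$ satisfies $a\cdot 0+b\cdot1+c\cdot0+d\cdot0=0$ on the first column and $a+bx+cy+dz=0$ for all $(x,y,z)\in\mathcal{D}$. The first column gives $b=0$. Then the affine function $a+cy+dz$ must vanish on $\mathcal{D}$.

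To conclude, I exploit the freedom in $\mathcal{D}$: for any $y,z\in\mathbb{F}_{q^m}$, the value $-g(y)-h(z)$ lies in $\mathbb{F}_q$, and since the trace is surjective there exists $x$ with $\mathrm{Tr}_{q^m/q}(x)=-g(y)-h(z)$. So the projection of $\mathcal{D}$ onto the $(y,z)$-coordinates is all of $\mathbb{F}_{q^m}^2$. Hence $a+cy+dz=0$ for all $y,z$. Taking $y=z=0$ gives $a=0$, then $y=1,z=0$ gives $c=0$, and $y=0,z=1$ gives $d=0$. This proves the dimension is $4$.

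The only point requiring any care is this surjectivity argument. Note that the $x$-coordinate plays no role once $b=0$, which is exactly why the extra first column was added to the generator matrix. The remaining steps are routine.
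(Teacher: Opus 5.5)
Your proof is correct. The length computation is exactly the paper's: it reduces to $\Phi_{f,g,h}=0$ via Lemma~\ref{lemma: length} and kills every summand with the $f$-factor $\sum_{x}\chi(s\,\mathrm{Tr}_{q^m/q}(x))=\sum_x\chi'(sx)=0$, never evaluating the quadratic or norm sums. The paper, however, states the dimension claim without any argument, so your second half supplies a missing step rather than taking a different route. Reading off $b=0$ from the first column, and then using surjectivity of $\mathrm{Tr}_{q^m/q}$ to see that $\mathcal{D}$ projects onto all of $\mathbb{F}_{q^m}^2$ in the $(y,z)$-coordinates, gives a clean and complete proof that $\dim\mathcal{C}_{f,g,h}=4$. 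It in fact works for arbitrary $g,h$ with values in $\mathbb{F}_q$.

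One caveat concerns your closing remark, which is speculative and not needed. For $m\ge 2$ the first column is not what makes the rank $4$. Each fibre $\{x:\mathrm{Tr}_{q^m/q}(x)=t\}$ has $q^{m-1}\ge 2$ elements, so if $a+bx+cy+dz$ vanishes on $\mathcal{D}$ then $b=0$ already, even for the punctured matrix $G^*_{f,g,h}$. The extra column only matters in a tiny degenerate case such as $q=2$, $m=1$, where $\mathcal{D}$ is the plane $x+y+z=0$.
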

\begin{proof}
    From Lemma \ref{lemma: length}, we have that
    $$n=1+q^{3m-1}+\frac{1}{q}\Phi_{f,g,h},$$
    hence we have to determine the value of the parameter $\Phi_{f,g,h}$.
$$\Phi_{f,g,h}=\sum_{s\in\mathbb{F}_q^*}\sum_{x\in\mathbb{F}_{q^m}}\chi(s\text{Tr}_{q^m/q}(x))\sum_{y\in\mathbb{F}_{q^m}}\chi(s\text{Tr}_{q^m/q}(y^2))\sum_{z\in\mathbb{F}_{q^m}}\chi(s\text{Norm}_{q^m/q}(z))=0,$$
since, for $s\in\mathbb{F}_q^*$, by orthogonality of characters:
$$\sum_{x\in\mathbb{F}_{q^m}}\chi(s\text{Tr}_{q^m/q}(x))=\sum_{x\in\mathbb{F}_{q^m}}\chi'(sx)=0.$$
In conclusion, $n=1+q^{3m-1}+\frac{1}{q}\Phi_{f,g,h}=1+q^{3m-1}$.
\end{proof}
After having determined the value of the parameter $\Phi_{f,g,h}$, we compute the other $\Lambda_{a,b,c,d}^{f,g,h}$, for $(a,b,c,d)\in\mathbb{F}_q\times\mathbb{F}_{q^m}^3$. We start with an even $q$.
\begin{lemma}\label{lemma: Lambda A q even}
    With the latter choice of functions and $q$ even, for $(a,b,c,d)\in\mathbb{F}_q\times\mathbb{F}_{q^m}^3$ we have 
    \begin{equation}
        \Lambda_{a,b,c,d}^{f,g,h}=\begin{cases}
            0 \,\,\,\,\,\,\,\,\,\,\,\,\,\,\,\,\,\,\,\,\,\,\,\,\,\,\,\,\,\,\,\,\,\,\,\,\,\,\,\,\,\,\,\,\,\,\,\,\,\,\,\,\,\,\,\,\,\,\,\,\,\,\,\,\,\,\,\,\,\,\,\,\,\,\,\,\,\,\,\,\,\,\,\,\,\,\,\,\,\,\,\,\,\,\,\,\,\,\,\,\,\,\,\,\,\,\,\,\,\,\,\,\,\,\,\,\,\,\,\,\,\,\,\,\,\,\,\,\,\,\,\,\,\,\,\,\,\text{if} \,\,b\in \mathbb{F}_{q^m}\setminus\mathbb{F}_{q}^*\,\,\text{or}\,\,c\in \mathbb{F}_{q^m}\setminus\mathbb{F}_{q}^*\\
            q^{2m}\sum_{z\in\mathbb{F}_{q^m}}\chi(\frac{b}{c^2}(a+\text{Tr}_{q^m/q}(dz)-b\text{Norm}_{q^m/q}(z)))\,\,\,\,\,\text{otherwise}.
        \end{cases}\notag
    \end{equation}
\end{lemma}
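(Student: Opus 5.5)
The plan is to expand $\Lambda_{a,b,c,d}^{f,g,h}$ from \eqref{lambda} with $f=\text{Tr}_{q^m/q}$, $g=\text{Tr}_{q^m/q}(y^2)$, $h=\text{Norm}_{q^m/q}$. Since $\omega\in\mathbb{F}_q$, we have $\chi(\omega\,\text{Tr}_{q^m/q}(u))=\chi'(\omega u)$, so the inner triple sum factors as
$$\Big(\sum_{x}\chi'((\omega+sb)x)\Big)\Big(\sum_{y}\chi'(\omega y^2+scy)\Big)\Big(\sum_{z}\chi(\omega\,\text{Norm}_{q^m/q}(z))\chi'(sdz)\Big),$$
with all sums over $\mathbb{F}_{q^m}$. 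We then evaluate the first two factors in closed form and use them to pin down $(s,\omega)$.

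\emph{The $x$-sum.} By orthogonality it equals $q^m$ if $\omega=-sb=sb$ (we are in characteristic $2$), and $0$ otherwise. Because $\omega,s\in\mathbb{F}_q^*$, a nonzero contribution forces $b=\omega/s\in\mathbb{F}_q^*$. Hence $\Lambda=0$ whenever $b\in\mathbb{F}_{q^m}\setminus\mathbb{F}_q^*$, which includes $b=0$. When $b\in\mathbb{F}_q^*$, the sum over $\omega$ collapses to the single value $\omega=sb$.

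\emph{The $y$-sum.} This is the step needing the most care. Apply Lemma \ref{Character and quadratic polynomial}(2) over $\mathbb{F}_{q^m}$ with the canonical character, $a_2=\omega$, $a_1=sc$, $a_0=0$. The sum equals $q^m$ if $\omega=s^2c^2$ and $0$ otherwise. If $c=0$ the condition fails since $\omega\neq0$. Otherwise $c^2=\omega/s^2\in\mathbb{F}_q$; since squaring is a bijection on $\mathbb{F}_{q^m}$ and on $\mathbb{F}_q$, this forces $c\in\mathbb{F}_q^*$. So $\Lambda=0$ also when $c\in\mathbb{F}_{q^m}\setminus\mathbb{F}_q^*$. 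Combining with $\omega=sb$ gives $sb=s^2c^2$, so only $s=b/c^2$ survives, with $\omega=sb=b^2/c^2$.

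\emph{The $z$-sum.} Substituting $s=b/c^2$ and $\omega=sb$, together with the factor $\chi(sa)$, the remaining sum is $\sum_z\chi\big(s(a+\text{Tr}_{q^m/q}(dz))+sb\,\text{Norm}_{q^m/q}(z)\big)$. Since $-b=b$ in characteristic $2$, this equals $\sum_z \chi\big(\tfrac{b}{c^2}(a+\text{Tr}_{q^m/q}(dz)-b\,\text{Norm}_{q^m/q}(z))\big)$. Multiplying by the factor $q^m\cdot q^m=q^{2m}$ from the $x$- and $y$-sums gives the stated formula.

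The main obstacle is the bookkeeping in the $y$-sum: applying the even-characteristic Weil sum over $\mathbb{F}_{q^m}$ rather than $\mathbb{F}_q$, deducing $c\in\mathbb{F}_q^*$ from $c^2\in\mathbb{F}_q$, and checking that the two constraints $\omega=sb$ and $\omega=s^2c^2$ leave exactly one pair $(s,\omega)$.
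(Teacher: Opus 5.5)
Your proposal is correct and follows essentially the same route as the paper. You factor $\Lambda_{a,b,c,d}^{f,g,h}$ into the $x$-, $y$- and $z$-sums and use orthogonality to force $\omega=sb$, so $b\in\mathbb{F}_q^*$. You then apply the even-characteristic Weil sum formula to the $y$-sum to force $b=sc^2$, so $c\in\mathbb{F}_q^*$ and $s=b/c^2$, and collect the remaining $z$-sum. Your explicit argument that $c^2\in\mathbb{F}_q$ forces $c\in\mathbb{F}_q$ fills in a step the paper leaves implicit.
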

\begin{proof}
    For general values of $(a,b,c,d)\in\mathbb{F}_q\times\mathbb{F}_{q^m}^3$:
    \begin{align}
\Lambda_{a,b,c,d}^{f,g,h}&=\sum_{s\in\mathbb{F}_q^*}\chi(sa)\sum_{\omega\in\mathbb{F}_q^*}\sum_{(x,y,z)\in\mathbb{F}_{q^m}^3}\chi(\omega\text{Tr}_{q^m/q}(x)+\omega \text{Tr}_{q^m/q}(y^2)+\omega \text{Norm}_{q^m/q}(z))\chi'(sbx+scy+sdz)\notag\\
        &=\sum_{s\in\mathbb{F}_q^*}\chi(sa)\sum_{\omega\in\mathbb{F}_q^*}\sum_{x\in\mathbb{F}_{q^m}}\chi'((\omega+sb)x)\sum_{y\in\mathbb{F}_{q^m}}\chi'(\omega y^2+scy)\sum_{z\in\mathbb{F}_{q^m}} \chi(\omega\text{Norm}_{q^m/q}(z)+\text{Tr}_{q^m/q}(sdz))\notag
    \end{align}
    If $b\in \mathbb{F}_{q^m}\setminus\mathbb{F}_{q}^*$, then $\omega+sb\not =0$, which means that $\sum_{w\in\mathbb{F}_{q^m}}\chi'((\omega+sb)x)=0$ and hence $\Lambda_{a,b,c,d}^{f,g,h}=0$.
    \\\\
    If $b\in\mathbb{F}_q^*$:
    $$\Lambda_{a,b,c,d}^{f,g,h}=q^m\sum_{\substack{(s,\omega)\in\mathbb{F}_q^{*2}\\\omega+sb=0}}\sum_{(y,z)\in\mathbb{F}_{q^m}^2}\chi'(-sby^2+scy)\sum_{z\in\mathbb{F}_{q^m}}\chi(s(-b\text{Norm}_{q^m/q}(z)+\text{Tr}_{q^m/q}(dz)+a)).$$
    Since $q$ is even, if $c\in \mathbb{F}_{q^m}\setminus\mathbb{F}_{q}^*$, then $b\not =sc^2$ and hence $\Lambda_{a,b,c,d}^{f,g,h}=0$.
    In the remaining cases:
    \begin{align}
        \Lambda_{a,b,c,d}^{f,g,h}&=q^{2m}\sum_{\substack{s\in\mathbb{F}_q^*\\b=sc^2}}\sum_{z\in\mathbb{F}_{q^m}}\chi(s(a+\text{Tr}_{q^m/q}(dz)-b\text{Norm}_{q^m/q}(z))\notag\\
        &=q^{2m}\sum_{z\in\mathbb{F}_{q^m}}\chi(\frac{b}{c^2}(a+\text{Tr}(dz)-b\text{Norm}(z))),\notag
    \end{align}
    as we wanted to prove.
\end{proof}
In the binary case, that is for $q=2$, the quantity $\Lambda_{a,b,c,d}^{f,g,h}$ is non-zero only when $b = c = 1$, and in this situation we have
    $$\Lambda_{a,1,1,d}^{f,g,h}=2^{2m}\sum_{z\in\mathbb{F}_{2^m}}\chi(a+\text{Tr}_{2^m/2}(dz)+\text{Norm}_{2^m/2}(z)).$$
In addition, for $m=2$, we obtain the following families of codes.
\begin{prop}\label{prop: code over F4}
    Let $f(x)=\text{Tr}_{4/2}(x)$, $g(y)=\text{Tr}_{4/2}(y^2)$, $h(z)=\text{Norm}_{4/2}(z)$. Then $\mathcal{C}_{f,g,h}$ is a $[33,4]$ linear code over $\mathbb{F}_4$. The subfield code $\mathcal{C}_{f,g,h}^{(2)}$ is a $[33,7,8]$ binary linear code with the following weight distribution
    \begin{center}
\begin{tabular}{c|c}
\hline Weight & Multiplicity \\
\hline 0 & 1 \\
$32$ & $1$ \\
$24$ & $4$ \\
$17$ & $64$ \\
$16$ & $54$ \\
$8$ & $4$\\
\hline
\end{tabular}
\end{center} 
The dual code $\mathcal{C}_{f,g,h}^{(2)\perp}$ is a $[33,26,3]$ binary linear code  almost dimensionally optimal with respect to the Griesmer bound.
The subfield code of the punctured code $\mathcal{C}_{f,g,h}^{*(2)}$ is a $[32,7,8]$ binary linear code with the following weight distribution
    \begin{center}
\begin{tabular}{c|c}
\hline Weight & Multiplicity \\
\hline 0 & 1 \\
$32$ & $1$ \\
$24$ & $4$ \\
$16$ & $118$ \\
$8$ & $4$\\
\hline
\end{tabular}
\end{center} 
The dual code $\mathcal{C}_{f,g,h}^{*(2)\perp}$ is a $[32,25,4]$ binary linear code  distance optimal with respect to the Sphere Packing bound.
\end{prop}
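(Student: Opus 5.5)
The plan is to specialise the general results of Section~3 to $q=2$, $m=2$, using the closed form of $\Lambda^{f,g,h}_{a,b,c,d}$ for the binary case that follows Lemma~\ref{lemma: Lambda A q even}. By Lemma~\ref{lemma: length A} we have $\Phi_{f,g,h}=0$, so the length is $n=1+2^{5}=33$ and $\mathcal{C}_{f,g,h}$ has dimension $4$ over $\mathbb{F}_4$. Lemma~\ref{lemma: weights} then reduces to three cases:
\begin{itemize}
\item $\mathrm{wt}(\mathbf{c}_{a,b,c,d})=32$ when $a=1$ and $b=c=d=0$;
\item $\mathrm{wt}(\mathbf{c}_{a,b,c,d})=\delta(b)+16-\tfrac14\Lambda_{a,b,c,d}^{f,g,h}$ for every other nonzero $(a,b,c,d)$;
\item $\Lambda_{a,b,c,d}^{f,g,h}=0$ unless $b=c=1$.
\end{itemize}

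The one genuine computation is the sum $S(d)=\sum_{z\in\mathbb{F}_4}(-1)^{\mathrm{Tr}_{4/2}(dz)+\mathrm{Norm}_{4/2}(z)}$. Here $\mathrm{Norm}_{4/2}(z)=z^3$, which equals $1$ for $z\neq 0$. Hence
$$S(d)=1-\sum_{z\in\mathbb{F}_4^*}(-1)^{\mathrm{Tr}_{4/2}(dz)}.$$
For $d=0$ this gives $S(0)=-2$. For $d\neq 0$, the product $dz$ runs over $\mathbb{F}_4^*$, where $\mathrm{Tr}_{4/2}$ takes the values $0,1,1$, so $S(d)=2$. Therefore $\Lambda_{a,1,1,d}^{f,g,h}=16(-1)^aS(d)=\pm 32$. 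Since $\mathrm{Tr}_{4/2}(1)=0$ we have $\delta(1)=0$, so these codewords have weight $16\mp 8$.

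Counting the eight choices of $(a,d)$ with $b=c=1$:
\begin{itemize}
\item weight $24$: $(a,d)=(0,0)$, or $a=1$ with $d\neq 0$, giving $4$ codewords;
\item weight $8$: $(a,d)=(1,0)$, or $a=0$ with $d\neq 0$, giving $4$ codewords.
\end{itemize}
There are $2\cdot 4^3-1=127$ nonzero parameter tuples. Removing the single codeword of weight $32$ and the $8$ codewords above leaves $118$ tuples, each of weight $16+\delta(b)$. We have $\delta(b)=1$ exactly when $b\in\{\omega,\omega^2\}$, which accounts for $2\cdot 2\cdot 16=64$ tuples. These give the $64$ codewords of weight $17$ and the $54$ of weight $16$.

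Since no nonzero tuple yields weight $0$, the map $(a,b,c,d)\mapsto\mathbf{c}_{a,b,c,d}$ is injective. The subfield code therefore has $2^7$ codewords, so it has dimension $7$ and minimum distance $8$. By Corollary~\ref{corollary: parameters of the punctured codes}, deleting $\delta(b)$ turns weights $17$ into $16$ and gives the $[32,7,8]$ punctured distribution, with $64+54=118$ codewords of weight $16$.

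For the duals, the dimensions $33-7=26$ and $32-7=25$ are immediate. The minimum distances come from the Pless power moments applied with $k=7$ and the weight distributions just obtained. Solving the first moment gives $B_1$, the second gives $B_2$, and the third gives $B_3$, respectively $B_4$ for the punctured dual (where $B_3$ should vanish since all punctured weights are even). I expect $B_1=B_2=0$ and $B_3>0$ for the dual of $\mathcal{C}_{f,g,h}^{(2)}$, giving $d^\perp=3$. For the dual of $\mathcal{C}_{f,g,h}^{*(2)}$, I expect $B_1=B_2=B_3=0$ and $B_4>0$, giving $d^\perp=4$.

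The optimality claims are then checks against the bounds. For the punctured dual, a $[32,25,5]$ code would require $2^{25}(1+32+\binom{32}{2})=2^{25}\cdot 529\le 2^{32}$, which is false, so $d^\perp=4$ is distance optimal with respect to the Sphere Packing bound. For the $[33,26,3]$ dual, I would test whether a $[33,27,3]$ code is excluded. This step is where I expect the main difficulty: neither the Sphere Packing nor the Griesmer bound obviously rules out dimension $27$. I would first recheck the Griesmer computation carefully, and if it does not exclude $[33,27,3]$, fall back on known tables of best binary codes to justify the ``almost optimal'' claim.
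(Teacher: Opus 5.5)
Your computation of the two weight distributions follows the paper's route: specialise Lemma~\ref{lemma: length A}, Lemma~\ref{lemma: weights} and the binary form of $\Lambda^{f,g,h}_{a,1,1,d}$, then apply Corollary~\ref{corollary: parameters of the punctured codes}. You carry it out explicitly where the paper only says ``by computing the explicit values of $\Lambda$''. All your numbers are right: $S(0)=-2$, $S(d)=2$ for $d\neq 0$, $\Lambda^{f,g,h}_{a,1,1,d}=\pm 32$, the split $4/4/64/54$, and dimension $7$ by injectivity.

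For the duals, be careful with the cubic Pless moment as printed in the preliminaries. For $q=2$ its $B_0$-term should be $n^2(n+3)$, not $n(n^2+3n+1)$; the printed version gives the non-integer value $B_3=129/6$ at $n=33$. With the correct term you get $B_1=B_2=0$ and $B_3=16$ for $\mathcal{C}^{(2)\perp}_{f,g,h}$, and $B_1=B_2=B_3=0$ for the punctured dual.

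You can also bypass the moments entirely. The coordinate functionals are $\phi_0=\mathrm{Tr}_{4/2}(b)$ and $\phi_{(x,y,z)}=a+\mathrm{Tr}_{4/2}(bx+cy+dz)$.
\begin{itemize}
\item Looking at the $a$-term, any vanishing sum contains an even number of the $\phi_{(x,y,z)}$, so there are no dual words of weight $1$ or $2$.
\item The only weight-$3$ relations are $\phi_0+\phi_{(x,y,z)}+\phi_{(x+1,y,z)}=0$. Both points lie in $\mathcal{D}$ because $\mathrm{Tr}_{4/2}(1)=0$, which gives exactly $16$ words.
\item Sets of the form $\{(x,y,z),(x+1,y,z),(x',y',z'),(x'+1,y',z')\}$ give weight-$4$ words of the punctured dual.
\end{itemize}
Incidentally, $B_3=0$ for the punctured dual holds because $\mathbf{1}\in\mathcal{C}^{*(2)}_{f,g,h}$ (the weight-$32$ word), which forces every dual word to have even weight. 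The code's own weights being even would only put $\mathbf{1}$ in the dual, which is the wrong direction.

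The step you could not close cannot be closed as stated. For $d=3$ the Griesmer bound reads $n\ge 3+2+(k-2)=k+3$, so at $n=33$ it permits $k\le 30$ and certifies nothing near $k=26$. Also, for \emph{almost} dimension-optimality the relevant test is whether $[33,28,3]$ is excluded, not $[33,27,3]$.

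The Sphere Packing bound does exclude $[33,28,3]$, since $2^{28}(1+33)>2^{33}$. Meanwhile a $[33,27,3]$ code exists (shorten the $[63,57,3]$ Hamming code). Hence $\mathcal{C}^{(2)\perp}_{f,g,h}$ is almost dimensionally optimal with respect to the Sphere Packing bound. That is the form of the claim in Theorem~\ref{thm: parameters in the bent case}, and ``Griesmer'' in this statement looks like a slip.

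The paper's own proof never verifies any of the optimality claims; it only says the dual distances follow from the Pless moments. Your Sphere Packing check $2^{25}\cdot 529>2^{32}$ for the $[32,25,4]$ dual is correct.
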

\begin{proof}
    The parameters of the code are determined using Lemma \ref{lemma: length A} in the specific case $q=m=2$. It remains to compute the weights of the subfield code via Lemma \ref{lemma: weights}. For $\textbf{c}_{a,b,c,d}$, a codeword in the code parameterized by $a\in\mathbb{F}_2$ and $b,c,d\in\mathbb{F}_4$, we have
    \begin{align}\notag
        \text{wt}(\textbf{c}_{a,b,c,d})&=\begin{cases}\notag
            0 \,\,\,\,\,\,\,\,\,\,\,\,\,\,\,\,\,\,\,\,\,\,\,\,\,\,\,\,\,\,\,\,\,\,\,\,\,\,\,\,\,\,\,\,\,\,\,\,\,\,\,\,\,\,\text{if} \,\,\,a=b=c=d=0\\
            32 \,\,\,\,\,\,\,\,\,\,\,\,\,\,\,\,\,\,\,\,\,\,\,\,\,\,\,\,\,\,\,\,\,\,\,\,\,\,\,\,\,\,\,\,\,\,\,\,\,\,\,\text{if} \,\,\,a\not=0, b=c=d=0\\
           \delta(b)+16-\frac{1}{4}\Lambda_{a,b,c,d}^{f,g,h} \,\,\,\,\text{if} \,\,\,b=c=1\\
            \delta(b)+16 \,\,\,\,\,\,\,\,\,\,\,\,\,\,\,\,\,\,\,\,\,\,\,\,\,\,\,\,\,\,\,\,\,\text{otherwise}      
            \end{cases}
        \end{align}
        Hence, by computing the explicit values of $\Lambda_{a,b,c,d}^{f,g,h}$ in this case over $\mathbb{F}_4$, we obtain the weights of the code and, by Corollary \ref{corollary: parameters of the punctured codes}, the weights of the subfield code of the punctured code. The distances of the duals can be computed via the Pless power moments. 
\end{proof}
Finally, for $q$ odd and $m$ even, the value of the parameter $\Lambda_{a,b,c,d}^{f,g,h}$ is the following.
\begin{lemma}\label{lemma: Lambda A q odd m even}
     For the latter choice of functions, $q$ odd and $m$ even, for $(a,b,c,d)\in\mathbb{F}_q\times\mathbb{F}_{q^m}^3$ we have
     $$\Lambda_{a,b,c,d}^{f,g,h}=G(\eta',\chi')q^m\sum_{s\in\mathbb{F}_q^*}\sum_{z\in\mathbb{F}_{q^m}}\chi(s(a+\text{Tr}_{q^m/q}(\frac{c^2}{4b})+\text{Tr}_{q^m/q}(dz)-b\text{Norm}_{q^m/q}(z))).$$
\end{lemma}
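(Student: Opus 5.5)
The plan is to redo the factorization from the proof of Lemma~\ref{lemma: Lambda A q even}, replacing the even-characteristic evaluation of the $y$-sum by the odd-characteristic Weil sum of Lemma~\ref{Character and quadratic polynomial}. Substituting $f,g,h$ into \eqref{lambda} and using $\chi'=\chi\circ\text{Tr}_{q^m/q}$, I write
\[
\Lambda_{a,b,c,d}^{f,g,h}=\sum_{s\in\mathbb{F}_q^*}\chi(sa)\sum_{\omega\in\mathbb{F}_q^*}\Big(\sum_{x}\chi'((\omega+sb)x)\Big)\Big(\sum_{y}\chi'(\omega y^2+scy)\Big)\Big(\sum_{z}\chi(\omega\,\text{Norm}_{q^m/q}(z)+\text{Tr}_{q^m/q}(sdz))\Big).
\]

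\textbf{Step 1: the $x$-sum.} By orthogonality this sum equals $q^m$ if $\omega=-sb$ and $0$ otherwise. Hence $\Lambda_{a,b,c,d}^{f,g,h}=0$ unless $b\in\mathbb{F}_q^*$. In that case the $\omega$-sum collapses to the single term $\omega=-sb$, which lies in $\mathbb{F}_q^*$. So the formula in the statement is to be read for $b\in\mathbb{F}_q^*$, which is also needed for $\frac{c^2}{4b}$ to make sense; I would record the vanishing for $b\notin\mathbb{F}_q^*$ explicitly, exactly as in the even case.

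\textbf{Step 2: the $y$-sum.} I apply Lemma~\ref{Character and quadratic polynomial}(1) over $\mathbb{F}_{q^m}$ with the canonical character $\chi'$, taking $a_2=\omega=-sb$, $a_1=sc$ and $a_0=0$. This gives
\[
\chi'\!\Big(-\frac{s^2c^2}{4(-sb)}\Big)\,\eta'(-sb)\,G(\eta',\chi')=\chi'\!\Big(\frac{sc^2}{4b}\Big)\,\eta'(-sb)\,G(\eta',\chi').
\]
The main point is that $\eta'(-sb)=1$. Since $m$ is even, $\mathbb{F}_{q^2}\subseteq\mathbb{F}_{q^m}$, so every element of $\mathbb{F}_q^*$ is a square in $\mathbb{F}_{q^m}$. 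Equivalently, $\eta'$ restricted to $\mathbb{F}_q^*$ is $\eta\circ\text{Norm}$, and for $u\in\mathbb{F}_q^*$ we have $\text{Norm}_{q^m/q}(u)=u^m$, a square. Since $s\in\mathbb{F}_q$, we also have $\chi'\big(\frac{sc^2}{4b}\big)=\chi\big(s\,\text{Tr}_{q^m/q}(\frac{c^2}{4b})\big)$.

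\textbf{Step 3: the $z$-sum.} With $\omega=-sb$ and $s\in\mathbb{F}_q$, we have $\text{Tr}_{q^m/q}(sdz)=s\,\text{Tr}_{q^m/q}(dz)$, so the $z$-factor becomes
\[
\sum_{z}\chi\big(s(\text{Tr}_{q^m/q}(dz)-b\,\text{Norm}_{q^m/q}(z))\big).
\]

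\textbf{Assembly.} Multiplying $\chi(sa)$, the factor $q^m$ from Step~1, the factor $G(\eta',\chi')\,\chi\big(s\,\text{Tr}(\frac{c^2}{4b})\big)$ from Step~2, and the $z$-sum from Step~3, then merging all characters into a single $\chi(s(\cdots))$ and summing over $s\in\mathbb{F}_q^*$, gives exactly the stated expression.

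The only delicate points are the quadratic-character evaluation $\eta'(-sb)=1$, which uses that $m$ is even, and keeping the case $b\notin\mathbb{F}_q^*$ (where $\Lambda_{a,b,c,d}^{f,g,h}=0$) separate. Everything else is bookkeeping with additive characters.
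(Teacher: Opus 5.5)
Your proposal is correct and follows the paper's route: collapse the $x$-sum to $\omega=-sb$, evaluate the $y$-sum with the odd-characteristic Weil sum, then merge the additive characters into a single $\chi(s(\cdots))$. You are more careful than the paper in two places. First, the paper silently drops the factor $\eta'(-sb)$, which you justify by noting that $m$ even makes every element of $\mathbb{F}_q^*$ a square in $\mathbb{F}_{q^m}$. Second, the paper leaves implicit that the formula applies only for $b\in\mathbb{F}_q^*$, with $\Lambda_{a,b,c,d}^{f,g,h}=0$ otherwise, whereas you state it.
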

\begin{proof}
We apply Lemma \ref{Character and quadratic polynomial} to evaluate the Weil sum $\sum_{y\in\mathbb{F}_{q^m}}\chi'(-sby^2+scy)$. This yields the following computation:
    \begin{align}
        \Lambda_{a,b,c,d}^{f,g,h}&=q^m\sum_{s\in\mathbb{F}_q^*}\chi(sa)\sum_{y\in\mathbb{F}_{q^m}}\chi'(-sby^2+scy)\sum_{z\in\mathbb{F}_{q^m}}\chi(s(\text{Tr}_{q^m/q}(dz)-b\text{Norm}_{q^m/q}(z))\notag\\
        &=q^m\sum_{s\in\mathbb{F}_q^*}\chi'(\frac{sc^2}{4b})G(\eta',\chi')\sum_{z\in\mathbb{F}_{q^m}}\chi(s(a+\text{Tr}_{q^m/q}(dz)-b\text{Norm}_{q^m/q}(z)))\notag\\
        &=G(\eta',\chi')q^m\sum_{s\in\mathbb{F}_q^*}\sum_{z\in\mathbb{F}_{q^m}}\chi(s(a+\text{Tr}_{q^m/q}(\frac{c^2}{4b})+\text{Tr}_{q^m/q}(dz)-b\text{Norm}_{q^m/q}(z))).\nonumber
    \end{align}
\end{proof}

\subsection{$f(x)=\text{Tr}_{q^m/q}(x)$, $g(y)=\text{Tr}_{q^m/q}(y^2)$, $h(z)$ generic}

Secondly, we consider the following three functions in order to build our codes $\mathcal{C}_{f,g,h}$, $\mathcal{C}_{f,g,h}^{(q)}$ and their punctured versions: $f(x)=\text{Tr}_{q^m/q}(x)$, $g(y)=\text{Tr}_{q^m/q}(y^2)$ and $h(z)$ is a generic function from $\mathbb{F}_{q^m}$ to $\mathbb{F}_q$. Using the general results from the previous section, we determine the code parameters.
\begin{lemma}\label{lemma: length B}
    For the latter choice of functions, the lengths of the codes $\mathcal{C}_{f,g,h}$ and $\mathcal{C}_{f,g,h}^{(q)}$ are $n=1+q^{3m-1}$. In addition, the dimension of the code $\mathcal{C}_{f,g,h}$ is equal to $4$. 
\end{lemma}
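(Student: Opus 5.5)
The plan follows the proof of Lemma \ref{lemma: length A}. The length is read off from Lemma \ref{lemma: length}. The dimension is checked directly from the shape of the generator matrix \eqref{matrix}.

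For the length, Lemma \ref{lemma: length} gives $n=1+q^{3m-1}+\frac{1}{q}\Phi_{f,g,h}$, so it suffices to show $\Phi_{f,g,h}=0$. In the definition \eqref{Phi}, each summand over $s\in\mathbb{F}_q^*$ factors as a product of three inner sums. The factor coming from $f$ is
$$\sum_{x\in\mathbb{F}_{q^m}}\chi(s\,\text{Tr}_{q^m/q}(x))=\sum_{x\in\mathbb{F}_{q^m}}\chi'(sx)=0,$$
because $\chi'=\chi\circ\text{Tr}_{q^m/q}$ and $sx$ runs over all of $\mathbb{F}_{q^m}$ when $s\neq0$. So every summand vanishes, whatever $g$ and $h$ are. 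Hence $\Phi_{f,g,h}=0$ for an arbitrary $h$, and $n=1+q^{3m-1}$. The subfield code has the same length as $\mathcal{C}_{f,g,h}$, since taking the subfield code does not change the number of columns.

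For the dimension, I will show that the four rows of $G_{f,g,h}$ are linearly independent over $\mathbb{F}_{q^m}$. Suppose $(\alpha,\beta,\gamma,\epsilon)$ gives the zero combination. The first column is $(0,1,0,0)^T$, which forces $\beta=0$. The remaining conditions say that $\alpha+\gamma y+\epsilon z=0$ for every $(x,y,z)\in\mathcal{D}$.

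The set $\mathcal{D}$ is large in the right way. For any $y,z\in\mathbb{F}_{q^m}$, the equation $\text{Tr}_{q^m/q}(x)=-g(y)-h(z)$ has $q^{m-1}\ge1$ solutions $x$, because the trace is surjective. So the projection of $\mathcal{D}$ onto the $(y,z)$-coordinates is all of $\mathbb{F}_{q^m}^2$. The affine form $\alpha+\gamma y+\epsilon z$ therefore vanishes identically on $\mathbb{F}_{q^m}^2$, which gives $\alpha=\gamma=\epsilon=0$. Thus the dimension is $4$.

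There is no real obstacle here. The only point needing care is this surjectivity argument: with $h$ generic, the independence of the rows indexed by $y$ and $z$ cannot rely on any structure of $h$, and must instead come from $f$ being the trace.
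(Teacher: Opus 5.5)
Your length computation is the same as the paper's. The paper shows $\Phi_{f,g,h}=0$ because, for every $s\in\mathbb{F}_q^*$, the factor $\sum_{x}\chi(s\,\text{Tr}_{q^m/q}(x))=\sum_x\chi'(sx)$ vanishes, whatever $g$ and $h$ are. The paper's proof stops there and never justifies the dimension claim. Your dimension argument is correct and supplies that missing step: the first column $(0,1,0,0)^T$ forces the coefficient of the $x$-row to be zero, and surjectivity of the trace makes the projection of $\mathcal{D}$ onto the $(y,z)$-coordinates all of $\mathbb{F}_{q^m}^2$.
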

\begin{proof}
    From Lemma \ref{lemma: length}, we have that
    $$n=1+q^{3m-1}+\frac{1}{q}\Phi_{f,g,h},$$
    hence we have to determine the value of the parameter $\Phi_{f,g,h}$.
$$\Phi_{f,g,h}=\sum_{s\in\mathbb{F}_q^*}\sum_{x\in\mathbb{F}_{q^m}}\chi(s\text{Tr}_{q^m/q}(x))\sum_{y\in\mathbb{F}_{q^m}}\chi(s\text{Tr}_{q^m/q}(y^2))\sum_{z\in\mathbb{F}_{q^m}}\chi(sh(z))=0,$$
since, for $s\in\mathbb{F}_q^*$, by orthogonality of characters:
$$\sum_{x\in\mathbb{F}_{q^m}}\chi(s\text{Tr}_{q^m/q}(x))=\sum_{x\in\mathbb{F}_{q^m}}\chi'(sx)=0.$$
In conclusion, $n=1+q^{3m-1}+\frac{1}{q}\Phi_{f,g,h}=1+q^{3m-1}$.
\end{proof}
After having determined the value of the parameter $\Phi_{f,g,h}$, we compute the other $\Lambda_{a,b,c,d}^{f,g,h}$, for $(a,b,c,d)\in\mathbb{F}_q\times\mathbb{F}_{q^m}^3$ in the binary case, that is, for $q=2$.
\begin{lemma}\label{lemma: Lambda B q=2}
    With the latter choice of functions and $q=2$, for $(a,b,c,d)\in\mathbb{F}_2\times\mathbb{F}_{2^m}^3$ we have 
    \begin{equation}
        \Lambda_{a,b,c,d}^{f,g,h}=\begin{cases}
            (-1)^a2^{2m}\mathcal{W}_h(d) \,\,\,\,\text{if} \,\,b=1\,\,\text{and}\,\,c=1\\
            0\,\,\,\,\,\,\,\,\,\,\,\,\,\,\,\,\,\,\,\,\,\,\,\,\,\,\,\,\,\,\,\,\,\,\,\,\,\,\,\,\,\,\,\text{otherwise},
        \end{cases}\notag
    \end{equation}
    where $\mathcal{W}_h$ is the Walsh transform of the function $h$.
\end{lemma}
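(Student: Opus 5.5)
Since $q=2$, we have $\mathbb{F}_q^*=\{1\}$, so in \eqref{lambda} both $s$ and $\omega$ equal $1$, and $\chi(t)=(-1)^t$ on $\mathbb{F}_2$. We will substitute $f(x)=\text{Tr}_{2^m/2}(x)$, $g(y)=\text{Tr}_{2^m/2}(y^2)$ and the generic $h$ into \eqref{lambda}. Using $\chi'=\chi\circ\text{Tr}_{2^m/2}$, the triple sum factors as
$$\Lambda_{a,b,c,d}^{f,g,h}=(-1)^a\Big(\sum_{x\in\mathbb{F}_{2^m}}\chi'((1+b)x)\Big)\Big(\sum_{y\in\mathbb{F}_{2^m}}\chi'(y^2+cy)\Big)\Big(\sum_{z\in\mathbb{F}_{2^m}}(-1)^{h(z)+\text{Tr}_{2^m/2}(dz)}\Big).$$
This is exactly the factorisation used in the proof of Lemma \ref{lemma: Lambda A q even}, with $\text{Norm}_{q^m/q}$ replaced by $h$.

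We then evaluate the three factors in turn.

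\emph{First factor.} By orthogonality of additive characters it equals $2^m$ if $b=1$ and $0$ otherwise.

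\emph{Second factor.} We use $\text{Tr}_{2^m/2}(y^2)=\text{Tr}_{2^m/2}(y)$, which holds because the Frobenius map permutes the conjugates. The sum therefore becomes $\sum_y\chi'((1+c)y)$, which equals $2^m$ if $c=1$ and $0$ otherwise. Equivalently, one can invoke Lemma \ref{Character and quadratic polynomial} (even case) with $a_2=1$, $a_1=c$, $b=1$, which gives the same condition $c^2=1$, i.e.\ $c=1$.

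\emph{Third factor.} We compare it with the Walsh transform. In characteristic $2$ we have $\zeta_2=-1$ and $-\text{Tr}(dz)=\text{Tr}(dz)$, so the third factor is precisely $\mathcal{W}_h(d)$. Here $\mathcal{W}_h$ is taken with $n=m$ and the outer trace $\text{Tr}_{2/2}$ is the identity.

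Multiplying the three factors gives $(-1)^a2^{2m}\mathcal{W}_h(d)$ when $b=c=1$, and $0$ otherwise. This is the claimed formula.

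The argument is routine. The only points that need care are the sign conventions in the Walsh transform (harmless in characteristic $2$), the linearisation $\text{Tr}(y^2)=\text{Tr}(y)$, and the observation that $\chi(sa)=(-1)^a$ when $s=1$. We do not expect a genuine obstacle.
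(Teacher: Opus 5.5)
Your proof is correct and follows the paper's own argument: you set $s=\omega=1$, factor $\Lambda_{a,b,c,d}^{f,g,h}$ into the three character sums, note that the $x$-sum vanishes unless $b=1$ and the $y$-sum vanishes unless $c=1$, and identify the $z$-sum with $\mathcal{W}_h(d)$. Your justification of the $y$-sum via $\text{Tr}_{2^m/2}(y^2)=\text{Tr}_{2^m/2}(y)$, or equivalently via Lemma~\ref{Character and quadratic polynomial}, supplies a step that the paper only asserts.
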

\begin{proof}
    For general values of $(a,b,c,d)\in\mathbb{F}_2\times\mathbb{F}_{2^m}^3$:
    \begin{align}
\Lambda_{a,b,c,d}^{f,g,h}&=\sum_{s\in\mathbb{F}_2^*}\chi(sa)\sum_{\omega\in\mathbb{F}_2^*}\sum_{(x,y,z)\in\mathbb{F}_{2^m}^3}\chi(\omega\text{Tr}_{q^m/q}(x)+\omega \text{Tr}_{q^m/q}(y^2)+\omega B(z))\chi'(sbx+scy+sdz)\notag\\
        &=(-1)^a\sum_{(x,y,z)\in\mathbb{F}_{2^m}^3}\chi(\text{Tr}_{q^m/q}(x)+ \text{Tr}_{q^m/q}(y^2)+ B(z))\chi'(sbx+scy+sdz)\notag\\
        &=(-1)^a\sum_{x\in\mathbb{F}_{2^m}}\chi'((1+b)x)\sum_{y\in\mathbb{F}_{2^m}}\chi'(y^2+cy)\sum_{z\in\mathbb{F}_{2^m}}\chi(B(z)+\text{Tr}_{q^m/q}(dz))\notag
    \end{align}
    If $b\not=1$, then $\sum_{x\in\mathbb{F}_{2^m}}\chi'((1+b)x)=0$ and hence $\Lambda_{a,b,c,d}^{f,g,h}=0$.
    \\\\
    If $b=1$:
$$\Lambda_{a,b,c,d}^{f,g,h}=2^m(-1)^a\sum_{y\in\mathbb{F}_{2^m}}\chi'(y^2+cy)\sum_{z\in\mathbb{F}_{2^m}}\chi(B(z)+\text{Tr}_{q^m/q}(dz)).$$
Now, if $c\not=1$, then $\Lambda_{a,b,c,d}^{f,g,h}=0$.
\\\\
If $c=1$:
$$\Lambda_{a,b,c,d}^{f,g,h}=(-1)^a2^{2m}\mathcal{W}_h(d),$$
as we wanted to prove.
\end{proof}
    Notice that for specific families of functions, such as s-plateaued, the values of the Walsh transform of $h$ are known; hence it is possible to compute all the parameters and weights of the subfield codes $\mathcal{C}_{f,g,h}^{(q)}$ and their punctured versions. Even the parameters of the dual codes can be determined. 
    We start by analyzing the bent case, hence $m$ must be even.
\begin{thm}\label{thm: parameters in the bent case}
    Let $m$ be even, $f(x)=\text{Tr}_{2^m/2}(x)$, $g(y)=\text{Tr}_{2^m/2}(y^2)$ and $h(z)=B(z)$, a bent function from $\mathbb{F}_{2^m}$ to $\mathbb{F}_2$. Then $\mathcal{C}_{f,g,h}^{(2)}$ is a five-weights $[2^{3m-1}+1, 3m+1, 2^{3m-2}-2^\frac{5m-4}{2}]$ binary linear code, with the following weight distribution:
\begin{center}
\begin{tabular}{c|c}
\hline Weight & Multiplicity \\
\hline 0 & 1 \\
$2^{3m-2}-2^{\frac{5m-4}{2}}$ & $2^m$ \\
$2^{3m-2}$ & $2^{3 m}-2^{m+1}-2$ \\
$2^{3m-2}+1$ & $2^{3 m}$ \\
$2^{3m-2}+2^{\frac{5m-4}{2}}$ & $2^m$ \\
$2^{3m-1}$ & 1 \\
\hline
\end{tabular}
\end{center}
The dual code $\mathcal{C}_{f,g,h}^{(q)\perp}$ is a $[2^{3m-1}+1, 2^{3m-1}-3m,3]$ binary linear code that is almost dimensionally optimal with respect to the sphere packing bound.
\end{thm}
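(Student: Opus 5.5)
Since $\Phi_{f,g,h}=0$, Lemma~\ref{lemma: length B} gives the length $n=2^{3m-1}+1$. For the weights I substitute Lemma~\ref{lemma: Lambda B q=2} into Lemma~\ref{lemma: weights} with $q=2$. This yields, for $(a,b,c,d)\ne(a,0,0,0)$,
\[
\mathrm{wt}(\mathbf{c}_{a,b,c,d})=\delta(b)+2^{3m-2}-\tfrac14\Lambda_{a,b,c,d}^{f,g,h}.
\]
Two cases remain. If $(b,c)\neq(1,1)$, then $\Lambda=0$ and the weight is $2^{3m-2}+\delta(b)$. If $b=c=1$, then, since $\mathrm{Tr}_{2^m/2}(1)=m\equiv0$ for $m$ even, we have $\delta(1)=0$. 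Because $h$ is bent, $\mathcal{W}_h(d)=\pm2^{m/2}$, so the weight is $2^{3m-2}\mp 2^{2m-2}2^{m/2}=2^{3m-2}\mp2^{(5m-4)/2}$. The cases $a=b=c=d=0$ and $a=1$, $b=c=d=0$ give the weights $0$ and $2^{3m-1}$.

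Next I count multiplicities. For $b=c=1$ there are $2\cdot2^m$ choices of $(a,d)$. The sign of $(-1)^a\mathcal{W}_h(d)$ flips with $a$, so for each $d$ exactly one value of $a$ gives the minus sign. Hence each of $2^{3m-2}\pm2^{(5m-4)/2}$ occurs exactly $2^m$ times, and I do not need the sign distribution of the bent dual.

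The remaining codewords have $(b,c)\ne(1,1)$ and $(b,c,d)\ne0$. For these I split on $\delta(b)$. There are $2^{m-1}$ elements $b$ with $\mathrm{Tr}(b)=1$, and all of them have $b\ne1$. This gives $2\cdot2^{m-1}\cdot2^{2m}=2^{3m}$ codewords of weight $2^{3m-2}+1$. The $b$ with trace $0$ number $2^{m-1}$ and include $b=1$. This gives
\[
2\bigl(2^{m-1}\cdot2^{2m}-2^{m}\bigr)-2=2^{3m}-2^{m+1}-2
\]
codewords of weight $2^{3m-2}$. Here I remove the pairs $b=c=1$ and the two codewords with $b=c=d=0$. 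The total is $2^{3m+1}$, which forces dimension $3m+1$, i.e.\ the map $(a,b,c,d)\mapsto\mathbf{c}_{a,b,c,d}$ is injective. This holds because only $(0,0,0,0)$ gives weight $0$. The minimum distance is $2^{3m-2}-2^{(5m-4)/2}$, which is positive for $m\ge2$.

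For the dual I use the Pless power moments with $k=3m+1$ and $n=2^{3m-1}+1$. First, $B_1=0$ follows from the first moment, or directly because no coordinate is identically zero. Then the second moment gives $B_2$ and the third gives $B_3$. I expect $B_2=0$ and $B_3>0$, so $d^\perp=3$ and the dual is $[2^{3m-1}+1,\,2^{3m-1}-3m,\,3]$. This moment computation is the main obstacle: it is heavy but routine algebra with powers of $2$.

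Finally, for almost dimensional optimality, a code of length $n$ with $d=3$ must satisfy $2^k(1+n)\le2^n$. Since $1+n=2^{3m-1}+2$ lies strictly between $2^{3m-1}$ and $2^{3m}$, this forces $k\le n-3m=2^{3m-1}+1-3m$. Our dimension $2^{3m-1}-3m$ is exactly one less than this bound.
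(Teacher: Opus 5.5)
Your proposal is correct and follows the same route as the paper. The weights come from Lemma~\ref{lemma: weights} combined with Lemma~\ref{lemma: Lambda B q=2}, the multiplicities from direct counting over $(a,b,c,d)$, and $d^\perp$ from the Pless power moments; your points that $\delta(1)=0$ for even $m$, that the dimension follows from injectivity, and your explicit sphere-packing check of almost dimensional optimality make explicit steps the paper leaves implicit.

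The moment step you only anticipate does go through, giving $B_1=B_2=0$ and $B_3=2^{3m-2}$. You can also see this directly. The coordinate functionals are nonzero and pairwise distinct, so $B_1=B_2=0$. For every $(x,y,z)\in\mathcal{D}$ the point $(x+1,y,z)$ also lies in $\mathcal{D}$, because $\mathrm{Tr}_{2^m/2}(1)=0$; these two coordinates together with the first one sum to zero on every codeword, so $B_3>0$ and $d^\perp=3$.
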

\begin{proof}
    From Lemma \ref{lemma: length B}, we obtain that the length of the code is equal to $n=2^{3m-1}+1$ and, combining the results of Lemma \ref{lemma: length} and Lemma \ref{lemma: Lambda B q=2}, we can compute the weights of the codewords in the following way. For $(a,b,c,d)\in\mathbb{F}_2\times \mathbb{F}_{2^m}^3$:
    \begin{align*}
        \text{wt}(\textbf{c}_{a,b,c,d})&=\begin{cases}\notag
            0 \,\,\,\,\,\,\,\,\,\,\,\,\,\,\,\,\,\,\,\,\,\,\,\,\,\,\,\,\,\,\,\,\,\,\,\,\,\,\,\,\,\,\text{if} \,\,\,a=b=c=d=0\\
            2^{3m-1} \,\,\,\,\,\,\,\,\,\,\,\,\,\,\,\,\,\,\,\,\,\,\,\,\,\,\,\,\,\text{if} \,\,\,a\not=0, b=c=d=0\\
            2^{3m-2}-2^{\frac{5m-4}{2}} \,\,\,\,\,\,\text{if} \,\,\,b=c=1, (-1)^aW_B(d)=2^\frac{m}{2}\\
            2^{3m-2}+2^{\frac{5m-4}{2}} \,\,\,\,\,\,\text{if}\,\,\, b=c=1, (-1)^aW_B(d)=-2^\frac{m}{2}\\
            2^{3m-2} \,\,\,\,\,\,\,\,\,\,\,\,\,\,\,\,\,\,\,\,\,\,\,\,\,\,\,\,\,\,\text{if}\,\,\,\delta(b)=0, (b,c)\not=(1,1),\,\,\, b,c,d \,\,\,\text{not all 0}\\
            1+2^{3m-2}\,\,\,\,\,\,\,\,\,\,\,\,\,\,\,\,\,\,\,\, \text{if}\,\,\,\delta(b)=1        
            \end{cases}\\
        &=\begin{cases}
            0 \,\,\,\,\,\,\,\,\,\,\,\,\,\,\,\,\,\,\,\,\,\,\,\,\,\,\,\,\,\,\,\,\,\,\,\,\,\,\,\,\,\,\text{with 1 time}\\
            2^{3m-1} \,\,\,\,\,\,\,\,\,\,\,\,\,\,\,\,\,\,\,\,\,\,\,\,\,\,\,\,\,\text{with 1 time}\\
            2^{3m-2}-2^{\frac{5m-4}{2}} \,\,\,\,\,\,\text{with} \,\,\,2^m\,\,\,\text{times}\\
            2^{3m-2}+2^{\frac{5m-4}{2}} \,\,\,\,\,\,\text{with} \,\,\,2^m\,\,\,\text{times}\\
            2^{3m-2} \,\,\,\,\,\,\,\,\,\,\,\,\,\,\,\,\,\,\,\,\,\,\,\,\,\,\,\,\,\text{with}\,\,\,2^{3m}-2^{m+1}-2\,\,\,\text{times}\\
            1+2^{3m-2} \,\,\,\,\,\,\,\,\,\,\,\,\,\,\,\,\,\,\,\text{with}\,\,\,2^{3m}\,\,\,\text{times},
        \end{cases}
    \end{align*}
    which yields the code's weight distribution.\\\\
    For the dual code, let $d^\perp$ be its minimum distance. From the sphere packing bound, we have
    $$2^{3m+1}\ge \sum_{i=0}^{\left\lfloor{\frac{d^\perp-1}{2}}\right\rfloor}\binom{2^{3m-1}+1}{i}.$$
    Hence, we can deduce that $d^\perp\le 4$ and from the first four Pless power moments we conclude that actually $d^\perp=3$.
\end{proof}
By applying Corollary \ref{corollary: parameters of the punctured codes} to Theorem \ref{thm: parameters in the bent case}, we can deduce the parameters of the punctured code and its dual. 
\begin{corollary}\label{corollary: parameters of the punctured code in the bent case}
    Let $m$ be even, $f(x)=\text{Tr}_{2^m/2}(x)$, $g(y)=\text{Tr}_{2^m/2}(y^2)$ and $h(z)=B(z)$, a bent function from $\mathbb{F}_{2^m}$ to $\mathbb{F}_2$. Then $\mathcal{C}_{f,g,h}^{*(2)}$ is a four-weights $[2^{3m-1}, 3m+1, 2^{3m-2}-2^\frac{5m-4}{2}]$ binary linear code, with the following weight distribution
\begin{center}
\begin{tabular}{c|c}
\hline Weight & Multiplicity \\
\hline 0 & 1 \\
$2^{3m-2}-2^{\frac{5m-4}{2}}$ & $2^m$ \\
$2^{3m-2}$ & $2^{3 m+1}-2^{m+1}-2$ \\
$2^{3m-2}+2^{\frac{5m-4}{2}}$ & $2^m$ \\
$2^{3m-1}$ & 1 \\
\hline
\end{tabular}
\end{center}
The dual code $\mathcal{C}_{f,g,h}^{*(q)\perp}$ is a $[2^{3m-1}, 2^{3m-1}-3m-1,4]$ binary linear code, that is almost dimensionally optimal with respect to the sphere packing bound. 
\end{corollary}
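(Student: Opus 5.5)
The plan is to apply Corollary \ref{corollary: parameters of the punctured codes} directly to the weight computation in the proof of Theorem \ref{thm: parameters in the bent case}. The punctured code $\mathcal{C}_{f,g,h}^{*(2)}$ has length $\#\mathcal{D}=2^{3m-1}$, since $\Phi_{f,g,h}=0$ by Lemma \ref{lemma: length B}. Each codeword $\textbf{c}^*_{a,b,c,d}$ has weight $\text{wt}(\textbf{c}_{a,b,c,d})-\delta(b)$. So I would go through the case list of Theorem \ref{thm: parameters in the bent case} and subtract $\delta(b)$ in each case.

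\textbf{Weights.} The cases $a=b=c=d=0$ and ($a=1$, $b=c=d=0$) have $\delta(b)=0$, so they still give weights $0$ and $2^{3m-1}$. The case $b=c=1$ also has $\delta(b)=\delta(1)=\text{Tr}_{2^m/2}(1)=m \bmod 2=0$, since $m$ is even. So the weights $2^{3m-2}\mp 2^{\frac{5m-4}{2}}$ are unchanged, each with multiplicity $2^m$: for every $d$, exactly one choice of $a$ gives each sign of $(-1)^a\mathcal{W}_B(d)=\pm 2^{m/2}$.

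The codewords with $\delta(b)=1$ had weight $2^{3m-2}+1$ and now have weight $2^{3m-2}$. They merge with the class of weight $2^{3m-2}$, giving multiplicity $2^{3m}+2^{3m}-2^{m+1}-2=2^{3m+1}-2^{m+1}-2$. A quick sanity check: the multiplicities sum to $2^{3m+1}$.

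\textbf{Dimension.} Only the zero tuple $(a,b,c,d)$ gives the zero codeword. Hence the map $(a,b,c,d)\mapsto \textbf{c}^*_{a,b,c,d}$ is injective, and the dimension is $3m+1$. The minimum distance is $2^{3m-2}-2^{\frac{5m-4}{2}}$, which is positive for $m\ge 2$.

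\textbf{Dual code.} For $\mathcal{C}_{f,g,h}^{*(2)\perp}$, the parameters $[2^{3m-1},2^{3m-1}-3m-1]$ are immediate. For the minimum distance $d^\perp$ I would use the Pless power moments with $n=2^{3m-1}$ and $k=3m+1$. Plugging the weight distribution into the first moment should give $B_1=0$, because the average weight equals $n/2$. The second moment should then give $B_2=0$, and the third should give $B_3=0$. For $B_3$ the symmetric distribution helps: the code contains the all-one word and all weights are even, so every dual word has even weight, which already forces $B_1=B_3=0$. Hence $d^\perp\ge 4$.

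The main obstacle is the upper bound $d^\perp\le 4$. For this I would use the sphere packing bound as in Theorem \ref{thm: parameters in the bent case}: if $d^\perp\ge 5$, then $2^{3m+1}\ge 1+n+\binom{n}{2}$, which fails since $\binom{2^{3m-1}}{2}\approx 2^{6m-3}$. Therefore $d^\perp=4$.

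For almost dimensional optimality, I would check that a hypothetical $[2^{3m-1},2^{3m-1}-3m,4]$ code violates the sphere packing bound. With radius $\lfloor 3/2\rfloor=1$, the bound requires $2^{3m}(1+2^{3m-1})\le 2^{3m}\cdot 2^{3m-1}$, which is false. So the dimension of the dual is within one of optimal.
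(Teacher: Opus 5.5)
Your computation of the length, dimension and weight distribution is correct, and so is $d^\perp=4$. The route is the paper's: you apply Corollary~\ref{corollary: parameters of the punctured codes} to the case list of Theorem~\ref{thm: parameters in the bent case}, then use the Pless moments and the sphere packing bound for the dual. Your observation that the all-one word $\textbf{c}^*_{1,0,0,0}$ forces every word of $\mathcal{C}^{*(2)\perp}$ to have even weight, so $B_1=B_3=0$, is a clean shortcut. The remark that all weights are even is not needed for it.

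The final step on almost dimensional optimality is wrong as written. A $[2^{3m-1},2^{3m-1}-3m,4]$ binary code does \emph{not} violate the sphere packing bound. With $t=1$ the bound reads $2^{n-3m}(1+n)\le 2^n$, i.e. $1+2^{3m-1}\le 2^{3m}$, which holds. In fact the extended Hamming code $[2^r,2^r-r-1,4]$ with $r=3m-1$ has exactly these parameters.

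The inequality you wrote, $1+2^{3m-1}\le 2^{3m-1}$, is the one for codimension $3m-1$, i.e. for a hypothetical $[2^{3m-1},2^{3m-1}-3m+1,4]$ code. Your argument is also inconsistent with its own conclusion. If a code of dimension $2^{3m-1}-3m$ were excluded, the dual (dimension $2^{3m-1}-3m-1$) would be dimensionally optimal, not almost optimal.

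The correct verification is as follows. For length $2^{3m-1}$ and $d=4$, the sphere packing bound allows dimension $k$ exactly when $2^{n-k}\ge 1+2^{3m-1}$, i.e. $k\le 2^{3m-1}-3m$. The dual has dimension one less than this maximum. A code with one more dimension is permitted (and exists), while a code with two more dimensions, $[2^{3m-1},2^{3m-1}-3m+1,4]$, is excluded because $1+2^{3m-1}>2^{3m-1}$. That is precisely the sense in which the dual is almost dimensionally optimal.
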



More generally, we analyze the non-bent $s$-plateaued case, that is, $s>0$. We begin by assuming that $m$ is even; consequently, $s$ is even as well.

\begin{thm}\label{thm: parameters in the boolean plateaued case, m even}
    Let $m$ be even, $f(x)=\text{Tr}_{2^m/2}(x)$, $g(y)=\text{Tr}_{2^m/2}(y^2)$ and $h(z)=P(z)$, a normalized $s$-plateaued function from $\mathbb{F}_{2^m}$ to $\mathbb{F}_2$, that is, $P(0)=0$. Then $\mathcal{C}_{f,g,h}^{(2)}$ is a five-weights $[2^{3m-1}+1, 3m+1, 2^{3m-2}-2^\frac{5m+s-4}{2}]$ binary linear code, with the following weight distribution:
\begin{center}
\begin{tabular}{c|c}
\hline Weight & Multiplicity \\
\hline 0 & 1 \\
$2^{3m-2}-2^{\frac{5m+s-4}{2}}$ & $2^{m-s}$ \\
$2^{3m-2}$ & $2^{3 m}-2^{m-s+1}-2$ \\
$2^{3m-2}+1$ & $2^{3 m}$ \\
$2^{3m-2}+2^{\frac{5m+s-4}{2}}$ & $2^{m-s}$ \\
$2^{3m-1}$ & 1 \\
\hline
\end{tabular}
\end{center}
The dual code $\mathcal{C}_{f,g,h}^{(q)\perp}$ is a $[2^{3m-1}+1, 2^{3m-1}-3m,3]$ binary linear code that is almost dimensionally optimal with respect to the sphere packing bound.
The puncture code $\mathcal{C}_{f,g,h}^{*(2)}$ is a four-weights $[2^{3m-1}, 3m+1, 2^{3m-2}-2^\frac{5m+s-4}{2}]$ binary linear code, with the following weight distribution
\begin{center}
\begin{tabular}{c|c}
\hline Weight & Multiplicity \\
\hline 0 & 1 \\
$2^{3m-2}-2^{\frac{5m+s-4}{2}}$ & $2^{m-s}$ \\
$2^{3m-2}$ & $2^{3 m+1}-2^{m-s+1}-2$ \\
$2^{3m-2}+2^{\frac{5m+s-4}{2}}$ & $2^{m-s}$ \\
$2^{3m-1}$ & 1 \\
\hline
\end{tabular}
\end{center}
The dual code $\mathcal{C}_{f,g,h}^{*(q)\perp}$ is a $[2^{3m-1}, 2^{3m-1}-3m-1,4]$ binary linear code, that is almost dimensionally optimal with respect to the sphere packing bound.
\end{thm}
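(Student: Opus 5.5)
We follow the same route as the proof of Theorem \ref{thm: parameters in the bent case}, replacing the bent Walsh spectrum with the plateaued one.

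\emph{Length and zero-$\Lambda$ weights.} Lemma \ref{lemma: length B} gives the length $n=2^{3m-1}+1$, since $\Phi_{f,g,h}=0$. Lemma \ref{lemma: weights} then gives, for $(b,c,d)\neq(0,0,0)$,
$$\text{wt}(\textbf{c}_{a,b,c,d})=\delta(b)+2^{3m-2}-\tfrac14\Lambda_{a,b,c,d}^{f,g,h}.$$
By Lemma \ref{lemma: Lambda B q=2}, $\Lambda$ vanishes unless $b=c=1$, in which case $\Lambda=(-1)^a2^{2m}\mathcal{W}_P(d)$. For $m$ even, $\text{Tr}_{2^m/2}(1)=0$, so $\delta(1)=0$. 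We split into cases:
\begin{itemize}
\item $b=c=1$ and $\mathcal{W}_P(d)=0$: weight $2^{3m-2}$.
\item $b=c=1$ and $\mathcal{W}_P(d)=\pm2^{(m+s)/2}$: weight $2^{3m-2}\mp(-1)^a2^{(5m+s-4)/2}$.
\item $(b,c)\neq(1,1)$: weight $2^{3m-2}+\delta(b)$, excluding the case $b=c=d=0$.
\item $b=c=d=0$: weight $0$ if $a=0$, and $2^{3m-1}$ if $a=1$ (the constant $2^{3m-1}$ appears only when $a\neq0$).
\end{itemize}
Here $s$ is even because $(m+s)/2$ must be an integer for Boolean plateaued functions with $m$ even.

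\emph{Multiplicities.} Parseval $\sum_d\mathcal{W}_P(d)^2=2^{2m}$ gives exactly $2^{m-s}$ values of $d$ with $\mathcal{W}_P(d)\neq0$. Let $N_\pm$ be the number of such $d$ with sign $\pm$. Each of the two extreme weights then occurs $N_++N_-=2^{m-s}$ times, because $a$ ranges over $\mathbb{F}_2$ and swaps the signs.

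The weight $2^{3m-2}+1$ occurs for the pairs $(b,a)$ with $\delta(b)=1$ (there are $2^{m-1}$ such $b$) and arbitrary $a,c,d$, giving $2^{m-1}\cdot2\cdot2^{2m}=2^{3m}$ codewords. The multiplicity of $2^{3m-2}$ is obtained by subtracting from $2^{3m+1}$ the counts already found: $1$, $1$, $2^{3m}$, and $2\cdot2^{m-s}$. This yields $2^{3m}-2^{m-s+1}-2$.

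Since no nonzero codeword has weight $0$, the map $(a,b,c,d)\mapsto\textbf{c}_{a,b,c,d}$ is injective. Hence the dimension is $3m+1$, and the minimum distance is $2^{3m-2}-2^{(5m+s-4)/2}$.

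\emph{Punctured code.} Corollary \ref{corollary: parameters of the punctured codes} removes $\delta(b)$ from every weight. The weights $2^{3m-2}+1$ therefore merge into $2^{3m-2}$, giving multiplicity $2^{3m+1}-2^{m-s+1}-2$ and a four-weight code of length $2^{3m-1}$.

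\emph{Duals.} This is the only mildly delicate step. First, the sphere packing bound with codimension $3m+1$ (resp.\ $3m+2$ for the dual of the non-punctured code) shows $d^\perp\le4$ in both cases, since already $\binom{n}{2}$ exceeds $2^{3m+1}$. To get the exact value, we feed the known weight distribution into the first Pless power moments and solve for $B_1,B_2,B_3$.
\begin{itemize}
\item \emph{Non-punctured dual.} We expect $B_1=B_2=0$ and $B_3>0$, giving $d^\perp=3$. The contribution of the odd weight $2^{3m-2}+1$ is what makes $B_3$ nonzero.
\item \emph{Punctured dual.} All weights are even, so the all-ones vector lies in the dual of the punctured code. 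We expect $B_1=B_2=B_3=0$, giving $d^\perp=4$.
\end{itemize}
The main obstacle is the moment algebra. The $s$-dependence should cancel, because only $\sum_d\mathcal{W}_P(d)^2$ and the fixed multiplicities enter the second moment. The third moment requires $\sum_d\mathcal{W}_P(d)^3$, and this cancels by the $a$-symmetry, since the codewords for $a=0$ and $a=1$ have opposite signs. Finally, almost dimensional optimality follows by checking that dimension $k+1$ violates the sphere packing bound with $d^\perp=3$ or $4$.
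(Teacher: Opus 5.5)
Your computation of the weights and their multiplicities is correct, and it is the paper's own argument: Lemmas \ref{lemma: length B}, \ref{lemma: weights} and \ref{lemma: Lambda B q=2}, $\delta(1)=0$ for $m$ even, Parseval giving $2^{m-s}$ nonzero Walsh values, the sign swap in $a$, the count of weight $2^{3m-2}$ by complement, then Corollary \ref{corollary: parameters of the punctured codes}.

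You should make one hypothesis explicit. Injectivity (``no nonzero codeword has weight $0$'') needs $2^{3m-2}>2^{(5m+s-4)/2}$, i.e.\ $s<m$. For $s=m$ ($P$ linear), the codeword with $a=0$, $b=c=1$ and $d$ the unique point with $\mathcal{W}_P(d)\neq 0$ is zero, and the dimension drops to $3m$. The statement tacitly assumes $P$ is not affine.

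The genuine problem is in the dual part. Both duals have codimension $3m+1$, not $3m+2$; that slip is harmless for $d^\perp\le 4$. Your final optimality check, however, would fail. Since $\lfloor (d^\perp-1)/2\rfloor=1$, the sphere packing bound reads $n+1\le 2^{n-k}$. For dimension $k^\perp+1$ (codimension $3m$) this becomes $2^{3m-1}+2\le 2^{3m}$, resp.\ $2^{3m-1}+1\le 2^{3m}$, and both hold. So dimension $k^\perp+1$ is \emph{not} excluded, which is exactly why the statement says \emph{almost}. What you must check is that $k^\perp+2$ (codimension $3m-1$) violates the bound, since $n+1>2^{3m-1}$, while $k^\perp+1$ does not. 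Your criterion, if it held, would prove full dimensional optimality, which is false here.

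Second, the values $d^\perp=3$ and $d^\perp=4$ are only ``expected''. The paper also just invokes the first four Pless moments, but your supporting remark points the wrong way. That all weights are even, so the all-ones vector is in the dual, says nothing about $B_3$. What matters is that the all-ones vector $\mathbf{c}^*_{1,0,0,0}$ lies in the code itself.

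Concretely, take a vector supported on $S\subseteq\mathcal{D}$ plus possibly the first coordinate ($\varepsilon\in\{0,1\}$). Pairing it with $\mathbf{c}_{a,b,c,d}$ gives $|S|a+\mathrm{Tr}_{2^m/2}\bigl(b(\sum_S x+\varepsilon)+c\sum_S y+d\sum_S z\bigr)$. So it is a dual word iff $|S|$ is even, $\sum_S x=\varepsilon$ and $\sum_S y=\sum_S z=0$. Consequences:
\begin{itemize}
\item Weights $1$ and $2$ never occur in either dual.
\item The punctured dual ($\varepsilon=0$) is an even code, so $B_3=0$, and $d^\perp=4$ by your sphere packing step.
\item In the non-punctured dual, the weight-$3$ words are exactly the first coordinate together with $\{(x,y,z),(x+1,y,z)\}$. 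Both points lie in $\mathcal{D}$ because $\mathrm{Tr}_{2^m/2}(1)=0$, so $B_3=2^{3m-2}>0$ and $d^\perp=3$.
\end{itemize}
If you prefer the moment route, note that the fourth identity as printed in the preliminaries has the wrong constant term. For $q=2$ use $\sum_i i^3A_i=2^{k-3}\bigl[n^2(n+3)-(3n^2+3n-2)B_1+6nB_2-6B_3\bigr]$. This gives $B_3=0$ for the punctured code and $B_3=2^{3m-2}$ for the non-punctured one, whereas the printed version yields a non-integral $B_3$.
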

\begin{proof}
    As in the proof of Theorem \ref{thm: parameters in the bent case}, from Lemma \ref{lemma: length B} we obtain the length of the code and, combining the results of Lemma \ref{lemma: length} and Lemma \ref{lemma: Lambda B q=2}, we compute the weights of the codewords. For $(a,b,c,d)\in\mathbb{F}_2\times \mathbb{F}_{2^m}^3$:
    \begin{align*}
        \text{wt}(\textbf{c}_{a,b,c,d})&=\begin{cases}\notag
            0 \,\,\,\,\,\,\,\,\,\,\,\,\,\,\,\,\,\,\,\,\,\,\,\,\,\,\,\,\,\,\,\,\,\,\,\,\,\,\,\,\,\,\,\,\,\,\,\text{if} \,\,\,a=b=c=d=0\\
            2^{3m-1} \,\,\,\,\,\,\,\,\,\,\,\,\,\,\,\,\,\,\,\,\,\,\,\,\,\,\,\,\,\,\,\,\,\,\text{if} \,\,\,a\not=0, b=c=d=0\\
            2^{3m-2}-2^{\frac{5m+s-4}{2}} \,\,\,\,\,\,\text{if} \,\,\,b=c=1, (-1)^aW_P(d)=2^\frac{m+s}{2}\\
            2^{3m-2}+2^{\frac{5m+s-4}{2}} \,\,\,\,\,\,\text{if}\,\,\, b=c=1, (-1)^aW_P(d)=-2^\frac{m+s}{2}\\
            2^{3m-2} \,\,\,\,\,\,\,\,\,\,\,\,\,\,\,\,\,\,\,\,\,\,\,\,\,\,\,\,\,\,\,\,\,\,\,\text{if}\,\,\,\delta(b)=0, (b,c)\not=(1,1),\,\,\, b,c,d \,\,\,\text{not all 0}\\
            1+2^{3m-2}\,\,\,\,\,\,\,\,\,\,\,\,\,\,\,\,\,\,\,\,\,\,\,\,\, \text{if}\,\,\,\delta(b)=1        
            \end{cases}\\
        &=\begin{cases}
            0 \,\,\,\,\,\,\,\,\,\,\,\,\,\,\,\,\,\,\,\,\,\,\,\,\,\,\,\,\,\,\,\,\,\,\,\,\,\,\,\,\,\,\text{with 1 time}\\
            2^{3m-1} \,\,\,\,\,\,\,\,\,\,\,\,\,\,\,\,\,\,\,\,\,\,\,\,\,\,\,\,\,\text{with 1 time}\\
            2^{3m-2}-2^{\frac{5m-4}{2}} \,\,\,\,\,\,\text{with} \,\,\,2^{m-s}\,\,\,\text{times}\\
            2^{3m-2}+2^{\frac{5m-4}{2}} \,\,\,\,\,\,\text{with} \,\,\,2^{m-s}\,\,\,\text{times}\\
            2^{3m-2} \,\,\,\,\,\,\,\,\,\,\,\,\,\,\,\,\,\,\,\,\,\,\,\,\,\,\,\,\,\text{with}\,\,\,2^{3m}-2^{m-s+1}-2\,\,\,\text{times}\\
            1+2^{3m-2} \,\,\,\,\,\,\,\,\,\,\,\,\,\,\,\,\,\,\,\text{with}\,\,\,2^{3m}\,\,\,\text{times},
        \end{cases}
    \end{align*}
    which yields the code's weight distribution. Applying Corollary \ref{corollary: parameters of the punctured codes}, we also determine the parameters and the weight distribution of the punctured code.
    For the dual codes, we compute their minimum distances from the first four Pless power moments.
\end{proof}
Finally, we assume that $m$ is odd; hence $s$ is also odd. For instance, this includes the almost bent case $s=1$. The proof proceeds analogously to that of Theorem~\ref{thm: parameters in the boolean plateaued case, m even}, and is therefore omitted.

\begin{thm}\label{thm: parameters in the boolean plateaued case, m odd}
    Let $m$ be odd, $f(x)=\text{Tr}_{2^m/2}(x)$, $g(y)=\text{Tr}_{2^m/2}(y^2)$ and $h(z)=P(z)$, a normalized $s$-plateaued function from $\mathbb{F}_{2^m}$ to $\mathbb{F}_2$, that is, $P(0)=0$. Then $\mathcal{C}_{f,g,h}^{(2)}$ is a five-weights $[2^{3m-1}+1, 3m+1, 1+2^{3m-2}-2^\frac{5m+s-4}{2}]$ binary linear code, with the following weight distribution:
\begin{center}
\begin{tabular}{c|c}
\hline Weight & Multiplicity \\
\hline 0 & 1 \\
$1+2^{3m-2}-2^{\frac{5m+s-4}{2}}$ & $2^{m-s}$ \\
$2^{3m-2}$ & $2^{3 m}-2$ \\
$2^{3m-2}+1$ & $2^{3 m}-2^{m-s+1}$ \\
$1+2^{3m-2}+2^{\frac{5m+s-4}{2}}$ & $2^{m-s}$ \\
$2^{3m-1}$ & 1 \\
\hline
\end{tabular}
\end{center}
The dual code $\mathcal{C}_{f,g,h}^{(q)\perp}$ is a $[2^{3m-1}+1, 2^{3m-1}-3m,4]$ binary linear code that is distance and dimensionally optimal with respect to the sphere packing bound.
The puncture code $\mathcal{C}_{f,g,h}^{*(2)}$ is a four-weights $[2^{3m-1}, 3m+1, 2^{3m-2}-2^\frac{5m+s-4}{2}]$ binary linear code, with the following weight distribution
\begin{center}
\begin{tabular}{c|c}
\hline Weight & Multiplicity \\
\hline 0 & 1 \\
$2^{3m-2}-2^{\frac{5m+s-4}{2}}$ & $2^{m-s}$ \\
$2^{3m-2}$ & $2^{3 m+1}-2^{m-s+1}-2$ \\
$2^{3m-2}+2^{\frac{5m+s-4}{2}}$ & $2^{m-s}$ \\
$2^{3m-1}$ & 1 \\
\hline
\end{tabular}
\end{center}
The dual code $\mathcal{C}_{f,g,h}^{*(q)\perp}$ is a $[2^{3m-1}, 2^{3m-1}-3m-1,4]$ binary linear code, that is almost dimensionally optimal with respect to the sphere packing bound.
\end{thm}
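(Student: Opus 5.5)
The plan is to follow the proof of Theorem~\ref{thm: parameters in the boolean plateaued case, m even}. The one new ingredient is that, for $m$ odd, $\operatorname{Tr}_{2^m/2}(1)=m \bmod 2=1$, so $\delta(1)=1$. This shifts the two extreme weights by one.

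First, Lemma~\ref{lemma: length B} gives $n=2^{3m-1}+1$ and $\Phi_{f,g,h}=0$. Combining Lemma~\ref{lemma: weights} with Lemma~\ref{lemma: Lambda B q=2}, every codeword with $(b,c)\neq(1,1)$ and $(a,b,c,d)\ne(a,0,0,0)$ has weight $\delta(b)+2^{3m-2}$. For $b=c=1$ the weight is
$$1+2^{3m-2}-\tfrac14(-1)^a2^{2m}\mathcal{W}_P(d).$$
Since $P$ is $s$-plateaued, $\mathcal{W}_P(d)\in\{0,\pm2^{\frac{m+s}{2}}\}$. This gives the weights $1+2^{3m-2}\mp2^{\frac{5m+s-4}{2}}$ and $1+2^{3m-2}$.

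Next come the multiplicities. By Parseval, $\sum_d\mathcal{W}_P(d)^2=2^{2m}$, so exactly $2^{m-s}$ values of $d$ have $\mathcal{W}_P(d)\neq0$. For each such $d$, the two choices of $a$ give opposite signs, so each extreme weight occurs $2^{m-s}$ times. The pairs $(b,c)=(1,1)$ with $\mathcal{W}_P(d)=0$ contribute $2(2^m-2^{m-s})$ codewords of weight $1+2^{3m-2}$.

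The remaining counts are direct:
\begin{itemize}
\item Codewords with $\delta(b)=1$ number $2\cdot2^{m-1}\cdot2^{2m}=2^{3m}$ in total, including the $2^{m+1}$ with $b=c=1$. Removing the $2^{m-s+1}$ extreme ones leaves weight $2^{3m-2}+1$ with multiplicity $2^{3m}-2^{m-s+1}$.
\item Codewords with $\delta(b)=0$ number $2^{3m}$. Removing the zero word and $\mathbf{c}_{1,0,0,0}$ (of weight $2^{3m-1}$) leaves $2^{3m}-2$ words of weight $2^{3m-2}$.
\end{itemize}
All nonzero weights are positive, assuming $s<m$ so that $2^{\frac{5m+s-4}{2}}<2^{3m-2}$. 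Hence the map $(a,b,c,d)\mapsto\mathbf{c}_{a,b,c,d}$ is injective, and the dimension is $3m+1$. The punctured code follows from Corollary~\ref{corollary: parameters of the punctured codes} by subtracting $\delta(b)$. This merges the two middle weights into $2^{3m-2}$ with multiplicity $2^{3m+1}-2^{m-s+1}-2$, and the extremes become $2^{3m-2}\pm2^{\frac{5m+s-4}{2}}$.

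For the duals, I will insert the weight distributions into the first four Pless power moments and solve successively for $B_1$, $B_2$, $B_3$. I expect $B_1=B_2=B_3=0$ for $\mathcal{C}^{(2)\perp}$, so $d^\perp\ge4$. The same computation for the punctured code should give $B_3=0$ as well, hence $d^\perp\ge4$ there too.

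The upper bound comes from the sphere packing bound. The dual has dimension $n-3m-1$, so a minimum distance $\ge5$ would force
$$1+n+\binom n2\le 2^{3m+1}.$$
This fails because $n\approx2^{3m-1}$, so $d^\perp=4$. The same inequality shows that no $[n,n-3m-1+1,4]$ code can exist, since $1+n\le2^{3m}$ would be needed. Together these give distance and dimensional optimality for the unpunctured dual. For the punctured dual, the analogous check yields only near-optimality in dimension.

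The main obstacle is the Pless-moment bookkeeping, in particular verifying $B_3=0$ in the unpunctured case. This is exactly where $m$ odd differs from the even case, which only had $d^\perp=3$. The cubic terms in $2^{\frac{5m+s-4}{2}}$ must cancel, and I would organize the computation by writing every weight as $2^{3m-2}+\varepsilon$ and expanding the moments in $\varepsilon$.
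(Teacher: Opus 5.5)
Your derivation of the weights, multiplicities, dimension and punctured weight distribution is correct. It is the route the paper intends, since the paper omits the proof as analogous to the $m$ even case, and your one new input $\delta(1)=\operatorname{Tr}_{2^m/2}(1)=1$ is the right one. Your assumption $s<m$ is only needed for the punctured code. If $P(z)=\operatorname{Tr}_{2^m/2}(\alpha z)$ is linear, the word with $a=0$, $b=c=1$, $d=\alpha$ vanishes on $\mathcal{D}$, while the unpunctured minimum weight stays $\ge 1$.

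\textbf{The gap is your dimensional-optimality argument for $\mathcal{C}_{f,g,h}^{(2)\perp}$.} You claim an $[n,n-3m,4]$ code is excluded because it would need $1+n\le 2^{3m}$. But $1+n=2^{3m-1}+2\le 2^{3m}$ does hold. For $d=4$ the sphere packing bound only uses $\lfloor 3/2\rfloor=1$, so it coincides with the bound for $d=3$ and gives only $k^\perp\le n-3m$. Applied literally, it shows the unpunctured dual is \emph{almost} dimensionally optimal, exactly like the punctured one, not optimal.

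\textbf{How to repair it.} Puncture a hypothetical $[n,n-3m,4]$ code at one coordinate to get an $[n-1,n-3m,\ge 3]$ code. The sphere packing bound for that code requires $n\le 2^{3m-1}$, contradicting $n=2^{3m-1}+1$. This is a refinement, not the bound as stated in the paper, so the phrase ``with respect to the sphere packing bound'' is justified only in this refined sense. The same reasoning shows the punctured dual really is one dimension short, since the extended Hamming $[2^{3m-1},2^{3m-1}-3m,4]$ code exists. Your distance-optimality argument, via $1+n+\binom{n}{2}>2^{3m+1}$, is fine for $m\ge 3$, which $s<m$ with $s,m$ odd guarantees.

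\textbf{A simpler route to $d^\perp\ge 4$.} The Pless-moment bookkeeping you flag as the main obstacle can be avoided. A dual word $(u_0,(u_P)_{P\in\mathcal{D}})$ must satisfy $\sum_P u_P=0$, $u_0+\sum_P u_Px_P=0$, $\sum_P u_Py_P=0$ and $\sum_P u_Pz_P=0$. A weight-$3$ word with $u_0=0$ violates the first equation. One with $u_0=1$ needs two points $(x,y,z),(x+1,y,z)\in\mathcal{D}$, which forces $\operatorname{Tr}_{2^m/2}(1)=0$, impossible for $m$ odd. This gives $B_1=B_2=B_3=0$ for both codes at once, and explains why the even case only reaches $d^\perp=3$.
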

\begin{remark}
   Notice that the punctured codes of Theorem~\ref{thm: parameters in the boolean plateaued case, m even} have the same parameters and weight distributions as the punctured codes of Theorem~\ref{thm: parameters in the boolean plateaued case, m odd}, with the subtle difference that in the first case $s$ must be even, whereas in the second case it must be odd.
    \end{remark}

\section{The vectorial framework}
In this section, we present the adaptation of the code construction from \cite{xu} to vectorial functions. Let $f$ and $g$ be two functions from $\mathbb{F}_{q^m}$ to $\mathbb{F}_{q^m}$. We define the subset $\mathcal{D}$ of $\mathbb{F}_{q^m}^2$ as
\begin{equation}\label{vectorial defining set}
    \mathcal{D}=\left\{(x, y) \in \mathbb{F}_{q^m}^2: f(x)+g(y)=0\right\}
\end{equation}
and a $3 \times(\# \mathcal{D})$ matrix $G_{f, g}^*$ over $\mathbb{F}_{q^m}$ as 
\begin{equation}\label{vectorial punctured matrix}
    G_{f, g}^*=\left(\begin{array}{l} 1 \\ x \\ y \end{array}\right)_{(x, y) \in \mathcal{D}}.
\end{equation}
Let $\mathcal{C}_{f,g}$ be the $[\# \mathcal{D}+1,3]$ linear code over $\mathbb{F}_{q^m}$ generated by the following matrix:
\begin{equation}\label{vectorial matrix}
    G_{f, g}=\left(\begin{array}{cc} 0 & \\ 1 & G_{f, g}^* \\ 0 & \end{array}\right).
\end{equation}
Let $\mathcal{C}_{f,g}^*$ be the punctured code obtained from $\mathcal{C}_{f,g}$ by puncturing on the first coordinate, hence $\mathcal{C}_{f,g}^*$ is the $[\# \mathcal{D},3]$ linear code over $\mathbb{F}_{q^m}$ with generator matrix $G_{f, g}^*$.\\\\
The aim of this section is to determine the parameters of the codes and their subfield codes, with specific attention to the binary case.
\\\\
In the vectorial framework, the trace representation of the subfield code $\mathcal{C}_{f, g}^{(q)}$ is given by $$ \mathcal{C}_{f, g}^{(q)}=\left\{\mathbf{c}_{a, b, c}: a \in \mathbb{F}_q, b, c \in \mathbb{F}_{q^m}\right\}, $$ where $$ \mathbf{c}_{a, b, c}=\left(\operatorname{Tr}_{q^m / q}(b),\left(a+\operatorname{Tr}_{q^m / q}(b x+c y)\right)_{(x, y) \in \mathcal{D}}\right). $$
As before, let $\eta$ and $\eta '$ be the quadratic
characters of $\mathbb{F}_q$ and $\mathbb{F}_{q^m}$, respectively. Also, let $\chi$ and $\chi '$ be the canonical additive characters of $\mathbb{F}_q$ and $\mathbb{F}_{q^m}$, respectively. Hence, $\chi'=\chi\circ\text{Tr}_{q^m/q}$. Consider two functions $f$ and $g$ from $\mathbb{F}_{q^m}$ to $\mathbb{F}_{q^m}$.\\
In this case, we will need the following two terms depending on the functions.
\begin{equation}\label{Phi vectorial}
\overline{\Phi}_{f,g}=\sum_{s\in\mathbb{F}_{q^m}^*}\sum_{x\in\mathbb{F}_{q^m}}\chi'(sf(x))\sum_{y\in\mathbb{F}_{q^m}}\chi'(sg(y)).
\end{equation}
Notice that if at least one function among $f$ and $g$ is a permutation polynomial, we have $\overline{\Phi}_{f,g}=0$ by the orthogonality relation for additive
characters.\\\\
Then, for $(a,b,c)\in\mathbb{F}_q\times\mathbb{F}_{q^m}^2$:
\begin{equation}\label{Vectorial lambda}
    \overline{\Lambda}_{a,b,c}^{f,g}=\sum_{s\in\mathbb{F}_q^*}\chi(sa)\sum_{\omega\in\mathbb{F}_{q^m}^*}\sum_{(x,y)\in\mathbb{F}_{q^m}^2}\chi'(\omega f(x)+\omega g(y))\chi'(sbx+scy).
\end{equation}

In the following lemmas, we will compute the length and weights of the codes with respect to the latter terms.

\begin{lemma}\label{lemma: vectorial length}
    The length of the codes $\mathcal{C}_{f,g}$ and $\mathcal{C}_{f,g}^{(q)}$ is $n=1+q^{m}+\frac{1}{q^m}\overline{\Phi}_{f,g}$.
\end{lemma}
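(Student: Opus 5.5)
The plan mirrors the proof of Lemma \ref{lemma: length}, now working over $\mathbb{F}_{q^m}$ instead of $\mathbb{F}_q$. By construction \eqref{vectorial matrix}, the generator matrix $G_{f,g}$ has one extra column $(0,1,0)^T$ in addition to the columns indexed by $\mathcal{D}$. Hence $n=\#\mathcal{D}+1$ for $\mathcal{C}_{f,g}$. The subfield code $\mathcal{C}_{f,g}^{(q)}$ has the same length, since passing to the subfield code only expands each column into $m$ rows and never changes the number of columns. So it suffices to show $\#\mathcal{D}=q^m+q^{-m}\overline{\Phi}_{f,g}$.

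To count $\mathcal{D}$, I detect the condition $f(x)+g(y)=0$, which is an equation in $\mathbb{F}_{q^m}$, using the orthogonality relation for the canonical additive character $\chi'$ of $\mathbb{F}_{q^m}$:
\begin{align*}
\#\mathcal{D}&=\frac{1}{q^m}\sum_{(x,y)\in\mathbb{F}_{q^m}^2}\sum_{s\in\mathbb{F}_{q^m}}\chi'\big(s(f(x)+g(y))\big)\\
&=\frac{1}{q^m}\,q^{2m}+\frac{1}{q^m}\sum_{s\in\mathbb{F}_{q^m}^*}\sum_{x\in\mathbb{F}_{q^m}}\chi'(sf(x))\sum_{y\in\mathbb{F}_{q^m}}\chi'(sg(y)).
\end{align*}
The $s=0$ term contributes $q^{2m}/q^m=q^m$. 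For $s\neq 0$, the sum factors because $\chi'$ is a homomorphism, and the resulting expression is exactly $q^{-m}\overline{\Phi}_{f,g}$ by \eqref{Phi vectorial}. This gives $n=1+q^m+q^{-m}\overline{\Phi}_{f,g}$.

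The computation is routine, and there is no real obstacle. The one point that needs care is the normalisation. One must sum over all of $\mathbb{F}_{q^m}$ and divide by $q^m$, rather than over $\mathbb{F}_q$ with factor $1/q$ as in Lemma \ref{lemma: length}. The reason is that $f$ and $g$ now take values in $\mathbb{F}_{q^m}$, and the vanishing of an element $u\in\mathbb{F}_{q^m}$ is equivalent to $\sum_{s\in\mathbb{F}_{q^m}}\chi'(su)=q^m$, not to a condition on a single trace. This is also why the main term is $q^{2m}/q^m=q^m$ rather than $q^{2m-1}$.
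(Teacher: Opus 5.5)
Your proof is correct and follows the paper's argument: you detect $f(x)+g(y)=0$ via orthogonality of $\chi'$ over all of $\mathbb{F}_{q^m}$ with normalisation $1/q^m$, isolate the $s=0$ term $q^m$, and factor the remaining sum into $q^{-m}\overline{\Phi}_{f,g}$. Your version also states the summation range correctly as $s\in\mathbb{F}_{q^m}^*$. The paper's displayed intermediate lines write $s\in\mathbb{F}_q^*$ and $\overline{\Phi}_{f,g,h}$, which are typos inconsistent with the definition of $\overline{\Phi}_{f,g}$.
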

\begin{proof}
    Let $n$ be the length of the codes. From construction \eqref{vectorial matrix} it is clear that $n=\# \mathcal{D}+1$, due to the first column of the generator matrix. Hence, we compute the cardinality of the defining set $\mathcal{D}$ using the well-known results of orthogonality in character theory.
    \begin{align}
        \# \mathcal{D}&=\frac{1}{q^m}\sum_{(x,y)\in\mathbb{F}_{q^m}^2}\sum_{s\in\mathbb{F}_{q^m}}\chi'(s(f(x)+g(y)))\notag\\
        &=q^{m}+\frac{1}{q^m}\sum_{s\in\mathbb{F}_q^*}\sum_{(x,y)\in\mathbb{F}_{q^m}^2}\chi'(s(f(x)+g(y)))\notag\\
        &=q^{m}+\frac{1}{q^m}\sum_{s\in\mathbb{F}_q^*}\sum_{x\in\mathbb{F}_{q^m}}\chi'(sf(x))\sum_{y\in\mathbb{F}_{q^m}}\chi'(sg(y))\notag\\
        &=q^{m}+\frac{1}{q^m}\overline{\Phi}_{f,g,h}.\notag
    \end{align}
\end{proof}
Now we consider the weights $\text{wt}(\textbf{c}_{a,b,c})$, that is, the weights of the codewords in the subfield code $\mathcal{C}_{f,g}^{(q)}$. As before, let $\delta:\mathbb{F}_{q^m}\rightarrow \{0,1\}$ be a functions such that 
\begin{equation}\label{Vectorial delta}
    \delta(x)=\begin{cases}
        0 \,\,\,\,\,\text{if} \,\,\text{Tr}_{q^m/q}(x)=0\\
        1 \,\,\,\,\,\text{otherwise}.
    \end{cases}\notag
\end{equation}

\begin{lemma}\label{lemma: vectorial weights}
    For any $(a,b,c)\in\mathbb{F}_q\times\mathbb{F}_{q^m}^2$, the weight of the codeword $\textbf{c}_{a,b,c}$ in $\mathcal{C}_{f,g}^{(q)}$ is given by:
    $$\text{wt}(\textbf{c}_{a,b,c})=\begin{cases}
        0\,\,\,\,\,\,\,\,\,\,\,\,\,\,\,\,\,\,\,\,\,\,\,\,\,\,\,\,\,\,\,\,\,\,\,\,\,\,\,\,\,\,\,\,\,\,\,\,\,\,\,\,\,\,\,\,\,\,\,\,\,\,\,\,\,\,\,\,\,\,\,\,\,\,\,\,\,\,\,\,\,\,\,\,\,\,\,\,\,\,\,\,\,\,\,\,\,\,\,\,\,\,\,\,\,\,\,\,\,\,\,\,\,\,\,\,\,\,\,\,\,\,\,\,\,\,\,\,\text{if} \,\,a=b=c=0,\\
        q^{m}+\frac{1}{q^m}\overline{\Phi}_{f,g} \,\,\,\,\,\,\,\,\,\,\,\,\,\,\,\,\,\,\,\,\,\,\,\,\,\,\,\,\,\,\,\,\,\,\,\,\,\,\,\,\,\,\,\,\,\,\,\,\,\,\,\,\,\,\,\,\,\,\,\,\,\,\,\,\,\,\,\,\,\,\,\,\,\,\,\,\,\,\,\,\,\,\,\,\,\,\,\,\,\,\,\,\,\,\,\,\, \text{if}\,\, a\not =0\,\,\text{and}\,\,b=c=0\\
        \delta(b)+q^{m-1}(q-1)+\frac{1}{q^{m+1}}\big[ (q-1)\overline{\Phi_{f,g}}-\overline{\Lambda}_{a,b,c}^{f,g} \big]\,\,\,\,\,\,\text{otherwise}.  
    \end{cases}$$
\end{lemma}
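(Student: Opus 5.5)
The plan is to repeat the argument of Lemma~\ref{lemma: weights}, with the indicator of the vectorial defining set $\mathcal{D}$ in \eqref{vectorial defining set} detected by characters of $\mathbb{F}_{q^m}$ instead of $\mathbb{F}_q$.

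\textbf{Trivial cases.} For $a\in\mathbb{F}_q$ and $b,c\in\mathbb{F}_{q^m}$, set
$$N_{a,b,c}=\#\{(x,y)\in\mathcal{D}\,:\,a+\text{Tr}_{q^m/q}(bx+cy)=0\}.$$
The codeword $\textbf{c}_{a,b,c}$ has first entry $\text{Tr}_{q^m/q}(b)$, which contributes $\delta(b)$ to the weight. Its remaining entries vanish exactly on the set counted by $N_{a,b,c}$. Hence
$$\text{wt}(\textbf{c}_{a,b,c})=\delta(b)+\#\mathcal{D}-N_{a,b,c}.$$
If $a=b=c=0$ the weight is $0$. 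If $a\neq0$ and $b=c=0$, every entry indexed by $\mathcal{D}$ equals $a\neq 0$ and the first entry is $0$. So the weight is $\#\mathcal{D}=q^m+\frac{1}{q^m}\overline{\Phi}_{f,g}$, by the computation in Lemma~\ref{lemma: vectorial length}.

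\textbf{Main case, $(b,c)\neq(0,0)$.} First I expand the condition on $a+\text{Tr}_{q^m/q}(bx+cy)$ with characters of $\mathbb{F}_q$:
$$qN_{a,b,c}=\#\mathcal{D}+\sum_{s\in\mathbb{F}_q^*}\chi(sa)\sum_{(x,y)\in\mathcal{D}}\chi'(sbx+scy).$$
Next I replace the sum over $\mathcal{D}$ by a sum over $\mathbb{F}_{q^m}^2$ weighted by the indicator
$$\frac{1}{q^m}\sum_{\omega\in\mathbb{F}_{q^m}}\chi'\big(\omega(f(x)+g(y))\big).$$
I then split off the term $\omega=0$. That term is $\frac{1}{q^m}\sum_{s}\chi(sa)\sum_{x}\chi'(sbx)\sum_{y}\chi'(scy)$. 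It vanishes because, for $s\in\mathbb{F}_q^*$ and $(b,c)\neq(0,0)$, at least one of the two inner sums is zero by orthogonality. The terms with $\omega\neq0$ are exactly $\frac{1}{q^m}\overline{\Lambda}_{a,b,c}^{f,g}$ as defined in \eqref{Vectorial lambda}. Therefore
$$N_{a,b,c}=\frac{1}{q}\#\mathcal{D}+\frac{1}{q^{m+1}}\overline{\Lambda}_{a,b,c}^{f,g}.$$

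\textbf{Conclusion.} Substituting this and the value of $\#\mathcal{D}$ into the weight formula gives
$$\text{wt}(\textbf{c}_{a,b,c})=\delta(b)+\frac{q-1}{q}\Big(q^m+\frac{1}{q^m}\overline{\Phi}_{f,g}\Big)-\frac{1}{q^{m+1}}\overline{\Lambda}_{a,b,c}^{f,g}.$$
This simplifies to the stated expression $\delta(b)+q^{m-1}(q-1)+\frac{1}{q^{m+1}}\big[(q-1)\overline{\Phi}_{f,g}-\overline{\Lambda}_{a,b,c}^{f,g}\big]$.

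The argument is entirely routine. The only point needing care is the bookkeeping of the two character sums: the outer variable $s$ ranges over $\mathbb{F}_q^*$, while the indicator variable $\omega$ ranges over $\mathbb{F}_{q^m}$. This mismatch is what produces the normalisation $1/q^{m+1}$ rather than $1/q^2$ as in the scalar Lemma~\ref{lemma: weights}. Note also that the case $a=0$, $b=c=0$ is excluded from the third line because the vanishing of the $\omega=0$ term uses $(b,c)\neq(0,0)$.
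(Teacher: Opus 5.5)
Your proof is correct and follows the paper's argument essentially step for step. You expand $N_{a,b,c}$ with characters of $\mathbb{F}_q$, detect $\mathcal{D}$ with characters of $\mathbb{F}_{q^m}$, discard the $\omega=0$ term using $(b,c)\neq(0,0)$, and substitute into $\text{wt}(\textbf{c}_{a,b,c})=\delta(b)+\#\mathcal{D}-N_{a,b,c}$. Your factor $\frac{1}{q^m}$ on the $\omega=0$ term is the right one (the $\frac{1}{q}$ displayed at that step in the paper is a typo), and your final simplification checks out.
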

\begin{proof}
    For $a\in\mathbb{F}_q^*$ and $b,c\in\mathbb{F}_{q^m}$ we define the term $N_{a,b,c}$ as the cardinality of the following set.
    $$N_{a,b,c}=\#\{(x,y)\in\mathcal{D}\,:\,a+\text{Tr}_{q^m/q}(bx+cy)=0\}.$$
    Clearly $N_{0,0,0}=\#\mathcal{D}$ and $N_{a,0,0}=0$ for $a\in\mathbb{F}_q^*$. For $b,c$ not all zero, we have:
    \begin{align}
        qN_{a,b,c}&=\sum_{(x,y)\in\mathcal{D}}\sum_{s\in\mathbb{F}_q}\chi(s(a+\text{Tr}_{q^m/q}(bx+cy)))\notag\\
    &=\#\mathcal{D}+\sum_{(x,y)\in\mathcal{D}}\sum_{s\in\mathbb{F}_q^*}\chi(s(a+\text{Tr}_{q^m/q}(bx+cy)))\notag\\
&=\#\mathcal{D}+\sum_{(x,y)\in\mathcal{D}}\sum_{s\in\mathbb{F}_q^*}\chi(sa)\chi'(sbx+scy)\notag\\
    &=\#\mathcal{D}+\sum_{s\in\mathbb{F}_q^*}\chi(sa)\sum_{(x,y)\in\mathbb{F}_{q^m}^2}\big(\frac{1}{q^m}\sum_{\omega\in\mathbb{F}_{q^m}}\chi'(\omega(f(x)+g(y))\big )\chi'(sbx+scy)\notag\\
&=\#\mathcal{D}+\frac{1}{q}\sum_{s\in\mathbb{F}_q^*}\chi(sa)\sum_{(x,y)\in\mathbb{F}_{q^m}^2}\chi'(sbx+scy)+\frac{1}{q^m}\overline{\Lambda}_{a,b,c}^{f,g}\notag\\
&=\#\mathcal{D}+\frac{1}{q^m}\overline{\Lambda}_{a,b,c,d}^{f,g,h}.\notag
    \end{align}
    The last equality is due to the fact that, for $s\in\mathbb{F}_q^*$ and $b,c$ not all zero:
    $$\sum_{(x,y)\in\mathbb{F}_{q^m}^2}\chi'(sbx+scy)=$$
    $$=\big(\sum_{x\in\mathbb{F}_{q^m}}\chi'(sbx)\big)\big(\sum_{y\in\mathbb{F}_{q^m}}\chi'(scy)\big)=0.$$
    Finally, from the equality 
    \begin{equation}\label{vectorial weight formula}
        \text{wt}(\textbf{c}_{a,b,c})=\delta(b)+\#\mathcal{D}-N_{a,b,c}
    \end{equation}
    we derive our statement.
\end{proof}

Since $\mathcal{C}^*_{f,g}$ is the punctured code of $\mathcal{C}_{f,g}$ at the first coordinate, from the two last lemmas we can deduce similar results for the punctured codes.

\begin{corollary}\label{corollary: vectorial parameters of the punctured codes}
    The length of the codes $\mathcal{C}_{f,g}^*$ and $\mathcal{C}_{f,g}^{*(q)}$ is $n=q^{m}+\frac{1}{q^m}\overline{\Phi}_{f,g}$. In addition,
    $$\text{wt}(\textbf{c}^*_{a,b,c})=\text{wt}(\textbf{c}_{a,b,c})-\delta(b),$$
    where the codeword $\textbf{c}^*_{a,b,c}$ is obtained from $\textbf{c}^*_{a,b,c}$ by deleting the first entry. 
\end{corollary}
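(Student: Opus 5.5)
The plan is to reduce the statement to Lemma~\ref{lemma: vectorial length} and Lemma~\ref{lemma: vectorial weights}, which already describe $\mathcal{C}_{f,g}$ and its subfield code $\mathcal{C}_{f,g}^{(q)}$.

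\textbf{Length.} First I will note that puncturing on the first coordinate removes exactly the column $(0,1,0)^T$ of $G_{f,g}$. What remains is $G_{f,g}^*$, whose columns are indexed by $\mathcal{D}$. Hence the length of $\mathcal{C}_{f,g}^*$ is $\#\mathcal{D}$. By the computation in the proof of Lemma~\ref{lemma: vectorial length}, $\#\mathcal{D}=q^m+\frac{1}{q^m}\overline{\Phi}_{f,g}$.

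The subfield code of the punctured code is obtained by expanding each entry of $G_{f,g}^*$ over an $\mathbb{F}_q$-basis. This keeps the number of columns fixed, so $\mathcal{C}_{f,g}^{*(q)}$ has the same length.

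\textbf{Weights.} Using the trace representation of subfield codes (\cite{ref14}, Theorem 2.5) applied to $G_{f,g}^*$, every codeword of $\mathcal{C}_{f,g}^{*(q)}$ has the form
$$\mathbf{c}^*_{a,b,c}=\left(a+\operatorname{Tr}_{q^m/q}(bx+cy)\right)_{(x,y)\in\mathcal{D}},$$
with $a\in\mathbb{F}_q$ and $b,c\in\mathbb{F}_{q^m}$. This is exactly $\mathbf{c}_{a,b,c}$ with its first entry $\operatorname{Tr}_{q^m/q}(b)$ deleted.

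One point needs care. For $G_{f,g}^*$ the first row is $(1,\dots,1)$, so a priori the trace gives $\operatorname{Tr}_{q^m/q}(a')$ with $a'\in\mathbb{F}_{q^m}$. Since $\operatorname{Tr}_{q^m/q}(a')$ ranges over all of $\mathbb{F}_q$, this reduces to $a\in\mathbb{F}_q$, matching the parametrization of $\mathcal{C}_{f,g}^{(q)}$.

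The deleted entry is nonzero exactly when $\delta(b)=1$. Therefore
$$\text{wt}(\mathbf{c}^*_{a,b,c})=\text{wt}(\mathbf{c}_{a,b,c})-\delta(b).$$
Equivalently, this follows from the identity $\text{wt}(\mathbf{c}_{a,b,c})=\delta(b)+\#\mathcal{D}-N_{a,b,c}$ derived in the proof of Lemma~\ref{lemma: vectorial weights}.

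The argument is routine and has no real obstacle. The only steps to state carefully are that puncturing commutes with taking the subfield (trace) code, and the reduction of the all-ones row coefficient to $\mathbb{F}_q$ described above.
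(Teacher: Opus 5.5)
Your proposal is correct and follows essentially the same route as the paper, which simply notes that puncturing removes the first column. The length is then $\#\mathcal{D}=q^m+\frac{1}{q^m}\overline{\Phi}_{f,g}$ from Lemma~\ref{lemma: vectorial length}, and the deleted entry $\operatorname{Tr}_{q^m/q}(b)$ contributes exactly $\delta(b)$ to the weight. Your remarks on the trace representation of $G^*_{f,g}$ and on reducing the all-ones coefficient to $a\in\mathbb{F}_q$ make explicit what the paper leaves implicit, namely that the subfield code of the punctured code equals the punctured subfield code.
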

\begin{remark}
    It is interesting to notice that it is possible to express the value of the parameter $\overline{\Lambda}_{a,b,c}^{f,g}$ in terms of the Walsh transform of the involved functions:
    \begin{align}
    \overline{\Lambda}_{a,b,c}^{f,g}&=\sum_{s\in\mathbb{F_q^*}}\chi(sa)\sum_{\omega\in\mathbb{F}_{q^m}^*}\sum_{(x,y)\in\mathbb{F}_{q^m}^2}\chi'(\omega f(x)+\omega g(y))\chi '(sbx+scy)\notag\\
    &=\sum_{s\in\mathbb{F_q^*}}\chi(sa)\sum_{\omega\in\mathbb{F}_{q^m}^*}\sum_{x\in\mathbb{F}_{q^m}}\chi'(\omega f(x)+sbx)\sum_{y\in\mathbb{F}_{q^m}}\chi '(\omega g(y)+scy)\notag\\
    &=\sum_{s\in\mathbb{F_q^*}}\chi(sa)\sum_{\omega\in\mathbb{F}_{q^m}^*} \mathcal{W}_f(\omega,sb)\mathcal{W}_g(\omega,sc)\notag\\
    &=\sum_{s\in\mathbb{F_q^*}}\chi(sa)\left (\sum_{\omega\in\mathbb{F}_{q^m}} \mathcal{W}_f(\omega,sb)\mathcal{W}_g(\omega,sc)- \mathcal{W}_f(0,sb)\mathcal{W}_g(0,sc)\right)\notag\\
    &=\sum_{s\in\mathbb{F_q^*}}\chi(sa) \sum_{\omega\in\mathbb{F}_{q^m}} \mathcal{W}_f(\omega,sb)\mathcal{W}_g(\omega,sc),
\end{align}
if at least one of $b$ and $c$ is non-zero.
\\
In particular, in the binary case, we have the equality:
\begin{align}
    \overline{\Lambda}_{a,b,c}^{f,g}&=(-1)^a \sum_{\omega\in\mathbb{F}_{2^m}} \mathcal{W}_f(\omega,b)\mathcal{W}_g(\omega,c)\notag\\&=(-1)^a \sum_{\omega\in\mathbb{F}_{2^m}} \mathcal{W}_{f^{-1}}(b, \omega)\mathcal{W}_g(\omega,c)\notag\\&=2^m(-1)^a\mathcal{W}_{f^{-1}\circ g}(b,c),\label{equation: Lambda in the vectorial framework}
\end{align}
if at least one of $b$ and $c$ is non-zero and $f$ is invertible.
\end{remark}
Since the Walsh transform is involved, we can analyze the codes within the general framework of plateaued functions.
\begin{thm}\label{thm: parameters in the plateaued case}
    Let $f$ and $g$ be two functions from $\mathbb{F}_{2^m}$ to $\mathbb{F}_{2^m}$, with $f$ invertible and such that the function $h:=f^{-1}\circ g$ is $s$-plateaued and normalized, that is, $h(0)=0$. Then $\mathcal{C}_{f,g}^{*(2)}$ is a four-weights $[2^{m}, 2m+1, 2^{m-1}-2^\frac{m+s-2}{2}]$ binary linear code, with the following weight distribution:
\begin{center}
\begin{tabular}{c|c}
\hline Weight & Multiplicity \\
\hline 0 & 1 \\
$2^{m-1}-2^{\frac{m+s-2}{2}}$ & $2^{2m-s}-2^{m-s}$ \\
$2^{m-1}$ & $(2^{m+1}-2)(2^m-2^{m-s}+1)$ \\
$2^{m-1}+2^{\frac{m+s-2}{2}}$ & $2^{2m-s}-2^{m-s}$ \\
$2^{m}$ & $1$ \\
\hline
\end{tabular}
\end{center}
For $m+s\ge 6$, the code is also doubly-even and hence self-orthogonal.\\
For $s=1$, the dual code $\mathcal{C}_{f,g}^{*(q)\perp}$ is a $[2^{m}, 2^{m}-2m-1,6]$ binary linear code that is almost dimensionally optimal with respect to the sphere packing bound. For $s>1$, the dual code $\mathcal{C}_{f,g}^{*(q)\perp}$ is a $[2^{m}, 2^{m}-2m-1,4]$ binary linear code.
\end{thm}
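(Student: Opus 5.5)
The plan is to reduce everything to the Walsh spectrum of $h=f^{-1}\circ g$, using Lemma~\ref{lemma: vectorial length}, Corollary~\ref{corollary: vectorial parameters of the punctured codes} and equation \eqref{equation: Lambda in the vectorial framework}.

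\emph{Length and weights.} Since $f$ is a permutation, $\overline{\Phi}_{f,g}=0$, so the punctured code has length $2^m$. Over $\mathbb{F}_2$ the defining set is explicit: $f(x)=g(y)$ iff $x=h(y)$, so $\mathcal{D}=\{(h(y),y):y\in\mathbb{F}_{2^m}\}$. The punctured codeword is therefore
$$\mathbf{c}^*_{a,b,c}=\big(a+\mathrm{Tr}_{2^m/2}(b\,h(y)+c\,y)\big)_{y\in\mathbb{F}_{2^m}}.$$
Its weight is
$$2^{m-1}-\tfrac{1}{2}(-1)^a\,\mathcal{W}_h(b,c).$$
This agrees with Lemma~\ref{lemma: vectorial weights} after inserting \eqref{equation: Lambda in the vectorial framework}. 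I will use this direct form to avoid index-order ambiguities in the Walsh transform: the component is $\mathrm{Tr}(b\,h)$ and the linear term is $c$. The cases are then as follows.
\begin{itemize}
\item If $b=c=0$, the weight is $0$ or $2^m$ according to $a$.
\item If $b=0$ and $c\neq 0$, the sum vanishes and the weight is $2^{m-1}$.
\item If $b\neq 0$, the $s$-plateaued property gives $\mathcal{W}_h(b,c)\in\{0,\pm 2^{\frac{m+s}{2}}\}$, hence weights $2^{m-1}$ and $2^{m-1}\mp 2^{\frac{m+s-2}{2}}$.
\end{itemize}
Assuming $s<m$, no nonzero triple $(a,b,c)$ with $(a,b,c)\neq(1,0,0)$ gives weight $0$. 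Hence the $2^{2m+1}$ codewords are distinct and the dimension is $2m+1$.

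\emph{Multiplicities.} Fix $b\neq 0$. Parseval, $\sum_c\mathcal{W}_h(b,c)^2=2^{2m}$, shows that exactly $2^{m-s}$ values of $c$ give a nonzero Walsh value. For each such $c$, the two choices of $a$ give one codeword of weight $2^{m-1}-2^{\frac{m+s-2}{2}}$ and one of weight $2^{m-1}+2^{\frac{m+s-2}{2}}$. This holds regardless of the sign distribution, which is exactly why the two extreme weights have equal multiplicity. Each extreme weight therefore occurs $(2^m-1)2^{m-s}=2^{2m-s}-2^{m-s}$ times. The weight $2^{m-1}$ takes the remaining $2^{2m+1}-2-2(2^{2m-s}-2^{m-s})$ codewords. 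A quick expansion shows this equals $(2^{m+1}-2)(2^m-2^{m-s}+1)$.

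\emph{Doubly-even and self-orthogonal.} All weights are divisible by $4$ as soon as $m-1\ge 2$ and $\frac{m+s-2}{2}\ge 2$, that is, when $m+s\ge 6$. Here $m+s$ is even, as forced for binary plateaued functions. A binary doubly-even code is self-orthogonal, since $\mathrm{wt}(u+v)=\mathrm{wt}(u)+\mathrm{wt}(v)-2|u\cap v|$ forces $|u\cap v|$ to be even.

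\emph{Dual distance.} This is the step I expect to be the main obstacle. The columns $(1,y,h(y))$ are pairwise distinct and nonzero, so $B_1=B_2=0$. Moreover every codeword of the dual has even weight, because the all-ones vector is in the code. Hence $B_3=0$ and $d^\perp\ge 4$.

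For $s>1$ I will substitute the known weight distribution into the fourth Pless power moment (the analogous identity one degree beyond those displayed) and solve for $B_4$. I expect $B_4$ to be a positive multiple of $2^{s}-2$. Such a quantity vanishes exactly when $s=1$, which would give $d^\perp=4$ for $s>1$.

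For $s=1$ the same computation should give $B_4=0$. Evenness then also gives $B_5=0$, and the sixth moment (or the Carlet--Charpin--Zinoviev characterization of AB functions) gives $B_6>0$, so $d^\perp=6$. Finally, I will plug $n=2^m$, $k=2^m-2m-1$ and $t=2$ into the Sphere Packing bound. Since $1+2^m+\binom{2^m}{2}\approx 2^{2m-1}$, the dual dimension falls short of the bound by only a constant amount, which is the claimed almost-dimensional optimality.
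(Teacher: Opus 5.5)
Your weight distribution is correct and follows the paper's route: everything reduces to the Walsh spectrum of $h$. Writing $\mathcal{D}=\{(h(y),y)\}$ gives the weight $2^{m-1}-\frac12(-1)^a\mathcal{W}_h(b,c)$ directly, and your Parseval count supplies the multiplicities that the paper's proof leaves implicit. Your caveat $s<m$ is necessary, though the statement omits it. For $s=m$, $h$ is linear and the code collapses to $\{(a+\mathrm{Tr}(\lambda y))_y\}$, which has dimension $m+1$.

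Your derivation of $B_1=B_2=B_3=B_5=0$ from distinct nonzero columns and the all-ones word is cleaner than the paper's moment computation. For $s>1$ the $B_4$ step should be carried out rather than "expected". The fourth Walsh moment does it: the number of ordered $4$-tuples with $\sum y_i=0=\sum h(y_i)$ is $2^{-2m}\sum_{b,c}\mathcal{W}_h(b,c)^4=2^{2m}+(2^m-1)2^{m+s}$. Removing the $3\cdot 2^{2m}-2^{m+1}$ degenerate tuples gives $B_4=2^m(2^m-1)(2^s-2)/24$, positive exactly when $s>1$, as you predicted.

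The genuine gap is the $s=1$ step "$B_6>0$". You assert it without derivation, and it is false for $m=3$. Take $f=\mathrm{Id}$ and $h=g=x^3$ on $\mathbb{F}_8$, which is AB and normalized. The table then gives weights $2,4,6,8$ with multiplicities $28,70,28,1$, that is, $\binom{8}{2},\binom{8}{4},\binom{8}{6},\binom{8}{8}$. So $\mathcal{C}^{*(2)}_{f,g}$ is the full $[8,7,2]$ even-weight code. Its dual is $\{0,\mathbf{1}\}$ with $d^\perp=8$, not $6$, and the sixth moment would return $B_6=0$ here. The AB part of the statement therefore needs $m\ge 5$. The paper's proof has the same hole: solving the first five moments only gives $B_1=\cdots=B_5=0$, i.e. $d^\perp\ge 6$.

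For $m\ge 5$ you can close the step without the sixth moment. If $d^\perp\ge 7$, the Sphere Packing bound with $t=3$ would require $2^{2m+1}\ge\sum_{i=0}^{3}\binom{2^m}{i}$. This fails for every $m\ge 4$ (e.g. $\binom{32}{3}=4960>2^{11}$), so $d^\perp=6$.

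Your optimality remark can also be made exact. With $t=2$ the bound reads $2^{k}\bigl(1+2^{m-1}+2^{2m-1}\bigr)\le 2^{2^m}$, hence $k\le 2^m-2m$. That is exactly one more than the actual dimension $2^m-2m-1$, which is what "almost dimensionally optimal" means.
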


\begin{proof}
    From Corollary \ref{corollary: vectorial parameters of the punctured codes}, since $f$ is invertible we obtain that the length of the code is equal to $n=2^{m}$ and, combining the results of Corollary \ref{corollary: vectorial parameters of the punctured codes} and Lemma \ref{lemma: vectorial weights}, we can compute the weights of the codewords in the following way. For $(a,b,c)\in\mathbb{F}_2\times \mathbb{F}_{2^m}^2$:
    \begin{align*}
        \text{wt}(\textbf{c}_{a,b,c,d})&=\begin{cases}\notag
            0 \,\,\,\,\,\,\,\,\,\,\,\,\,\,\,\,\,\,\,\,\,\,\,\,\,\,\,\,\,\,\,\,\,\,\,\,\,\,\,\,\,\,\,\text{if} \,\,\,a=b=c=0\\
            2^{m} \,\,\,\,\,\,\,\,\,\,\,\,\,\,\,\,\,\,\,\,\,\,\,\,\,\,\,\,\,\,\,\,\,\,\,\,\,\,\,\text{if} \,\,\,a\not=0, b=c=0\\
            2^{m-1}-2^{\frac{m+s-2}{2}} \,\,\,\,\,\,\text{if} \,\,\, (-1)^aW_{h}(b,c)=2^\frac{m+s}{2}\\
            2^{m-1}+2^{\frac{m+s-2}{2}} \,\,\,\,\,\,\text{if}\,\,\, (-1)^aW_{h}(b,c)=-2^\frac{m+s}{2}\\
            2^{m-1} \,\,\,\,\,\,\,\,\,\,\,\,\,\,\,\,\,\,\,\,\,\,\,\,\,\,\,\,\,\,\,\,\text{if}\,\,\,W_{h}(b,c)=0        
            \end{cases}\\
        &=\begin{cases}
            0 \,\,\,\,\,\,\,\,\,\,\,\,\,\,\,\,\,\,\,\,\,\,\,\,\,\,\,\,\,\,\,\,\,\,\,\,\,\,\,\,\,\,\text{with 1 time}\\
            2^{m} \,\,\,\,\,\,\,\,\,\,\,\,\,\,\,\,\,\,\,\,\,\,\,\,\,\,\,\,\,\,\,\,\,\,\,\,\,\,\text{with 1 time}\\
            2^{m-1}-2^{\frac{m+s-2}{2}} \,\,\,\,\,\,\text{with} \,\,\,2^{2m-s}-2^{m-s}\,\,\,\text{times}\\
            2^{m-1}+2^{\frac{m+s-2}{2}} \,\,\,\,\,\,\text{with} \,\,\,2^{2m-s}-2^{m-s}\,\,\,\text{times}\\
            2^{m-1} \,\,\,\,\,\,\,\,\,\,\,\,\,\,\,\,\,\,\,\,\,\,\,\,\,\,\,\,\,\,\,\,\,\text{with}\,\,\,(2^{m+1}-2)(2^m-2^{m-s}+1)\,\,\,\text{times}\\
        \end{cases},
    \end{align*}
    which yields the code's weight distribution.\\\\
To determine the minimum distance of the dual code, we employ the Pless power moments identities. For $s=1$, we can use numerical computation to solve the system of equations generated by the first $r$ moments (for $r=1, \dots, 5$). By substituting the specific parameters $n$ and $k$ for $s=1$, the system yields the solution:
$$
A_1 = A_2 = A_3 = A_4 = A_5 = 0.
$$
The first non-zero solution appears at $w=6$, proving that $d=6$.
Analogously, for $s > 1$, the parameters $n$ and $k$ scale, altering the right-hand side of the moment identities. Evaluating the system for $r=1, \dots, 4$, we find that the constraints are satisfied with:
$$
A_1 = A_2 = A_3 = 0, \quad \text{but} \quad A_4 > 0.
$$
Specifically, the fourth-moment equation implies the existence of codewords of weight 4, thereby establishing that $d = 4$.
\end{proof}
\begin{example}
As a first case, we have the Almost Bent one, where $s=1$. Here are some known classes of Almost Bent power functions $f(x) = x^d$ over $\mathbb{F}_{2^m}$, where $m$ is odd.
\begin{table}[ht]
\centering
\renewcommand{\arraystretch}{1.8} 
\label{tab:ab_functions_refs}
\begin{tabular}{@{}llll@{}}
\toprule
\textbf{Name} & \textbf{Exponent} $d$ & \textbf{Conditions} & \textbf{References} \\ \midrule
Gold & $2^i + 1$ & $\gcd(i, m) = 1, \; 1 \leq i \leq \frac{m-1}{2}$ & \cite{Gold1968, Nyberg1994} \\
Kasami & $2^{2i} - 2^i + 1$ & $\gcd(i, m) = 1, \; 2 \leq i \leq \frac{m-1}{2}$ & \cite{Dobbertin1999, Kasami1971} \\
Welch & $2^{\frac{n-1}{2}} + 3$ & -- & \cite{Canteaut2000, ref6, Dobbertin1999Welch} \\
Niho & $2^{\frac{m-1}{2}} + 2^{\frac{m-1}{4}} - 1$ & $m \equiv 1 \pmod 4$ & \cite{Dobbertin1999Niho, Hollman2001} \\
Niho & $2^{\frac{m-1}{2}} + 2^{\frac{3m-1}{4}} - 1$ & $m \equiv 3 \pmod 4$ & \cite{Dobbertin1999Niho, Hollman2001} \\ \bottomrule
\end{tabular}
\end{table}
\\As a Corollary of Theorem \ref{thm: parameters in the plateaued case}, we can get the parameters of the code in the Almost Bent case. 
\begin{corollary}\label{corollary: parameters in the almost bent case}. 
    Let $f$ and $g$ be two functions from $\mathbb{F}_{2^m}$ to $\mathbb{F}_{2^m}$, with $f$ invertible and such that the function $h:=f^{-1}\circ g$ is almost bent and normalized, that is, $h(0)=0$. Then $\mathcal{C}_{f,g}^{*(2)}$ is a four-weights $[2^{m}, 2m+1, 2^{m-1}-2^\frac{m-1}{2}]$ binary linear code, with the following weight distribution:
\begin{center}
\begin{tabular}{c|c}
\hline Weight & Multiplicity \\
\hline 0 & 1 \\
$2^{m-1}-2^{\frac{m-1}{2}}$ & $2^{2m-1}-2^{m-1}$ \\
$2^{m-1}$ & $(2^{m+1}-2)(2^m-2^{m-1}+1)$ \\
$2^{m-1}+2^{\frac{m-1}{2}}$ & $2^{2m-1}-2^{m-1}$ \\
$2^{m}$ & $1$ \\
\hline
\end{tabular}
\end{center}
For $m\ge 5$ the code is also doubly-even and hence self-orthogonal.\\
The dual code $\mathcal{C}_{f,g}^{*(q)\perp}$ is a $[2^{m}, 2^{m}-2m-1,6]$ binary linear code that is almost dimensionally optimal with respect to the sphere packing bound.
\end{corollary}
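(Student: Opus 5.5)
This corollary is the specialization $s=1$ of Theorem~\ref{thm: parameters in the plateaued case}, so the plan is to check that its hypotheses hold and then substitute $s=1$. An almost bent function $h$ on $\mathbb{F}_{2^m}$ has $m$ odd, and every component $\text{Tr}_{2^m/2}(vh(x))$, $v\neq 0$, has Walsh values in $\{0,\pm 2^{\frac{m+1}{2}}\}$. By the Remark following the definition of vectorial plateaued functions, $h$ is therefore a vectorial $1$-plateaued function. With $f$ invertible and $h=f^{-1}\circ g$ normalized, all hypotheses of Theorem~\ref{thm: parameters in the plateaued case} are met with $s=1$.

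Next I substitute $s=1$ into the length, dimension, weights and multiplicities. The length is $2^m$ (Corollary~\ref{corollary: vectorial parameters of the punctured codes}, using $\overline{\Phi}_{f,g}=0$ since $f$ is a permutation), and the dimension is $2m+1$. The weights are $2^{m-1}\pm 2^{\frac{m+1-2}{2}}=2^{m-1}\pm 2^{\frac{m-1}{2}}$, each with multiplicity $2^{2m-1}-2^{m-1}$. The weight $2^{m-1}$ has multiplicity $(2^{m+1}-2)(2^m-2^{m-1}+1)$. The minimum distance is $2^{m-1}-2^{\frac{m-1}{2}}$. As a check, the multiplicities should add to $2^{2m+1}$. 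Moreover, for fixed $b$ the number of $c$ with $\mathcal{W}_h(b,c)\neq 0$ is $2^{m-1}$ by Parseval. Together with the sign split coming from $(-1)^a$, this recovers the stated multiplicities and confirms that the specialization is consistent.

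For the doubly-even claim, the condition $m+s\ge 6$ becomes $m\ge 5$. I would also verify it directly. The weight $2^{m-1}$ is divisible by $4$ once $m\ge 3$, and $2^m$ trivially so. The weights $2^{m-1}\pm 2^{\frac{m-1}{2}}$ are divisible by $4$ exactly when $\frac{m-1}{2}\ge 2$, that is, $m\ge 5$. A binary doubly-even code is self-orthogonal, because $\text{wt}(u+v)=\text{wt}(u)+\text{wt}(v)-2|u\cap v|$ forces $|u\cap v|$ to be even.

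For the dual, Theorem~\ref{thm: parameters in the plateaued case} with $s=1$ already gives the parameters $[2^m,2^m-2m-1,6]$ via the Pless power moments. I expect this to be the main obstacle if one wants it for general odd $m$ rather than by numerical computation. I would solve the first five moment identities symbolically and show $B_1=\cdots=B_5=0$ with $B_6>0$; the parity-type vanishing of $B_1,B_3,B_5$ follows from the all-ones codeword.

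For near-optimality, I would apply the sphere packing bound to length $2^m$ and dimension $2^m-2m-1$ with $d=6$, that is, $t=2$. The bound gives $2^{k}\bigl(1+2^m+\binom{2^m}{2}\bigr)\le 2^{2^m}$, hence $k\le 2^m-2m+1$. The actual dimension is within a small constant of this, which is what is meant by almost dimensionally optimal. The optimality claim is the only delicate comparison, and it depends on the precise convention for ``almost''.
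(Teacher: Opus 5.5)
Your overall route is the paper's: the paper gives no separate argument and obtains the corollary as the $s=1$ instance of Theorem~\ref{thm: parameters in the plateaued case}. Your check of the hypotheses, the Parseval count of the multiplicities and the doubly-even verification for $m\ge 5$ are all correct. Two of the supporting steps you add do not go through as written.

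\textbf{Sphere packing bound.} Your computation is off by one, and this matters for the word ``almost''. The volume is $\sum_{i=0}^{2}\binom{2^m}{i}=2^{2m-1}+2^{m-1}+1$, which is strictly larger than $2^{2m-1}$. Hence $2^k<2^{2^m-2m+1}$, so $k\le 2^m-2m$, not $k\le 2^m-2m+1$. Only with this sharp bound does the actual dimension $2^m-2m-1$ sit exactly one below the maximum, which excludes a $[2^m,2^m-2m+1,6]$ code. With your bound the gap is $2$. Under the usual convention (almost dimension-optimal means no $[n,k+2,d]$ code), the claim would then not follow.

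\textbf{Dual distance.} The dual distance $6$ cannot be shown for all odd $m$, because it fails at $m=3$. There the weights $2,4,6,8$ occur with multiplicities $28,70,28,1$, so $\mathcal{C}_{f,g}^{*(2)}$ is the $[8,7,2]$ even-weight code. Its dual is the repetition code $[8,1,8]$. A symbolic moment computation would therefore give $B_6=0$ at $m=3$. Note also that $B_6$ first enters at the sixth moment, so five identities cannot give $B_6>0$. The claim holds only for $m\ge 5$. This is a defect of the statement, and of Theorem~\ref{thm: parameters in the plateaued case} at $s=1$, not only of your plan.

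The lower bound $d^\perp\ge 6$ needs no moments. Since $\mathcal{D}=\{(h(y),y): y\in\mathbb{F}_{2^m}\}$, the support of a dual codeword is a set $S$ of distinct elements with $|S|$ even, $\sum_{y\in S}y=0$ and $\sum_{y\in S}h(y)=0$. The case $|S|=2$ is impossible. If $|S|=4$, then $h(x)+h(x+\alpha)=\beta$ would have four solutions, contradicting the fact that almost bent functions are APN. Exactness ($B_6>0$) then requires $m\ge 5$.
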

\end{example}
Theorem \ref{thm: parameters in the plateaued case} introduces a family of few-weight codes whose duals are almost dimensionally optimal with respect to the Sphere Packing Bound in the Almost Bent case. These codes are parameterized by $s$-plateaued functions, a crucial class of functions in cryptographic applications, and it is worth noting that the weights of the codewords depend on the amplitude $s$ of the function. In the following Propositions, we recall two main constructions of vectorial plateaued functions (see also \cite{Mesnager-book,SM-al-2019}).

\begin{prop}[\emph{Primary construction}, \cite{Carlet-book}]
    Let $m$ be a positive integer, $\pi$ a permutation of $\mathbb{F}_{2^m}$, and $\phi, \psi$ two functions from $\mathbb{F}_{2^m}$ to $\mathbb{F}_{2^m}$. Let $i$ be an integer co-prime with $m$. Then, the function $F : \mathbb{F}_{2^m} \times \mathbb{F}_{2^m} \to \mathbb{F}_{2^m} \times \mathbb{F}_{2^m}$ defined by
\[
F(x, y) = \left( x\pi(y) + \phi(y), \, x(\pi(y))^{2^i} + \psi(y) \right)
\]
is plateaued, but does not have a single amplitude.
\end{prop}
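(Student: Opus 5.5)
The plan is to compute the Walsh transform of every component function of $F$ directly. Because $F$ is affine in $x$, the sum over $x$ collapses to an indicator function by orthogonality of characters. We then read off the amplitude of each component. Throughout, $n=2m$ is the number of input variables over $\mathbb{F}_2$, so a component is bent when its Walsh values have absolute value $2^m$, and plateaued with amplitude $2^{m+1}$ when its values lie in $\{0,\pm 2^{m+1}\}$.

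\textbf{Reducing to a sum over $y$.} Fix a nonzero $(u,v)\in\mathbb{F}_{2^m}^2$ and a point $(a,b)\in\mathbb{F}_{2^m}^2$, and write $\mathrm{Tr}=\mathrm{Tr}_{2^m/2}$. The component is
$$\mathrm{Tr}\big(u(x\pi(y)+\phi(y))+v(x\pi(y)^{2^i}+\psi(y))\big).$$
In the exponent of the Walsh sum, $x$ appears only through $\mathrm{Tr}\big(x\,(u\pi(y)+v\pi(y)^{2^i}+a)\big)$. Summing over $x$ therefore gives
$$\mathcal{W}_F((u,v),(a,b))=2^m\sum_{y:\,L_{u,v}(\pi(y))=a}(-1)^{\mathrm{Tr}(u\phi(y)+v\psi(y)+by)},$$
where $L_{u,v}(z)=uz+vz^{2^i}$ is an $\mathbb{F}_2$-linear map of $\mathbb{F}_{2^m}$. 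Since $\pi$ is a permutation, the number of terms equals the number of solutions $z$ of $L_{u,v}(z)=a$. This number is either $0$ or $|\ker L_{u,v}|$.

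\textbf{Computing the kernel.} We split into cases.
\begin{itemize}
\item If $v=0$, then $L_{u,v}(z)=uz$ with $u\neq0$, so the kernel is trivial.
\item If $u=0$, then $v\neq0$ and $L_{u,v}(z)=vz^{2^i}$, which is injective because $z\mapsto z^{2^i}$ is a field automorphism. The kernel is again trivial.
\item If $u,v\neq0$, a nonzero kernel element satisfies $z^{2^i-1}=u/v$. Since $\gcd(i,m)=1$, we have $\gcd(2^i-1,2^m-1)=2^{\gcd(i,m)}-1=1$, so $z\mapsto z^{2^i-1}$ is a bijection of $\mathbb{F}_{2^m}^*$. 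Hence there is exactly one nonzero kernel element, and $|\ker L_{u,v}|=2$.
\end{itemize}
This gcd computation is the key step and the only place the hypothesis on $i$ is used.

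\textbf{Reading off the amplitudes.} In the first two cases the sum over $y$ has exactly one term, so $|\mathcal{W}_F|=2^m$ everywhere and these components are bent. In the third case the sum has $0$ or $2$ terms, so $\mathcal{W}_F\in\{0,\pm2^{m+1}\}$. Thus every component is plateaued, and $F$ is plateaued.

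\textbf{Checking the amplitudes really differ.} Components with $v=0$ have amplitude $2^m$. For $u,v\neq0$, Parseval's identity states that the squared Walsh values sum to $2^{4m}$. Since these values lie in $\{0,2^{2m+2}\}$, some of them must be nonzero, so amplitude $2^{m+1}$ is actually attained. Hence $F$ has the two distinct amplitudes $2^m$ and $2^{m+1}$ and does not have a single amplitude.
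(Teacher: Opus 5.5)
The paper gives no proof of this proposition. It is quoted from Carlet's book as background, so there is no internal argument to compare yours against.

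Your proof is correct and complete, and it is the standard Maiorana--McFarland computation behind Carlet's statement. Summing over $x$ turns $\mathcal{W}_F((u,v),(a,b))$ into $2^m$ times a signed sum over the fibre $\{y : u\pi(y)+v\pi(y)^{2^i}=a\}$. Everything then reduces to the size of $\ker L_{u,v}$.

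Two remarks on what your argument gives beyond the bare statement:

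\begin{itemize}
\item You are right to read ``plateaued'' in Carlet's sense: every component is plateaued, but amplitudes may differ. This is not the paper's definition of a vectorial $s$-plateaued function, which requires a single $s$. In the paper's normalization (with $n=2m$ input bits), your computation shows a clean split. The $2(2^m-1)$ components with $uv=0$ are bent, so $s=0$. The $(2^m-1)^2$ components with $uv\neq 0$ satisfy $|\mathcal{W}|^2=2^{2m+2}$, so $s=2$. This split is exactly why $F$ falls outside the paper's single-$s$ definition.
\item The hypothesis $\gcd(i,m)=1$ does more than make the kernel small: it makes $|\ker L_{u,v}|$ exactly $2$. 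That is what forces a fibre sum of $\pm 1$ terms into $\{0,\pm 2\}$ for arbitrary $\phi,\psi$. If $d=\gcd(i,m)>1$, the kernel can have size $2^d\ge 4$. A sum of four or more signs need not lie in $\{0,\pm 2^d\}$, so plateauedness would then depend on $\phi$ and $\psi$. Your remark that the gcd step is the key point is therefore accurate, and your proof shows precisely why.
\end{itemize}
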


\begin{prop}[\emph{Secondary construction}, \cite{Carlet-book}]
    Let $r, s, t, p$ be positive integers. Let $F$ be a plateaued $(r, t)$-function and $G$ be a plateaued $(s, p)$-function. Then, the function
\[
H(x, y) = (F(x), G(y)), \quad \text{where } x \in \F_2^r, \, y \in \F_2^s,
\]
is a plateaued $(r + s, t + p)$-function.

\noindent Indeed, for every $(a, b) \in \F_2^r \times \F_2^s$ and every $(u, v) \in \F_2^t \times \F_2^p$, we have:
\[
\mathcal{W}_H((a, b), (u, v)) = \mathcal{W}_F(a, u) \mathcal{W}_G(b, v).
\]
Note that this relation holds even if $u$ or $v$ is null. However, such a function $H$ is never of single amplitude, except in the trivial case when $F$ and $G$ are affine.
\end{prop}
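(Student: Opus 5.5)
The plan is a direct Walsh-transform computation using the product formula stated with the proposition. Fix $(a,b)\in\F_2^r\times\F_2^s$ and $(u,v)\in\F_2^t\times\F_2^p$. The component of $H$ in direction $(u,v)$ is $(x,y)\mapsto u\cdot F(x)+v\cdot G(y)$, so
\[
\mathcal{W}_H((a,b),(u,v))=\sum_{x\in\F_2^r}\sum_{y\in\F_2^s}(-1)^{u\cdot F(x)+v\cdot G(y)+a\cdot x+b\cdot y}.
\]
Since the exponent is a sum of a term in $x$ alone and a term in $y$ alone, the double sum factors. This gives $\mathcal{W}_H((a,b),(u,v))=\mathcal{W}_F(a,u)\,\mathcal{W}_G(b,v)$.

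The identity still holds when $u=0$ or $v=0$. For example, if $u=0$ then $\mathcal{W}_F(a,0)=\sum_x(-1)^{a\cdot x}$, which is $2^r$ or $0$.

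Next, I show that every nonzero component of $H$ is plateaued. Take $(u,v)\neq(0,0)$ and treat three cases.

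\begin{itemize}
\item If $u\neq0$ and $v\neq0$: $F$ is plateaued, so $|\mathcal{W}_F(a,u)|^2\in\{0,2^{r+s_u}\}$ for some amplitude $s_u$ depending on $u$. Likewise $|\mathcal{W}_G(b,v)|^2\in\{0,2^{s+s'_v}\}$. The product therefore has squared magnitude in $\{0,2^{r+s+s_u+s'_v}\}$, so this component of $H$ is $(s_u+s'_v)$-plateaued.
\item If $u=0$ and $v\neq0$: the factor $\mathcal{W}_F(a,0)\in\{0,2^r\}$ behaves like a linear function's spectrum (amplitude $r$). The product takes values in $\{0,\pm 2^r\,2^{(s+s'_v)/2}\}$, so this component is $(r+s'_v)$-plateaued.
\item If $u\neq0$ and $v=0$: this is symmetric to the previous case.
\end{itemize}

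Hence $H$ is plateaued in the (not necessarily single-amplitude) sense used in the statement.

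For the final claim, compare amplitudes across components. A single amplitude would require $s_u+s'_v=r+s'_v=s_u+s$ for all nonzero $u,v$. That forces $s_u=r$ and $s'_v=s$, i.e. every component of $F$ and $G$ has full amplitude. Full amplitude means the Walsh support is a single point, which happens exactly when $F$ and $G$ are affine.

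The main subtlety is interpretive rather than computational. The statement requires reading "plateaued" as "each component plateaued, with amplitudes allowed to differ", and the $u=0$ or $v=0$ cases must be handled explicitly via the trivial-character sum.
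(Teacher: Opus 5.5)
Your proposal is correct and takes essentially the same route as the paper, which justifies the proposition only through the factorization $\mathcal{W}_H((a,b),(u,v))=\mathcal{W}_F(a,u)\,\mathcal{W}_G(b,v)$, coming from the separable exponent. Your explicit treatment of the cases $u=0$ or $v=0$ and your amplitude comparison fill in details the paper leaves to the cited reference; in that comparison, ``full amplitude implies affine'' holds because Parseval's identity then forces a single nonzero Walsh coefficient.
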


\begin{remark}
    Given two functions $f$ and $g$, the goal of our construction is to ensure that the composition $h = f^{-1} \circ g$ is $s$-plateaued. In practice, it is more convenient to select an $s$-plateaued function $h$ first, fix a generic permutation polynomial $f$, and then define $g$ as $g = f \circ h$ to satisfy the relation.

For example, if we let $h$ be an $s$-plateaued function and $f$ be the identity function, then we must have $g = h$. If $g$ is invertible, the generator matrix has the following form.
\begin{equation}\label{equation: first generic construction case}
G_{\text{Id},g}^*
=
\begin{pmatrix}
1 \\
x \\
g^{-1}(x)
\end{pmatrix}_{x \in \mathbb{F}_{2^m}} .
\end{equation}
This code can be seen as a particular case of the \emph{first generic construction} framework.
The first generic construction is obtained by considering a code $C(f)$ over $\mathbb{F}_p$ involving a polynomial $f$ from $\mathbb{F}_q$ to $\mathbb{F}_q$, where $q = p^m$ (see \cite{ChapterMesnager}). Such a code is defined by
\[
C(f) = \left\{ \textbf{c} = \left( \mathrm{Tr}_{q/p}(af(x) + bx) \right)_{x \in \mathbb{F}_q} \;\middle|\; a \in \mathbb{F}_q, \, b \in \mathbb{F}_q \right\}.
\]
The resulting code $C(f)$ from $f$ is a linear code over $\mathbb{F}_p$ of length $q$ and its dimension is upper bounded by $2m$, which is reached when the nonlinearity of the vectorial function $f$ is larger than $0$.
In general, $\mathcal{C}(f)$ is the subfield code over $\mathbb{F}_p$ of the code $\mathcal{L}(f)$ generated by the following matrix 
\begin{equation}\label{equation: first generic construction}
G_f
=
\begin{pmatrix}
x \\
f(x)
\end{pmatrix}_{x \in \mathbb{F}_{q}} .
\end{equation}
\end{remark}
This proves the proposition linking our proposed construction to the first generic construction of linear codes.
\begin{prop}\label{prop: link with the first generic construction}
    Let $g$ be an invertible function from $\mathbb{F}_{2^m}$ to itself and $\text{Id}$ be the identity function of $\mathbb{F}_{2^m}$. 
    Then $\mathcal{C}^*_{\text{Id},g}$ is the augmented version of the code $\mathcal{L}(g^{-1})$, as the row of all ones is added to the generator matrix. Hence, $\mathcal{C}^{*(2)}_{\text{Id},g}$ is an augmented version of the first generic construction code $\mathcal{C}(g^{-1})$: 
    $$\mathcal{C}(g^{-1})=\{\textbf{c}^*_{a,b,c}\in\mathcal{C}^{*(2)}_{\text{Id},g}, a=0\}.$$
\end{prop}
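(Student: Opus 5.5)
The plan is to work directly with the defining set and the explicit generator matrices; the statement is essentially a reparametrization. First, with $f=\text{Id}$ the defining set \eqref{vectorial defining set} is $\mathcal{D}=\{(x,y): x+g(y)=0\}=\{(g(y),y): y\in\mathbb{F}_{2^m}\}$, using characteristic $2$ so that $-g(y)=g(y)$. Since $g$ is invertible, the map $y\mapsto g(y)$ is a bijection of $\mathbb{F}_{2^m}$. Reindexing by $x=g(y)$ gives $\mathcal{D}=\{(x,g^{-1}(x)): x\in\mathbb{F}_{2^m}\}$, so $\#\mathcal{D}=2^m$. This is consistent with Corollary \ref{corollary: vectorial parameters of the punctured codes}, because $\overline{\Phi}_{\text{Id},g}=0$. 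It also shows that $G^*_{\text{Id},g}$ is exactly the matrix in \eqref{equation: first generic construction case}, up to a permutation of columns, which does not affect the code up to the chosen coordinate ordering.

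Next, compare with \eqref{equation: first generic construction} applied to $g^{-1}$. The matrix $G_{g^{-1}}$ has rows $(x)_x$ and $(g^{-1}(x))_x$, indexed by the same set $\mathbb{F}_{2^m}$ with the same ordering. Hence $G^*_{\text{Id},g}$ is obtained from $G_{g^{-1}}$ by adjoining the all-ones row. Therefore
$$\mathcal{C}^*_{\text{Id},g}=\mathcal{L}(g^{-1})+\mathbb{F}_{2^m}\mathbf{1},$$
which is by definition the augmented code.

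For the subfield statement, I would use the trace representation (Lemma 2.3 of \cite{ref14} as recalled above). A codeword of $\mathcal{C}^{*(2)}_{\text{Id},g}$ is
$$\textbf{c}^*_{a,b,c}=\big(a+\text{Tr}_{2^m/2}(bx+cg^{-1}(x))\big)_{x\in\mathbb{F}_{2^m}},$$
with $a\in\mathbb{F}_2$, since $\text{Tr}_{2^m/2}(a'\cdot 1)$ ranges over $\mathbb{F}_2$. Setting $a=0$ gives $\big(\text{Tr}_{2^m/2}(cg^{-1}(x)+bx)\big)_x$ for $b,c\in\mathbb{F}_{2^m}$, which is precisely the definition of $\mathcal{C}(g^{-1})$ with the coefficients renamed as $(a,b)\to(c,b)$. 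Thus $\mathcal{C}^{*(2)}_{\text{Id},g}=\mathcal{C}(g^{-1})\cup(\mathbf{1}+\mathcal{C}(g^{-1}))$, that is, the augmentation of $\mathcal{C}(g^{-1})$ by the all-ones vector.

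The only delicate point is the bookkeeping of the coordinate indexing: the columns of $G^*_{\text{Id},g}$ are indexed by $y$, while those of $\mathcal{L}(g^{-1})$ are indexed by $x$. I would handle this explicitly through the bijection $x=g(y)$, so that the equality of codes holds up to this fixed permutation of coordinates. Everything else is definitional.
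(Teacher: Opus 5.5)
Your proposal is correct and follows essentially the same route as the paper: you identify $\mathcal{D}$ with the graph of $g^{-1}$ (using characteristic $2$), so that $G^*_{\text{Id},g}$ is $G_{g^{-1}}$ with the all-ones row adjoined, and you then pass to subfield codes via the trace representation. Setting $a=0$ there, with $\text{Tr}_{2^m/2}(a')$ ranging over $\mathbb{F}_2$, makes precise a step the paper leaves implicit in the remark preceding the proposition.
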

In this way, we can obtain the parameters of the first generic construction code from the inverse of an s-plateaued function.
\begin{corollary}\label{corollary: first generic construction from plateuaed}
    Let $g$ be an invertible s-plateaued normalized function from $\mathbb{F}_{2^m}$ to itself. The first generic construction code $\mathcal{C}(g^{-1})$ is a three-weights $[2^m, 2m, 2^{m-1}-2^\frac{m+s-2}{2}]$ binary linear code with the following weight distribution:
    \begin{center}
\begin{tabular}{c|c}
\hline Weight & Multiplicity \\
\hline 0 & 1 \\
$2^{m-1}-2^{\frac{m+s-2}{2}}$ & $(2^{m}-1)(2^{m-s-1}+2\frac{m-s-2}{2})$ \\
$2^{m-1}$ & $2^{2m}+2^{m-s}-2^{2m-s}-1$ \\
$2^{m-1}+2^{\frac{m+s-2}{2}}$ & $(2^{m}-1)(2^{m-s-1}-2\frac{m-s-2}{2})$ \\
\hline
\end{tabular}
\end{center}
In addition, for $m\ge s+4$, the code is minimal. For $m+s\ge 6$, the code is also doubly-even, that is, the weight of every codeword is divisible by $4$, and hence self-orthogonal.
\end{corollary}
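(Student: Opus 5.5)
Since $\mathcal{C}(g^{-1})$ is the $a=0$ subcode of $\mathcal{C}^{*(2)}_{\text{Id},g}$ (Proposition \ref{prop: link with the first generic construction}), I would read the weights off the formulas already established, specialised to $f=\text{Id}$, $a=0$. In this case $h=f^{-1}\circ g=g$, and $\overline{\Phi}_{\text{Id},g}=0$ because $\text{Id}$ is a permutation, so the length is $2^m$. Combining Lemma \ref{lemma: vectorial weights}, Corollary \ref{corollary: vectorial parameters of the punctured codes} and \eqref{equation: Lambda in the vectorial framework} gives, for $(b,c)\neq(0,0)$,
$$\text{wt}(\textbf{c}^*_{0,b,c})=2^{m-1}-\tfrac{1}{2^{m+1}}\,2^m\mathcal{W}_g(b,c)=2^{m-1}-\tfrac12\mathcal{W}_g(b,c).$$
I would sanity-check this directly: with $f=\text{Id}$ we have $\mathcal{W}_{\text{Id}}(\omega,b)=2^m[\omega=b]$, so the sum over $\omega$ collapses to $2^m\mathcal{W}_g(b,c)$. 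Here $b$ plays the role of the component index and $c$ the linear index. This convention is the point I would double-check most carefully against the definition of $\mathcal{W}_F(v,\lambda)$.

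\textbf{Counting the weights.} If $b=0$ and $c\neq0$, the sum $\mathcal{W}_g(0,c)$ is a nontrivial linear character sum, so it vanishes and the weight is $2^{m-1}$. Now fix $b\neq0$. The component $g_b$ is $s$-plateaued, so $\mathcal{W}_g(b,c)\in\{0,\pm2^{\frac{m+s}{2}}\}$; the three weights follow, which tacitly requires $m\equiv s \pmod 2$. Write $N_\pm$ for the number of $c$ with $\mathcal{W}_g(b,c)=\pm2^{\frac{m+s}{2}}$. I would determine $N_\pm$ from two identities:
\begin{itemize}
\item Parseval, $\sum_c\mathcal{W}_g(b,c)^2=2^{2m}$, gives $N_++N_-=2^{m-s}$.
\item Inversion at $x=0$, $\sum_c\mathcal{W}_g(b,c)=2^m(-1)^{\text{Tr}(bg(0))}=2^m$ (using $g(0)=0$), gives $N_+-N_-=2^{\frac{m-s}{2}}$.
\end{itemize}
Hence $N_\pm=2^{m-s-1}\pm2^{\frac{m-s-2}{2}}$. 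Multiplying by $2^m-1$ choices of $b$ yields the first and third multiplicities, and the middle one is $2^{2m}-1$ minus their sum. Since $s<m$, every nonzero $(b,c)$ gives positive weight, so the map $(b,c)\mapsto\textbf{c}^*_{0,b,c}$ is injective and the dimension is $2m$.

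\textbf{Minimality.} I would apply the Ashikhmin--Barg criterion $w_{\min}/w_{\max}>1/2$. This reduces to $2^{m-1}>3\cdot2^{\frac{m+s-2}{2}}$, i.e.\ $2^{\frac{m-s}{2}}>3$. Since $m-s$ is even, this holds exactly when $m\ge s+4$.

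\textbf{Self-orthogonality.} All weights are divisible by $4$ once $m-1\ge2$ and $\frac{m+s-2}{2}\ge2$. The latter is $m+s\ge6$, which together with $m\ge s$ forces $m\ge3$. For a binary linear code, doubly-even implies self-orthogonal: from $\text{wt}(x+y)=\text{wt}(x)+\text{wt}(y)-2|\text{supp}(x)\cap\text{supp}(y)|$, the overlap is even, so $x\cdot y=0$.

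The main obstacle is the sign count $N_+-N_-$. It relies on the normalization $g(0)=0$ and on reading the Walsh arguments in the correct order; the rest is routine.
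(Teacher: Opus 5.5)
Your proposal is correct and follows essentially the same route as the paper: specialise to the $a=0$ subcode of $\mathcal{C}^{*(2)}_{\text{Id},g}$, use the weight formula $2^{m-1}-\frac{1}{2}\mathcal{W}_g(b,c)$, and apply the Ashikhmin--Barg criterion to obtain $2^{\frac{m-s}{2}}>3$, i.e.\ $m\ge s+4$. You also supply details the paper only asserts: the Parseval and Walsh-inversion-at-$0$ count of $N_\pm$ using $g(0)=0$ (your reading of the table entries as $2^{\frac{m-s-2}{2}}$ is the intended one), the dimension argument under the implicit assumption $s<m$, and the doubly-even $\Rightarrow$ self-orthogonal step.
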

\begin{proof}
    Thanks to Proposition \ref{prop: link with the first generic construction}, we have that $\mathcal{C}(g^{-1})$ is formed by the codewords $\textbf{c}^*_{a,b,c}\in\mathcal{C}^{*(2)}_{a,b,c}$ with the condition $a=0$. Hence, we can follow the path of the proof of Theorem \ref{thm: parameters in the plateaued case}: combining the results of Corollary \ref{corollary: vectorial parameters of the punctured codes} and Lemma \ref{lemma: vectorial weights}, we can compute the weights of the codewords in the following way. For $(b,c)\in \mathbb{F}_{2^m}^2$:
    \begin{align*}
        \text{wt}(\textbf{c}_{a,b,c,d})&=\begin{cases}
            0 \,\,\,\,\,\,\,\,\,\,\,\,\,\,\,\,\,\,\,\,\,\,\,\,\,\,\,\,\,\,\,\,\,\,\,\,\,\,\,\,\,\,\,\,\,\,\text{if} \,\,\,b=c=0\\
            2^{m-1}-2^{\frac{m+s-2}{2}} \,\,\,\,\,\,\,\,\,\text{if} \,\,\, W_{h}(b,c)=2^\frac{m+s}{2}\\
            2^{m-1}+2^{\frac{m+s-2}{2}} \,\,\,\,\,\,\,\,\,\text{if}\,\,\, W_{h}(b,c)=-2^\frac{m+s}{2}\\
            2^{m-1} \,\,\,\,\,\,\,\,\,\,\,\,\,\,\,\,\,\,\,\,\,\,\,\,\,\,\,\,\,\,\,\,\,\,\,\text{if}\,\,\,W_{h}(b,c)=0      
            \end{cases}\\
        &=\begin{cases}
            0 \,\,\,\,\,\,\,\,\,\,\,\,\,\,\,\,\,\,\,\,\,\,\,\,\,\,\,\,\,\,\,\,\,\,\,\,\,\,\,\,\,\,\,\,\,\text{with 1 time}\\
            2^{m-1}-2^{\frac{m+s-2}{2}} \,\,\,\,\,\,\,\,\,\text{with} \,\,\,(2^{m}-1)(2^{m-s-1}+2\frac{m-s-2}{2})\,\,\,\text{times}\\
            2^{m-1}+2^{\frac{m+s-2}{2}} \,\,\,\,\,\,\,\,\,\text{with} \,\,\,(2^{m}-1)(2^{m-s-1}-2\frac{m-s-2}{2})\,\,\,\text{times}\\
            2^{m-1} \,\,\,\,\,\,\,\,\,\,\,\,\,\,\,\,\,\,\,\,\,\,\,\,\,\,\,\,\,\,\,\,\,\,\,\text{with}\,\,\,2^{2m}+2^{m-s}-2^{2m-s}-1\,\,\,\text{times}
        \end{cases},
    \end{align*}
    which yields the code's weight distribution.\\\\
    We examine the minimality of the code using the sufficient condition derived from the Ashikhmin-Barg bound \cite{Barg}. For a linear code over a field $\mathbb{F}_q$, the codewords are minimal if the ratio of the minimum weight $w_{\min}$ to the maximum weight $w_{\max}$ satisfies:
\begin{equation}
\frac{w_{\min}}{w_{\max}} > \frac{q-1}{q}.
\end{equation}
Since the proposed code is binary ($q=2$), this condition simplifies to:
$$
\frac{w_{\min}}{w_{\max}} > \frac{1}{2}.
$$
Substituting the explicit weights $w_{\min} = 2^{m-1} - 2^\frac{m+s-2}{2}$ and $w_{\max} = 2^{m-1} + 2^\frac{m+s-2}{2}$, we obtain:
$$
\frac{2^{m-1} - 2^\frac{m+s-2}{2}}{2^{m-1} + 2^\frac{m+s-2}{2}} > \frac{1}{2}.
$$
After some computation, we have:
$$
\frac{m-s}{2} > \log_2(3) \approx 1.585.
$$
Therefore, the code is minimal for all integers $m$ such that $m\ge s+4$.
\end{proof}

\section{Applications in the context of quantum CSS$_T$ codes}


Quantum error-correcting codes are a fundamental ingredient in fault-tolerant
quantum computation \cite{NC}.
Among the most prominent constructions are Calderbank--Shor--Steane (CSS) codes,
which are built from pairs of classical linear codes
\cite{CP,St}.

Let $C_X$ and $C_Z$ be two linear codes over $\mathbb{F}_p$ of length $N$ such that
\[
C_Z^\perp \subseteq C_X .
\]
Then the pair $(C_X,C_Z)$ defines a CSS quantum code encoding
\[
k = \dim(C_X) + \dim(C_Z) - N
\]
logical qudits into $N$ physical qudits.
The minimum distance of the code is given by
\[
d = \min\big\{ \mathrm{wt}(c) : c\in C_X\setminus C_Z^\perp
\ \text{or}\ c\in C_Z\setminus C_X^\perp \big\}.
\]

One of the main advantages of CSS codes lies in the separation of $X$-type and
$Z$-type errors, which allows for efficient decoding procedures and a
transparent algebraic structure \cite{NC}.

\subsection{Transversal gates and fault tolerance}

A central problem in fault-tolerant quantum computation is the realization of a
universal set of logical gates implemented in a fault-tolerant manner.
A particularly desirable class of gates are \emph{transversal gates}, which act
independently on each physical qudit and therefore do not propagate errors.

However, the Eastin--Knill theorem states that no quantum error-correcting code
can admit a universal set of transversal gates \cite{EK}.
As a consequence, one seeks codes admitting transversal implementations of
specific non-Clifford gates, which can then be supplemented by additional
techniques such as magic-state distillation \cite{BK,Campbell}.

\subsubsection{CSS$_T$ codes in the binary case}

In the binary setting, that is $p=2$, the most important non-Clifford gate is the
$T$-gate, defined by
\[
T = \mathrm{diag}(1,e^{i\pi/4}).
\]
A CSS code is said to be a \emph{CSS$_T$ code} if the logical $T$-gate can be
implemented transversally.

Throughout this work, the term CSS$_T$ code refers specifically to the binary
case. For odd prime dimensions, we use the more general notion of CSS phase codes
admitting transversal phase gates.

For binary CSS$_T$ codes, a sufficient condition for
transversal $T$-gates is the \emph{divisibility condition}
\begin{equation}\label{eq:divisible}
\mathrm{wt}(c) \equiv 0 \pmod{4}
\qquad \text{for all } c\in C_X .
\end{equation}
Codes satisfying~\eqref{eq:divisible} are often referred to as
\emph{doubly-even} codes \cite{BH}.

This condition ensures that the phase accumulated by applying $T$ transversally
on all physical qubits cancels appropriately in the codespace, resulting in a
well-defined logical operation \cite{BH}.

\subsubsection{CSS phase codes in odd characteristic}

For qudits of odd prime dimension $p>2$, the $T$-gate is replaced by higher-order
phase gates.
For an integer $k$ dividing $p-1$, define the phase gate
\[
R_k = \mathrm{diag}\big( \omega^{x^k} \big)_{x\in\mathbb{F}_p},
\qquad \omega = e^{2i\pi/p}.
\]
A CSS code over $\mathbb{F}_p$ is said to admit a \emph{transversal phase gate
$R_k$} if the logical action of $R_k$ can be implemented by applying $R_k$
independently to each physical qudit \cite{Campbell,GH}.

In this context, the divisibility condition~\eqref{eq:divisible} generalizes to
the \emph{moment divisibility condition}
\begin{equation}\label{eq:phase}
\sum_{i=1}^N c_i^k \equiv 0 \pmod{p}
\qquad \text{for all } c=(c_1,\dots,c_N)\in C_X .
\end{equation}
Unlike the binary case, this condition cannot be reduced to a simple
Hamming-weight constraint and genuinely reflects higher-order global
correlations in the codewords \cite{Campbell}.

\subsection{CSS phase codes from plateaued functions}

Most known constructions of CSS$_T$ and CSS phase codes are derived from Reed--Muller codes
or their generalizations.
While powerful, these constructions are highly structured and limited in diversity.

In this work, we propose an alternative approach based on linear codes constructed from
vectorial plateaued functions.
The flatness of the Walsh spectrum of such functions leads to highly regular weight and moment distributions,
making them natural candidates for CSS codes admitting transversal phase gates. This perspective establishes a new bridge between Boolean and $p$-ary function theory,
classical coding theory, and fault-tolerant quantum computation.
\\\\
Unless otherwise specified, $\mathrm{Tr}$ denotes the absolute trace
from $\mathbb{F}_{p^m}$ to $\mathbb{F}_p$. Let $p$ be a prime and let
\[
F:\mathbb{F}_{p^n} \longrightarrow \mathbb{F}_{p^m}
\]
be a vectorial $s$-plateaued function.
Denote by $C_{_F}$ the $p$-ary linear code associated with $F$ via the standard \emph{second generic construction} (see \cite{ChapterMesnager}), and by $C_F^\perp$ its dual. Hence,
$$C_F=\{c_a=(\mathrm{Tr}(aF(x)))_{x\in\mathbb{F}_{p^n}}\,|\,a\in\mathbb{F}_{p^m}\}.$$
We consider the CSS construction defined by
\[
C_X = C_F, \qquad C_Z = C_F^\perp .
\]
Since $C_Z^\perp = C_F \subseteq C_X$, the CSS commutation condition is automatically satisfied.
\\\\
Let $k$ be a positive integer dividing $p-1$.
The CSS code admits a transversal phase gate $R_k$ if and only if, for every codeword
$c=(c_1,\dots,c_N)\in C_X$,
\[
\sum_{i=1}^N c_i^k \equiv 0 \pmod{p}.
\]
In the present setting, codewords of $C_F$ are of the form
\[
c_a = \big( \mathrm{Tr}(aF(x)) \big)_{x\in\mathbb{F}_{p^n}}, \qquad a\in\mathbb{F}_{p^m}.
\]
Hence, the transversal phase condition becomes

\begin{equation}\label{eq:phase-condition}
\sum_{x\in\mathbb{F}_{p^n}} \big( \mathrm{Tr}(aF(x)) \big)^k \equiv 0 \pmod{p},
\qquad \text{for all}\,\, a\in\mathbb{F}_{p^m}^*
\end{equation}
This condition depends only on the global distribution of the values of $F$ and is naturally amenable to
analysis via the spectral properties of plateaued functions.

\subsubsection{Binary case: CSS$_T$ codes}

For $p=2$, the relevant condition corresponds to the quadratic moment, that is $k=2$,
which governs the transversal implementation of the $T$-gate.
Since $x^2=x$ for all $x\in\mathbb{F}_2$, condition~\eqref{eq:phase-condition} reduces to
\[
\sum_{x\in\mathbb{F}_{2^n}} \mathrm{Tr}(aF(x)) \equiv 0 \pmod{2},
\]
which is equivalent to requiring that all codewords of $C_F$ have even weight. On the other hand, for binary CSS$_T$ codes, the doubly even condition must be satisfied.

Let $c_a$ be the codeword in $C_F$ parameterized by $a \in \mathbb{F}_{2^m}^*$. The Hamming weight of $c_a$ is the number of inputs $x \in \mathbb{F}_{2^n}$ for which the trace evaluates to $1$:
\begin{equation}
    \text{wt}(c_a) = \left| \{ x \in \mathbb{F}_{2^n} : \text{Tr}(aF(x)) = 1 \} \right|.
\end{equation}
Consider the sum of the additive characters over all inputs $x \in \mathbb{F}_{2^n}$:
\begin{equation}
    S = \sum_{x \in \mathbb{F}_{2^n}} (-1)^{\text{Tr}(aF(x))}.
\end{equation}
We can split this sum based on the value of the trace function. Let $N_0$ be the number of $x$ such that $\text{Tr}(aF(x)) = 0$, and $N_1 = \text{wt}(c_a)$ be the number of $x$ such that $\text{Tr}(aF(x)) = 1$. Since $N_0 + N_1 = 2^n$, we have $N_0 = 2^n - \text{wt}(c_a)$. Thus:
\begin{align*}
    S &= N_0 - N_1 \\
      &= (2^n - \text{wt}(c_a)) - \text{wt}(c_a) \\
      &= 2^n - 2 \cdot \text{wt}(c_a).
\end{align*}
Recall the definition of the Walsh transform $\mathcal{W}_F(a, b)$ with the input $b$ set to zero:
\begin{equation*}
    \mathcal{W}_F(a, 0) = \sum_{x \in \mathbb{F}_{2^n}} (-1)^{\text{Tr}(aF(x) + 0 \cdot x)} = \sum_{x \in \mathbb{F}_{2^n}} (-1)^{\text{Tr}(aF(x))} = S.
\end{equation*}
Substituting $\mathcal{W}_F(a, 0)$ into the expression for $S$:
\begin{equation*}
    \mathcal{W}_F(a, 0) = 2^n - 2 \cdot \text{wt}(c_a).
\end{equation*}
Solving for $\text{wt}(c_a)$, we obtain the final formula:
\begin{equation}
    \text{wt}(c_a) = 2^{n-1} - \frac{1}{2} \mathcal{W}_F(a, 0).
\end{equation}

\begin{corollary}[Binary CSS$_T$ codes]
Let $F:\mathbb{F}_{2^n}\to\mathbb{F}_{2^m}$ be an $s$-plateaued vectorial function such that
\[
n+s \ge 6.
\]
Then all codewords of $C_F$ have weight divisible by $4$, and the CSS code defined by
$(C_F,C_F^\perp)$ is a CSS$_T$ quantum code.
\end{corollary}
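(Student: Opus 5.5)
The argument runs through the weight formula just derived, $\mathrm{wt}(c_a)=2^{n-1}-\tfrac12\mathcal{W}_F(a,0)$, valid for every $a\in\mathbb{F}_{2^m}^*$; for $a=0$ the codeword is zero and trivially has weight divisible by $4$. We then read off the possible values of $\mathcal{W}_F(a,0)$ from the plateaued hypothesis and check divisibility case by case.

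Since $F$ is vectorial $s$-plateaued, each component $x\mapsto\mathrm{Tr}(aF(x))$ with $a\neq 0$ has $|\mathcal{W}_F(a,0)|^2\in\{0,2^{n+s}\}$. In the binary case the Walsh values are real integers, so $\mathcal{W}_F(a,0)\in\{0,\pm 2^{\frac{n+s}{2}}\}$. Here we use the standard fact that $n+s$ is even for a plateaued Boolean function; it also follows because $\mathcal{W}_F(a,0)$ is an integer whose square is $2^{n+s}$ whenever it is nonzero. If $\mathcal{W}_F(a,0)=0$, the formula gives $\mathrm{wt}(c_a)=2^{n-1}$. Otherwise
\[
\mathrm{wt}(c_a)=2^{n-1}\mp 2^{\frac{n+s-2}{2}}.
\]
The hypothesis $n+s\ge 6$ gives $\frac{n+s-2}{2}\ge 2$. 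Moreover $n\ge s$, since $2^{n+s}\le 2^{2n}$ by Parseval, so $2n\ge n+s\ge 6$ and hence $n\ge 3$ and $2^{n-1}\equiv 0\pmod 4$. Both terms are therefore multiples of $4$, and every codeword of $C_F$ is doubly even.

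Finally, we invoke the CSS setup already described. The choice $C_X=C_F$, $C_Z=C_F^\perp$ satisfies $C_Z^\perp=C_F\subseteq C_X$. A doubly-even binary code is self-orthogonal, by the identity $\mathrm{wt}(u+v)=\mathrm{wt}(u)+\mathrm{wt}(v)-2|u\cap v|$. Together with the divisibility condition \eqref{eq:divisible}, this is the sufficient criterion quoted from \cite{BH} for the transversal $T$-gate, so $(C_F,C_F^\perp)$ is a CSS$_T$ code.

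The main obstacle is the bookkeeping on exponents. We must confirm that $\mathcal{W}_F(a,0)$ is exactly $\pm 2^{(n+s)/2}$, using integrality and the parity of $n+s$, and we must check the edge case $n-1\ge 2$, which follows from $n\ge s$ together with $n+s\ge 6$. The rest is direct substitution.
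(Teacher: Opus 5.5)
Your proof is correct and follows the paper's argument essentially line for line: the weight formula $\mathrm{wt}(c_a)=2^{n-1}-\tfrac12\mathcal{W}_F(a,0)$, the split into $\mathcal{W}_F(a,0)=0$ versus $\mathcal{W}_F(a,0)=\pm 2^{(n+s)/2}$, the exponent checks $n\ge 3$ and $\frac{n+s-2}{2}\ge 2$, and then the divisibility criterion \eqref{eq:divisible}. The differences are cosmetic: you derive the parity of $n+s$ from integrality of the Walsh value and derive $s\le n$ from Parseval, where the paper takes both from the definition, and you treat $a=0$ explicitly.
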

\begin{proof}
Let $a\in\mathbb{F}_{2^m}^*$.  
Since $F$ is an $s$-plateaued vectorial function, by definition $n+s$ is even. In addition, the Walsh spectrum satisfies
\[
\mathcal{W}_F(a,0)\in\{0,\pm 2^{(n+s)/2}\}.
\]
The Hamming weight of the corresponding codeword $c_a\in C_F$ is given by
\[
\mathrm{wt}(c_a)
=
2^{n-1}-\frac{1}{2}\mathcal{W}_F(a,0).
\]

If $\mathcal{W}_F(a,0)=0$, then $\mathrm{wt}(c_a)=2^{n-1}$, which is divisible by $4$ whenever $n\ge 3$. This condition is implied by $n+s\ge 6$, since by definition $0\le s\le n$.

If $\mathcal{W}_F(a,0)=\pm 2^{(n+s)/2}$, then
\[
\mathrm{wt}(c_a)
=
2^{n-1}\mp 2^{(n+s)/2-1}.
\]
The assumption $n+s\ge 6$ implies that $(n+s)/2-1$ is an integer
greater than or equal to $2$. Moreover, since $n-1\ge (n+s)/2-1$, both summands are
divisible by $4$, and therefore $\mathrm{wt}(c_a)\equiv 0\pmod{4}$.

Thus, all codewords of $C_F$ have Hamming weight divisible by $4$, and the
divisibility condition~\eqref{eq:divisible} is satisfied. Consequently, the CSS
code defined by $(C_F,C_F^\perp)$ admits a transversal $T$-gate and is a
CSS$_T$ quantum code.
\end{proof}

\begin{remark}
    Notice that almost all families of binary codes constructed in this paper are doubly even for sufficiently large values of $m$ and $s$, so that every term in the weight expression is divisible by $4$. Hence, they are self-orthogonal. In addition, using their dual codes, whose parameters have also been determined, it is possible to construct a CSS$_T$ quantum code. In particular, let $\mathcal{C}$ be such a code. Then the CSS code defined by $(\mathcal{C},\mathcal{C}^\perp)$ admits a transversal $T$-gate and is therefore a CSS$_T$ quantum code.
\end{remark}

\subsubsection{Ternary case: CSS phase codes from quadratic and Niho-type plateaued functions}

For $p=3$, we have $p-1=2$, and the only nontrivial choice is $k=2$.
The corresponding transversal gate is the ternary quadratic phase gate $R_2$.

Condition~\eqref{eq:phase-condition} becomes
\begin{equation}\label{eq:ternary-condition}
\sum_{x\in\mathbb{F}_{3^n}} \big( \mathrm{Tr}(aF(x)) \big)^2 \equiv 0 \pmod{3},
\qquad \text{for all}\,\, a\in \mathbb{F}_{3^m}^*.
\end{equation}
Unlike the binary case, this is not a weight divisibility condition but a genuine quadratic moment constraint.

\begin{corollary}[Ternary CSS phase codes]
Let $F:\mathbb{F}_{3^n}\to\mathbb{F}_{3^m}$ be a vectorial plateaued function satisfying
\[
\sum_{x\in\mathbb{F}_{3^n}} \big( \mathrm{Tr}(aF(x)) \big)^2 \equiv 0 \pmod{3}
\quad \text{for all } a\in\mathbb{F}_{3^m}^{*}
\]
Then the CSS code defined by $(C_F,C_F^\perp)$ admits a transversal ternary phase gate $R_2$.
\end{corollary}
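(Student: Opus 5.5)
The corollary is essentially a specialization of the general criterion stated just before it. I would prove it directly from the CSS phase-gate characterization together with the CSS commutation condition.

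First, I verify that $(C_X,C_Z)=(C_F,C_F^\perp)$ is a valid CSS pair. We have $C_Z^\perp=(C_F^\perp)^\perp=C_F=C_X$, so the inclusion $C_Z^\perp\subseteq C_X$ holds trivially. The code therefore encodes $\dim C_F+\dim C_F^\perp-3^n$ logical qutrits. This count is $0$ unless further structure is imposed. That caveat concerns the usefulness of the code, not the transversality statement, so I would only note it briefly.

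Second, I reduce the transversality requirement to the moment condition \eqref{eq:phase} with $p=3$ and $k=2$. Every codeword of $C_X=C_F$ has the form $c_a=(\mathrm{Tr}(aF(x)))_{x\in\mathbb{F}_{3^n}}$ for some $a\in\mathbb{F}_{3^m}$.

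For $a=0$ the codeword is zero, so its quadratic moment vanishes trivially.

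For $a\neq 0$ the moment is $\sum_x(\mathrm{Tr}(aF(x)))^2$, which is exactly the hypothesis \eqref{eq:ternary-condition}, so it is $\equiv 0\pmod 3$.

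Hence \eqref{eq:phase} holds for every $c\in C_X$. By the criterion recalled above, namely that the CSS code admits transversal $R_k$ when \eqref{eq:phase} holds on $C_X$, the code admits a transversal $R_2$.

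For completeness, I would also recast the hypothesis spectrally, to show it is checkable for plateaued $F$. For $u\in\mathbb{F}_3$ one has $u^2=1$ if $u\neq 0$ and $u^2=0$ otherwise, so the sum equals $\mathrm{wt}(c_a)$ modulo $3$. Write $N_j$ for the number of $x$ with $\mathrm{Tr}(aF(x))=j$. Then
\[
\mathcal{W}_F(a,0)=N_0+N_1\zeta_3+N_2\zeta_3^2 .
\]
From this, $\mathrm{wt}(c_a)=3^n-N_0$ can be expressed through $\mathrm{Re}\,\mathcal{W}_F(a,0)$, giving $\mathrm{wt}(c_a)=\tfrac{2}{3}\bigl(3^n-\mathrm{Re}\,\mathcal{W}_F(a,0)\bigr)$. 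Plateauedness then restricts $\mathcal{W}_F(a,0)$ to $0$ or to modulus $3^{(n+s)/2}$ times a root of unity. In these cases the condition amounts to a $3$-divisibility statement that holds once $n+s$ is large, mirroring the binary corollary.

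The only genuine subtlety, and the step I would expect to be delicate, is justifying the sufficiency direction of the cited criterion. That is, one must show that a moment condition on $C_X$ alone forces $R_2^{\otimes N}$ to preserve the codespace and act logically. I would take this from the stated equivalence in the paper (citing \cite{Campbell,GH}) rather than reprove it.
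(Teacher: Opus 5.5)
Your proposal is correct and takes essentially the same approach as the paper. The paper's proof likewise notes that every codeword of $C_X=C_F$ is some $c_a$, identifies the hypothesis with the quadratic moment condition \eqref{eq:ternary-condition}, and invokes the stated necessary-and-sufficient criterion for a transversal $R_2$. The spectral recasting via $\mathrm{wt}(c_a)=\tfrac{2}{3}\bigl(3^n-\mathrm{Re}\,\mathcal{W}_F(a,0)\bigr)$ is not needed, and your qutrit-count caveat holds unconditionally: since $\dim C_F+\dim C_F^\perp=3^n$, the pair $(C_F,C_F^\perp)$ always encodes exactly zero logical qutrits, with no ``unless further structure is imposed'' exception.
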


\begin{proof}
By construction, codewords of $C_F$ are of the form
\[
c_a=\big(\mathrm{Tr}(aF(x))\big)_{x\in\mathbb{F}_{3^n}}, \qquad a\in\mathbb{F}_{3^m}.
\]
The hypothesis of the corollary states that, for all $a\neq 0$,
\[
\sum_{x\in\mathbb{F}_{3^n}} \big(\mathrm{Tr}(aF(x))\big)^2 \equiv 0 \pmod{3}.
\]
This is precisely the quadratic moment divisibility condition
\eqref{eq:ternary-condition}, which is necessary and sufficient for the existence
of a transversal quadratic phase gate $R_2$ acting on the CSS code with
$C_X=C_F$. Therefore, the CSS code defined by $(C_F,C_F^\perp)$ admits a transversal
ternary phase gate $R_2$.
\end{proof}

A vectorial quadratic function from $\mathbb{F}_{3^n}$ to $\mathbb{F}_{3^n}$ is plateaued whenever the associated symmetric bilinear forms
have constant rank. In particular, every quadratic form is plateaued, as pointed out by Carlet and Prouff in \cite{QuadraticCarlet}.

\begin{thm}
Let $n \ge 3$ be an integer, $F:\mathbb{F}_{3^n}\rightarrow\mathbb{F}_{3^n}$ be a quadratic function, including functions of Niho type, and $C_F$ be the ternary linear code defined by the trace map:
\[
C_F = \left\{ \big(\mathrm{Tr}_{3^n/3}(aF(x))\big)_{x \in \mathbb{F}_{3^n}} : a \in \mathbb{F}_{3^n} \right\}.
\]
Then, every codeword in $C_F$ satisfies the quadratic moment condition:
\[
\sum_{x\in\mathbb{F}_{3^n}} \big(\mathrm{Tr}_{3^n/3}(aF(x))\big)^2 \equiv 0 \pmod{3} \quad \text{for all } a \in \mathbb{F}_{3^n}.
\]
Consequently, the associated CSS code admits a transversal ternary phase gate $R_2$.
\end{thm}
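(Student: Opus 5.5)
The plan is to reduce the quadratic moment condition to a congruence on the number of zeros of a single scalar function, and then to get that congruence from the Chevalley--Warning theorem rather than from the Walsh spectrum.

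Fix $a \in \mathbb{F}_{3^n}$ and set $u_a(x)=\mathrm{Tr}_{3^n/3}(aF(x))$, a function $\mathbb{F}_{3^n}\to\mathbb{F}_3$. For $a=0$ the sum is zero, so assume $a\neq 0$.

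\textbf{Step 1: reduce to counting zeros.} In $\mathbb{F}_3$ we have $0^2=0$ and $(\pm1)^2=1$. Hence
$$\sum_{x\in\mathbb{F}_{3^n}}u_a(x)^2 \equiv \#\{x: u_a(x)\neq 0\} = 3^n - N_0(a) \pmod 3,$$
where $N_0(a)=\#\{x\in\mathbb{F}_{3^n}: u_a(x)=0\}$. Since $3^n\equiv 0 \pmod 3$, the moment condition is equivalent to $N_0(a)\equiv 0\pmod 3$.

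\textbf{Step 2: $u_a$ has degree at most $2$ in coordinates.} Identify $\mathbb{F}_{3^n}$ with $\mathbb{F}_3^n$ via a basis. That $F$ is quadratic means every exponent in its univariate representation has $3$-weight at most $2$. Such a term, e.g. $x^{3^i+3^j}$, is a product of two $\mathbb{F}_3$-linear maps $x^{3^i}$ and $x^{3^j}$. Multiplication by $a$ and the $\mathbb{F}_3$-linear map $\mathrm{Tr}_{3^n/3}$ do not raise the algebraic degree. So $u_a$ is a polynomial in the $n$ coordinates of total degree at most $2$, possibly with linear and constant parts. This covers the quadratic Niho-type functions as well.

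This is the one point I would check carefully: the notion of ``quadratic'' used (algebraic degree $\le 2$) must really yield a coordinate polynomial of degree $\le 2$ after taking the trace. The standard $3$-weight argument above settles it.

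\textbf{Step 3: apply Chevalley--Warning.} The theorem says that a polynomial over $\mathbb{F}_p$ in $n$ variables of degree $d<n$ has a number of zeros in $\mathbb{F}_p^n$ divisible by $p$. Here $d\le 2<3\le n$, so $3\mid N_0(a)$. By Step 1 the quadratic moment condition then holds for every $a$.

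The last assertion follows from the preceding ternary corollary: the hypothesis of that corollary is exactly the moment condition just established, so the CSS code $(C_F,C_F^\perp)$ admits a transversal $R_2$.

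As a cross-check, one could instead write
$$N_0(a)=3^{n-1}+\tfrac13\bigl(\mathcal{W}_F(a,0)+\mathcal{W}_F(-a,0)\bigr)$$
and use the known values of quadratic Weil sums, namely $\pm 3^{(n+s)/2}$ up to a power of $i$, or $0$. This route needs case analysis on the parity of $n+s$ and on signs, so I would only use it as a sanity check.
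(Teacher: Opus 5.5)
Your proof is correct, and it reaches the key congruence $3\mid N_0(a)$ by a different route from the paper's. Both arguments share your Step 1, the reduction of $\sum_x u_a(x)^2$ to the Hamming weight $3^n-N_0(a)$ modulo $3$.

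From there the two diverge:
\begin{itemize}
\item The paper invokes the classification of quadratic forms over $\mathbb{F}_3$. It writes $N_0=3^{n-1}+\epsilon\,3^{n-1-h}$, with $2h$ or $2h+1$ the rank, and then argues case by case on the parity of $n$ that $n-1-h\ge 1$.
\item Your appeal to Chevalley--Warning replaces all of this with one degree count, $\deg u_a\le 2<3\le n$.
\end{itemize}

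Your route buys several things:
\begin{itemize}
\item It handles the linear and constant parts of $u_a$ uniformly, where the paper only gestures at them through $\epsilon\in\{0,\pm1\}$.
\item It needs neither plateauedness nor any bookkeeping with ranks.
\item It proves the moment condition for every $F$ of algebraic degree less than $n$, not only quadratic ones.
\item It avoids a weak point in the paper's case analysis. Quadratic forms over $\mathbb{F}_3$ can have odd rank (for instance $x_1^2+\cdots+x_n^2$ has rank $n$), so the paper's claim that the maximal rank is $n-1$ for odd $n$ is not right. The paper's conclusion still holds, since rank $2h+1\le n$ gives $n-1-h\ge (n-1)/2\ge 1$. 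The paper's zero-count formula is also slightly off in its constants (a factor $2$ appears for even rank), though this does not affect divisibility.
\end{itemize}

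What the paper's route buys in exchange is finer information. The explicit zero counts give the actual weights and divisibility by $3^{n-1-h}$ according to the rank of each component, and that rank is what governs its plateaued amplitude. Chevalley--Warning alone only delivers divisibility by $3$, which is all the statement needs.
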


\begin{proof}
Fix an arbitrary element $a \in \mathbb{F}_{3^n}$. We define the component function $f_a: \mathbb{F}_{3^n} \to \mathbb{F}_3$ by
\[
f_a(x) = \mathrm{Tr}_{3^n/3}(aF(x)).
\]
Since $F(x)$ is a quadratic function on $\mathbb{F}_{3^n}$ and the trace is a linear operator, the composition $f_a(x)$ is a quadratic form from $\mathbb{F}_{3^n}$ to $\mathbb{F}_3$.

We evaluate the sum of squares of the codeword components to verify the self-orthogonality. In the field $\mathbb{F}_3$, the elements satisfy $y^2 = 1$ if $y \neq 0$ and $y^2 = 0$ if $y = 0$. Thus, the sum of squares is congruent modulo 3 to the Hamming weight of the function $f_a$:
\[
\sum_{x\in\mathbb{F}_{3^n}} (f_a(x))^2 = \sum_{x : f_a(x) \neq 0} 1 = w_H(f_a).
\]
According to the standard classification of quadratic forms over finite fields of odd characteristic, the Hamming weight $w_H(f_a)$ is given by
\[
w_H(f_a) = 3^n - N_0,
\]
where $N_0$ is the number of zeros of $f_a$. The value of $N_0$ is of the form $3^{n-1} + \epsilon 3^{n-1-h}$, where $2h$ (or $2h+1$) is the rank of the form and $\epsilon \in \{0, \pm 1\}$ (see \cite{Mesnager-book}). Consequently, the weight $w_H(f_a)$ is a multiple of $3^{n-1-h}$.
Since the rank cannot exceed $n$, the smallest power of 3 dividing the weight is determined by the case of maximal rank.

\begin{itemize}
    \item If $n$ is odd, the maximal rank is $n-1$ (since symplectic forms over $\mathbb{F}_3$ have even rank), so the weight is a multiple of $3^{(n-1)/2}$. Since $n \ge 3$, $(n-1)/2 \ge 1$, ensuring divisibility by 3.
    \item If $n$ is even, the maximal rank is $n$ (bent functions), so the weight is a multiple of $3^{n/2 - 1}$. Since $n \ge 3$ implies $n \ge 4$ (for even $n$), we have $n/2 - 1 \ge 1$, ensuring divisibility by 3.
\end{itemize}

Therefore, for all $n \ge 3$, we have $w_H(f_a) \equiv 0 \pmod{3}$. This implies
\[
\sum_{x\in\mathbb{F}_{3^n}} (f_a(x))^2 \equiv 0 \pmod{3},
\]
which proves that the code is self-orthogonal with respect to the Euclidean inner product. This is the necessary and sufficient condition for the associated CSS code to admit a transversal implementation of the quadratic phase gate $R_2$.
\end{proof}

The above results demonstrate that vectorial plateaued functions provide a unified and natural framework
for constructing CSS quantum codes with transversal phase gates in both binary and ternary settings.
While the binary case recovers CSS$_T$ codes, the ternary construction yields genuinely new families
of CSS phase codes not derived from classical Reed--Muller structures.

\section*{Conclusion and remarks}
In this paper, we have presented two constructions of linear codes defined by specific subsets $D_1 \subset \mathbb{F}_{q^m}^3$ and $D_2 \subset \mathbb{F}_{q^m}^2$, involving scalar and vectorial functions over finite fields. By leveraging the properties of bent and, more generally, $s$-plateaued functions, we determined the parameters and weight distributions of their subfield codes.

A central result of our work is the detailed analysis of the binary case, where we established that the constructed codes exhibit few weights, specifically three, four, or five, depending on the nature of the underlying functions. We proved that when the defining function $h = f^{-1} \circ g$ is almost bent ($s = 1$), the dual codes are almost dimensionally optimal with respect to the sphere packing bound. Additionally, we extended our framework to a construction involving three $p$-ary functions $f, g, h$. For specific families of functions, we determined the complete weight distribution of the resulting codes and proved that, in the $s$-plateaued case with odd $s$, we have distance and dimension optimality with respect to the sphere packing bound. Also, in the bent case, the dual of the punctured code is almost dimensionally optimal with respect to the sphere packing bound. 

Furthermore, we demonstrated that our setting provides a new perspective on the first generic construction of linear codes. Specifically, we showed that for an invertible $s$-plateaued function $g$, the subfield code $\mathcal{C}^{*(2)}_{Id,g}$ serves as an augmented version of the first generic construction code $\mathcal{C}({g^{-1}})$. This interpretation allowed us to derive the weight distribution and minimality conditions for these codes. 

Finally, we explored applications of the proposed linear codes in fault-tolerant quantum computation. We showed that linear codes derived from vectorial plateaued functions
integrate naturally into the Calderbank--Shor--Steane framework and yield quantum codes
admitting transversal phase gates. In the binary case, suitably chosen $s$-plateaued functions lead to doubly-even codes,
thereby recovering CSS$_T$ codes with transversal $T$-gates from a unified Walsh spectral viewpoint.
Beyond characteristic two, and in particular in the ternary setting, the same construction
produces CSS phase codes admitting transversal quadratic phase gates.
These codes satisfy higher-order moment conditions inherent to large classes of quadratic
and Niho-type plateaued functions and do not arise from classical Reed--Muller constructions.

Altogether, this work highlights a deep connection between (vectorial) plateaued function theory, few-weight linear codes and fault-tolerant quantum computation as a versatile tool for designing both classical and quantum codes with controlled algebraic and fault-tolerance properties.

Our achievements open several avenues for future research:

\begin{itemize}
    \item \textbf{Extensions to $q$-ary functions:} Although our weight analysis focused on the binary case, the construction $\mathcal{C}_{f,g}$ naturally applies to $\mathbb{F}_q$. Future work could employ this framework for other families of vectorial functions over $\mathbb{F}_{q^m}$, particularly investigating the properties of almost bent functions to determine whether similar optimality results hold.

    \item \textbf{Generalized defining sets:} The defining set $D$ in our construction relies on the equation $f(x) + g(y) = 0$. Modifying this condition---for instance, by introducing additional linear or non-linear terms---could yield new families of linear codes with desirable parameters.

    \item \textbf{Multi-function constructions:} We restricted our attention primarily to two functions, $f$ and $g$, with a specific case for three functions in the non-vectorial framework. Extending the construction to include more generic functions could yield codes with higher dimensions or different weight spectra, which may be useful for specific error-correction applications.

    \item \textbf{Augmented first generic construction:} Our observation that $\mathcal{C}^{*(2)}_{Id,g}$ acts as an augmented first generic construction suggests a new tool for analyzing classical codes. Future studies could exploit this interpretation to revisit and characterize other well-known families of codes derived from cryptographic functions in the first generic construction framework.
\end{itemize}

\section*{Acknowledgements}
The authors are grateful to Anne Canteaut for insightful discussions and valuable suggestions following the first author's presentation at the C2 Conference, held in 2025 in Pornichet, France.

\end{document}